  \newcommand{\fname}[1]{\textsc{#1}}
  \newcommand{\vname}[1]{\textit{#1}}
  \newcommand{\revision}[1]{{#1}}
\definecolor{forestgreen}{rgb}{0.13, 0.55, 0.13}
\newcommand{\etal}{\text{et al}.\xspace}
\newcommand{\ifconference}{{{\ifx\fullversion\undefined}}}
\def\dfnt@space@setup{%
\dfnt@preskip=\parskip
  \dfnt@postskip=0pt}
\newtheoremstyle{exampstyle}
{.05in} 
{.05in} 
{} 
{.5em} 
{\sc \bfseries} 
{.} 
{.5em} 
{} 
\theoremstyle{exampstyle} 
\theoremstyle{exampstyle} 
\renewenvironment{proof}[1][\proofname]{\par
\vspace{-\topsep}
\pushQED{\qed}%
\normalfont
\topsep0pt \partopsep0pt 
\trivlist
\item[\hskip\labelsep
      \itshape
  #1\@addpunct{.}]\ignorespaces
}{%
\popQED\endtrivlist\@endpefalse
\addvspace{3pt plus 3pt} 
}
\newcommand{\diam}{\ensuremath{D}\xspace}
\newcommand{\domain}{\ensuremath{\mathcal{D}}\xspace}
 \crefname{section}{Sec.}{Sec.}
 \crefname{theorem}{Thm.}{Thm.}
 \crefname{lemma}{Lem.}{Lem.}
 \crefname{corollary}{Col.}{Col.}
 \crefname{table}{Tab.}{Tabs.}
 \crefname{algorithm}{Alg.}{Algs.}
 \Crefname{table}{Tab.}{Tabs.}
\newtheorem{definition}{Definition}
\newif\ifsubmission
\newcommand{\appref}[1]{%
  \ifsubmission
    the full paper%
  \else
    #1%
  \fi
}
\begin{document}

\title{UFO Trees: Practical and Provably-Efficient Parallel Batch-Dynamic Trees}

\author{Quinten De Man}
\orcid{0009-0009-1321-9590}
\affiliation{%
  \institution{University of Maryland}
  \city{College Park}
  \country{USA}
}
\email{deman@umd.edu}

\author{Atharva Sharma}
\orcid{0009-0008-4469-1997}
\affiliation{%
  \institution{University of Maryland}
  \city{College Park}
  \country{USA}
}
\email{asharm07@umd.edu}

\author{Kishen N Gowda}
\orcid{0000-0001-6573-9445}
\affiliation{%
  \institution{University of Maryland}
  \city{College Park}
  \country{USA}
}
\email{kishen19@umd.edu}

\author{Laxman Dhulipala}
\orcid{0000-0003-0685-064X}
\affiliation{%
  \institution{University of Maryland}
  \city{College Park}
  \country{USA}
}
\email{laxman@umd.edu}

\settopmatter{printacmref=false} 
\fancyhead{}  

\begin{abstract}
The dynamic trees problem is to maintain a tree under edge updates while supporting queries like connectivity queries or path queries.
Despite the first data structure for this fundamental problem---the link-cut tree---being invented 40 years ago, our experiments reveal that they are still the fastest sequential data structure for the problem.
However, link-cut trees cannot support parallel batch-dynamic updates and have limitations on the kinds of queries they support.

In this paper, we design a new parallel batch-dynamic trees data structure called UFO trees that simultaneously supports a wide range of query functionality, supports work-efficient parallel batch-dynamic updates, and is competitive with link-cut trees when run sequentially.
We prove that a key reason for the strong practical performance of both link-cut trees and UFO trees is that they can perform updates and queries in sub-logarithmic time for low-diameter trees.
We perform an experimental study of our optimized C++ implementations of UFO trees with ten other dynamic tree implementations, several of which are new, in a broad benchmark of both synthetic and real-world trees of varying diameter and size.
Our results show that, in both sequential and parallel settings, UFO trees are the fastest dynamic tree data structure that supports a wide range of queries.
Our new implementation of UFO trees has low space usage and easily scales to billion-size inputs, making it a promising building block for implementing more complex dynamic graph algorithms in practice.

\end{abstract}

\maketitle

\section{Introduction}

The \emph{dynamic trees problem} is one of the most basic and fundamental dynamic graph problems, and has been extensively studied since its introduction by Sleator and Tarjan~\cite{sleator1983data}.
The problem is to maintain a data structure representing a forest (i.e., a collection of trees) while supporting edge insertions, edge deletions, and vertex connectivity queries.
\revision{The dynamic trees and batch-dynamic trees problems are some of the most fundamental problems in parallel batch-dynamic graph algorithms, and are the basis for many more complex dynamic and static graph algorithms such as connectivity~\cite{holm2001poly,acar2019batchconnect}, minimum spanning forest~\cite{holm2001poly,tseng2022parallel}, hierarchical agglomerative clustering~\cite{dhulipala2024optimal, deman2025fully}, and more.}
Sleator and Tarjan first solved the dynamic trees problem with the celebrated \emph{link-cut tree }~\cite{sleator1983data} data structure.
In the intervening years, a wide range of dynamic trees have been designed to enable different applications, queries, and tradeoffs including \emph{Euler tour trees}~\cite{henzinger1995randomized}, \emph{rake-compress trees}~\cite{acar2004dynamizing,acar2005experimental}, \emph{topology trees}~\cite{frederickson1985data,frederickson1997ambivalent,frederickson1985data}, and \emph{top trees}~\cite{backstrom2006group}.
However, surprisingly, our experimental results show that none of these data structures can match the sequential performance of a carefully optimized implementation of link-cut trees.

Unfortunately, link-cut trees have some significant limitations: they can only support connectivity queries and path queries, and they do not support parallel operations.
In contrast, more recent dynamic tree data structures support a wider range of queries and can process batches of operations in parallel.
In Table~\ref{tab:dynamic_trees_summary}, we summarize the operations supported by existing dynamic tree data structures and the costs of operations.
Among these, rake-compress trees~\cite{acar2004dynamizing, acar2005experimental} support the broadest suite of queries, to the best of our knowledge. 
These include connectivity queries, path queries, subtree queries, least common ancestor queries, and many others. 
Such queries are essential in many applications of dynamic tree data structures~\cite{sleator1983data, auletta1996dynamic, radzik1998implementation, cheng1996isomorphism, frederickson1997ambivalent, holm2001poly, kapron2013dynamic}.

For batch operations, recent work on Euler tour trees~\cite{henzinger1995randomized, tseng2019batch} and rake-compress trees~\cite{anderson2023parallel, anderson2024deterministic} has designed work-efficient and poly-logarithmic depth parallel \emph{batch-dynamic} versions of these structures supporting both batch updates and queries. 
Parallel batch-dynamic data structures can support rapid streams of updates by
leveraging multiple cores and have seen a flurry of recent work due to 
their use in designing scalable static and dynamic algorithms~\cite{GT24, acar2019batchconnect, yesantharao2021parallel,acar2020parallel, dhulipala2019parallel, ghaffari2025, BB25a, wang2020closest}.

Experimentally, parallel Euler tour trees (ETTs)~\cite{tseng2019batch} are the fastest parallel batch-dynamic tree data structure available today. 
However, ETTs can only support connectivity queries and subtree queries, making them unusable for applications requiring more complex query support.
%
%
A very recent implementation of parallel rake-compress (RC) trees~\cite{ikram2025parallel} was the first to support path queries and many other crucial queries in the parallel batch-dynamic setting. 
However, our benchmarking results show that experimentally its performance is significantly worse than that of parallel ETT's.
A major reason for this is that RC trees perform \emph{ternarization}: since RC trees only support inputs with constant-bounded degree, the input tree must be mapped to a tree with degree at most three.
Ternarization introduces large overheads and involves additional costs to maintain dynamically.
\revision{Specifically, the ternarized tree may have up to $2$ times as many vertices as the original tree, and a single edge update to the original tree may require up to $7$ edge updates to the ternarized tree (see \appref{Appendix~\ref{app:ternarization} for more details}).}
An important question given this state of affairs is: \emph{can we design a parallel batch-dynamic tree data structure that supports a wide range of queries, and is very fast in practice?}

\begin{table*}[ht]
    \small
    \centering
    \caption{\small A summary of the existing dynamic tree data structures and the new theoretical results from this paper. Entries marked with an \defn{asterisk*} are contributions of this paper. Here $n$ is the number of vertices in the input tree and $\diam$ is the diameter of the input tree.}
    \begin{tabular}{|l|c c|c|c c|c c c|}
        \hline
         & \multicolumn{2}{c|}{\revision{Sequential} Cost} & Not & \multicolumn{2}{c|}{Parallel Operations} & \multicolumn{3}{c|}{Queries Supported} \\
         & Update & Query & Ternarized & Update & Query & Subtree & Path & Non-Local \\
         \hline
        Link-cut tree & $O(\min\{\log n,\diam^2\})^*$ & $O(\min\{\log n,\diam^2\})^*$ & $\checkmark$ & & & & $\checkmark$ & \\
        Euler tour tree & $O(\log n)$ & $O(\log n)$ & $\checkmark$ & $\checkmark$ & $\checkmark$ & $\checkmark$ & & \\
        Top tree & $O(\log n)$ & $O(\log n)$ & $\checkmark$ & & $\checkmark$ & $\checkmark$ & $\checkmark$ & $\checkmark$ \\
        Rake-compress tree & $O(\log n)$ & $O(\log n)$ & & $\checkmark$ & $\checkmark$ & $\checkmark$ & $\checkmark$ & $\checkmark$ \\
        Topology tree & $O(\log n)$ & $O(\log n)$ & & $\checkmark^*$ & $\checkmark^*$ & $\checkmark^*$ & $\checkmark$ & $\checkmark^*$ \\
        UFO tree$^*$ & $O(\min\{\log n,\diam\})^*$ & $O(\min\{\log n,\diam\})^*$ & $\checkmark^*$ & $\checkmark^*$ & $\checkmark^*$ & $\checkmark^*$ & $\checkmark^*$ & $\checkmark^*$ \\
        \hline
    \end{tabular}
    \label{tab:dynamic_trees_summary}
\end{table*}


In this paper, we give a positive answer to this question by designing a new dynamic tree data structure called \defn{unbounded fan-out trees (UFO trees)}.
UFO trees can support all of the queries that rake-compress trees can, which to our knowledge covers all queries described in the literature for dynamic tree data structures.
Additionally, UFO trees are highly parallelizable and admit batch update algorithms that are work-efficient and have poly-logarithmic depth.

\myparagraph{UFO Trees}
UFO trees are based on parallel tree contraction~\cite{miller1989parallel} and map the input forest to a balanced tree structure by iteratively {\em clustering} the input until each tree in the forest contains a single cluster.
%
%
Initially, the vertices of the input tree are the input clusters.
In each round, the remaining clusters are {\em merged} with one or more neighboring cluster, forming a cluster in the next round.
In particular, UFO trees allow any degree 3 or higher node to merge with {\em all} of its degree 1 neighbors. 
The remaining degree 1 and degree 2 nodes are merged along a maximal matching.

\revision{In other contraction-based dynamic tree data structures such as RC trees and topology trees, update efficiency is reliant on clusters being merged in pairs (at most 2 clusters per merge). This restriction necessitates ternarization: without it, inputs with high degree vertices would require each neighbor to be merged with the central vertex one by one, resulting in an unbalanced contraction tree.
UFO trees bypass this limitation by supporting the merging of high-degree clusters with multiple degree 1 neighbors in a single round.}


UFO trees represent the contraction process as a tree where the children of a node are the nodes in the previous round that clustered to form this node.
We show that the height of this tree is $O(\log n)$.
To perform updates efficiently, our update algorithm essentially deletes all nodes whose contents have changed, and then reclusters the tree bottom-up.
However, if we delete nodes with many children or neighbors, then the reclustering step of the update can become extremely costly.
We avoid this issue by showing that the tree can be correctly reclustered without needing to delete any nodes with more than 3 children or more than 3 neighbors.
We also present work-efficient and poly-logarithmic depth algorithms for batch-insertions and batch-deletions in UFO trees. 
Our update bounds are work-efficient (i.e., have work matching that of the best sequential algorithms) in the worst case, but can be asymptotically better in the case when the tree diameter is small, as we discuss next.

\myparagraph{Parametrizing by Tree Diameter}
We prove that both UFO trees and link-cut trees have update and query complexity that can be expressed in terms of the tree diameter.
In particular, UFO tree updates can be sequentially performed in $O(\min\{\log n, \diam\})$ time, and link-cut tree updates can be sequentially performed in $O(\min\{\log n, \diam^2\})$ amortized time.
Our experimental results show that this asymptotic result actually has an impact in practice, and helps explain the strong performance of both trees compared to existing dynamic tree data structures.
We note that many applications of dynamic trees in practice may have low diameter forests as an input (e.g., consider breadth-first search trees on web graphs or social network graphs).


Other existing dynamic tree data structures do not obtain asymptotically faster performance on low diameter inputs, and our experimental study shows that their performance in practice is relatively stable regardless of the input.
Because of this, the performance gap between the diameter-bounded trees (link-cut trees and UFO trees) and the other data structures can become very large on low diameter inputs.

\myparagraph{Full Query Support}
Finally, we show how to modify UFO trees to support (to the best of our knowledge) all of the known queries offered over dynamic trees, matching the query capabilities of RC trees~\cite{acar2004dynamizing,acar2005experimental}.
These include subtree queries, path queries, and several non-local queries.
Subtree queries and path queries return the result of some associative function applied over all the weights in some subtree or path of the input tree respectively.
The remaining non-local queries include least common ancestor queries, diameter queries, center queries, median queries, and distance to nearest marked vertex queries.
Importantly, for most of these queries and especially those commonly used in practice (e.g., path queries, subtree queries with invertible functions, etc.), the UFO tree algorithm can be easily modified to support the query while keeping the cost bound of $O(\min\{\log n, \diam\})$ per operation. UFO trees require $O(\log n)$ time per operation only for a small set of queries (e.g., subtree maximum queries, where there is a lower bound of $\Omega(\log n)$ time per operation even if the diameter is constant).

\myparagraph{Our Contributions}
The contributions of this paper are:
\begin{enumerate}[label=(\arabic*),topsep=0pt,itemsep=0pt,parsep=0pt,leftmargin=15pt]
\item A novel practical dynamic trees data structure, UFO trees, based on parallel tree contraction with unbounded fanout merges, and batch update algorithms for UFO trees that are work-efficient with respect to state-of-the-art dynamic tree data structures and low depth.

\item New query algorithms for UFO trees and topology trees that capture all known queries offered over dynamic trees in the literature.

\item An analysis of UFO trees and link-cut trees showing that they perform sub-logarithmic work when the input tree diameter is low.

\item An experimental study comparing our implementations of sequential and parallel UFO trees with 10 other dynamic trees, including new implementations of topology trees and RC trees. Our experimental results show significant improvements in terms of speed and space over existing state-of-the-art dynamic tree data structures that also support complex queries and batch updates.

\end{enumerate}

\section{Preliminaries}

\myparagraph{Parallel Model}
Our analysis of our parallel algorithms is based on
nested fork-join parallelism~\cite{CLRS, frigo1998implementation}.
In particular, we analyze our algorithms using a work-span model based on binary-forking~\cite{blelloch2020optimal}.
The \defn{work} of a parallel algorithm is the total number of operations, while  the \defn{depth} is the critical path length of its computational DAG.
Computations with work $W$ and depth $D$ can be executed using a randomized work-stealing
scheduler in practice in $W/P+O(D)$ time with high probability on $P$ processors~\cite{BL98,ABP01}.
We use $O(f(n))$ \emph{with high probability} (\whp{}) to mean $O(cf(n))$ with probability $\geq 1-n^{-c}$ for $c \geq 1$.

\myparagraph{Parallel Primitives}
Given a sequence $A$ of $k$ entries, each with a \emph{key}, \emph{semisort} reorders $A$ such that all entries with the same key are consecutive. Semisort runs in $O(k)$ expected work and $O(\log k)$ depth \whp{}~\cite{gu2015top}.
We also use \emph{parallel hash tables}~\cite{gil1991towards}, which support batches of $k$ updates to a table with $n$ elements in $O(k)$ work and $O(\log n)$ depth \whp{}. 


\myparagraph{Dynamic Trees Background}
The dynamic trees problem~\cite{sleator1983data} is to maintain a data structure representing a forest while supporting efficient edge insertions, edge deletions, and vertex connectivity queries.

All dynamic tree data structures can support connectivity queries, and many also support more advanced queries.
\emph{Path queries} and \emph{subtree queries} compute a commutative and associative function applied over the weights along a particular path or within a subtree, respectively.
\emph{Lowest common ancestor (LCA) queries} return the LCA of two vertices with respect to a designated root vertex.
\emph{Diameter queries} return the length of the longest path.
\emph{Center queries} return the vertex that minimizes the maximum distance to any other vertex.
\emph{Median queries} return the vertex that minimizes the sum of the weighted distances to all other vertices.
\emph{Nearest marked vertex queries} return the nearest marked vertex (or the distance from the nearest marked vertex) to a given vertex.

Table~\ref{tab:dynamic_trees_summary} summarizes the known dynamic tree data structures and the operations they support.
In this paper, we show that our new dynamic trees structure, UFO trees, supports all of the aforementioned query operations. We also show that our query algorithms work on topology trees~\cite{frederickson1985data,frederickson1997data,frederickson1997ambivalent}.
%



%

\myparagraph{Ternarization}
Since topology trees and rake-compress trees are defined for constant-degree trees, \defn{ternarization} needs to be applied to the input tree before building a dynamic tree over it.
Ternarization takes a tree of arbitrary degree and converts it into a tree with maximum degree $\leq 3$.
This mapping can also be efficiently maintained in the batch-dynamic setting~\cite{ikram2025parallel}.
Ternarization does not affect the asymptotic costs of the data structure, but it can lead to high overheads in both space and time in practice.
We describe ternarization in detail in \appref{Appendix~\ref{app:ternarization}}.

%

\section{Topology Trees Revisited}\label{sec:topology}



Topology trees~\cite{frederickson1985data, frederickson1997ambivalent, frederickson1997data} were first described four decades ago and were the first dynamic tree structures based on parallel tree contraction.
Since UFO trees, our main contribution in this paper, are based on a similar type of tree contraction as topology trees, we provide a thorough and self-contained description of topology trees in this section.
We also give a new analysis of the cost of the update algorithm for topology trees, which is missing in the original papers.
Finally, we show how to make topology trees support path queries, subtree queries, LCA queries, diameter queries, center queries, median queries, and nearest marked vertex queries.

\subsection{Topology Tree Definition}\label{sec:topology_definition}

\begin{figure}[t]
    \centering
    \includegraphics[width=\columnwidth]{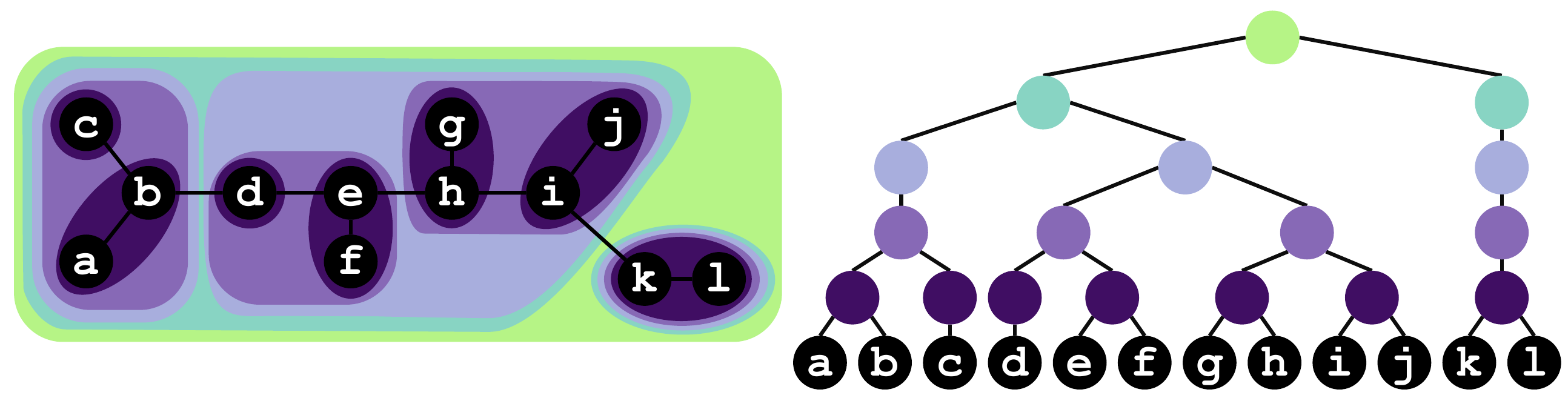}
    \caption{\small An example of an input tree (left) and a topology tree for it (right). \revision{The rightmost chain of nodes forms because the only neighbor of that cluster repeatedly merges with a different cluster.}}
    \label{fig:topology_example}
\end{figure}

\begin{figure*}[t]
    \centering
    \includegraphics[width=\textwidth]{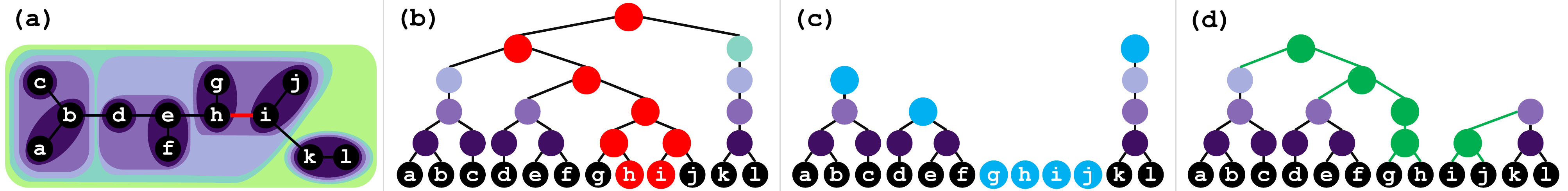}
    \caption{\small An example of an update to a topology tree. In (a), we show the input tree, with the deleted edge $(h,i)$ in red.
    In (b), we show the original topology tree with the ancestor clusters of $h$ and $i$ highlighted in red.
    In (c), we show the topology tree after the initial delete ancestors step with the root clusters highlighted in blue.
    In (d), we show the topology tree after the reclustering process with the newly created clusters highlighted in green.
    In this example, a root cluster ($i$'s parent) merges with a non-root cluster ($k$'s parent) to maintain a maximal matching. The ancestors of $k$'s grandparent are deleted.}
    \label{fig:topology_update}
\end{figure*}

Given an input tree $T = (V,E)$, a topology tree $\mathcal{T}$ is a rooted binary tree representing a bottom-up hierarchical clustering over the vertices of the input tree.
Figure~\ref{fig:topology_example} shows an example of an input tree and a possible topology tree for the input.
We refer to the nodes of $\mathcal{T}$ as \defn{clusters}.
The leaves of $\mathcal{T}$ represent each individual vertex in $V$, and we refer to them as the \defn{leaf clusters} or the \defn{level $0$ clusters}.
Each subsequent level of $\mathcal{T}$ represents a round of parallel tree contraction, and we refer to the set of clusters formed from the $i$-th round as the \defn{level $i$ clusters}.

The type of tree contraction used by topology trees works as follows: in each round, a maximal independent set of level $i$ cluster pairs are determined (i.e., a matching), and the clusters in each pair are \defn{merged} together along a common incident edge to form a new cluster at level $i+1$, with clusters not participating in a merge forming their own clusters.
Level $i$ edges along which clusters are merged do not appear in level $i+1$. 
Level $i$ edges that are not merged appear at level $i+1$ as an edge between the two level $i+1$ clusters formed from its level $i$ endpoint clusters.
The \defn{degree} of a level $i$ cluster is the number of level $i$ edges incident to that cluster.
Topology trees restrict the kinds of merges that are permitted in the tree contraction. In particular, the allowed merges are:
\begin{itemize}[itemsep=0pt,parsep=0pt,leftmargin=15pt,topsep=2pt]
    \item a degree 1 cluster with a degree 1 cluster
    \item a degree 1 cluster with a degree 2 cluster
    \item a degree 2 cluster with a degree 2 cluster
    \item a degree 1 cluster with a degree 3 cluster
\end{itemize}
These merges result in new degree 0, degree 1, degree 2, or degree 2 clusters, respectively.

For each level $i$ cluster in $\mathcal{T}$, its parent is the level $i+1$ cluster formed from it during tree contraction. 
The children of a level $i>0$ cluster are the one or two level $i-1$ clusters that merged to form it.
The \defn{fanout} of a level $i>0$ cluster is the number of level $i-1$ clusters that were merged to form it.
Note that for topology trees, the fanout of a cluster is always 1 or 2; for UFO trees we will relax this rule in the following sections.
Furthermore, any degree 3 cluster must have fanout 1, since none of the allowed merges produce clusters with degree 3.
Since each round merges a maximal independent set of cluster pairs, it can be shown that there is a geometric decrease in the number of clusters in each round, resulting in Theorem~\ref{thm:topology_tree_properties}. We provide the proofs in \appref{Appendix~\ref{app:topology_efficiency_proof}}.

\begin{theorem}\label{thm:topology_tree_properties}
    Topology trees have height $O(\log n)$, contain $O(n)$ total nodes, and use $O(n)$ space, where $n$ is the number of vertices in the input tree.
\end{theorem}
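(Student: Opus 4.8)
The plan is to establish the height bound first, since the bounds on the number of nodes and the space usage follow from it by a summation. The key claim is that each round of the restricted tree contraction used by topology trees reduces the number of clusters by a constant factor; that is, if there are $n_i$ clusters at level $i$, then $n_{i+1} \le c \cdot n_i$ for some constant $c < 1$. Given this, after $O(\log n)$ rounds there is a single cluster per tree, so $\mathcal{T}$ has height $O(\log n)$. The total number of nodes is $\sum_i n_i \le \sum_i c^i n = O(n)$, and since each cluster stores a constant amount of information (pointers to its children, parent, incident edges — recall the degree is bounded by $3$), the space usage is $O(n)$ as well.

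For the constant-factor decrease, I would argue as follows. Consider the forest $F_i$ on the level $i$ clusters (vertices) with the level $i$ edges. Call a cluster \emph{low-degree} if it has degree at most $2$ in $F_i$. First, I would show that a constant fraction of the clusters are low-degree: in any forest the number of vertices of degree $\ge 3$ is at most the number of leaves (a standard counting argument on trees — the number of degree-$\ge 3$ vertices is bounded by the number of degree-$1$ vertices minus $2$ per tree), and leaves are themselves low-degree, so at least, say, half the non-isolated vertices are low-degree, and isolated vertices only help. Next, among the low-degree clusters, the merge rules allow a degree-$1$ cluster to merge with a degree-$1$, degree-$2$, or degree-$3$ neighbor, and a degree-$2$ cluster to merge with a degree-$1$ or degree-$2$ neighbor. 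I would observe that the set of permissible merges forms the edge set of some graph $H$ on the clusters in which every low-degree cluster has degree $\ge 1$ (every low-degree cluster with at least one neighbor has at least one permissible merge available to it, because a degree-$1$ cluster can merge along its unique incident edge regardless of the other endpoint's degree, and a degree-$2$ cluster can merge toward a low-degree neighbor if it has one; the only bad case is a degree-$2$ cluster both of whose neighbors have degree $\ge 3$, but then at least those degree-$\ge 3$ neighbors' leaves, if any, get matched — this edge case needs care). A maximal matching in $H$ has size at least a quarter of the number of non-isolated low-degree clusters (since a maximal matching dominates every edge), and each matched pair removes one cluster in the next round. Combining, $n_{i+1} \le n_i - \Omega(n_i) = c\, n_i$.

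The main obstacle I anticipate is handling the degree-$2$ cluster whose two neighbors both have degree $\ge 3$: such a cluster has no legal merge partner under the stated rules (a degree-$2$–with–degree-$3$ merge is not permitted, only degree-$1$–with–degree-$3$). I expect the clean way around this is not to try to match every low-degree cluster, but to directly lower-bound the matching: show that a maximal matching among the legal merges must cover a constant fraction of \emph{all} clusters. One route is to contract the forest $F_i$ and argue that the "problematic" degree-$2$ clusters are sandwiched between high-degree clusters, and high-degree clusters are rare (bounded by the leaf count), so only a bounded fraction of clusters are problematic; every non-problematic non-isolated cluster participates in some legal merge, and a maximal matching covers a constant fraction of those. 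A second route, likely what the original analysis does, is a charging argument: charge each unmatched cluster to a nearby matched edge and bound the number of clusters charged to any edge by a constant. Either way, once the constant-factor decrease is in hand the rest is the routine geometric-series bookkeeping sketched above, and since the excerpt defers the detailed proof to the appendix I would present the argument at the level of this sketch and relegate the edge-case bookkeeping there.
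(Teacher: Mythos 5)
Your proposal is correct and follows essentially the same route as the paper's proof in Appendix~\ref{app:topology_efficiency_proof}: Lemma~\ref{lem:topology_const_frac} establishes the same constant-factor decrease per level (the paper gets $5/6$) by the same counting strategy---leaves outnumber degree-$3$ clusters and at least half of them merge with their degree-$3$ neighbors, the problematic degree-$2$ clusters sandwiched between two degree-$3$ clusters are bounded by the degree-$3$ count, and maximality of the matching handles longer degree-$2$ chains---after which the height, node-count, and space bounds follow by the geometric series exactly as you sketch. The only cosmetic difference is that the paper organizes the count by first collapsing degree-$1$/degree-$2$ chains into single edges or leaves and then reintroducing them, rather than your domination argument on the merge graph.
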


\subsection{Topology Tree Updates}\label{sec:topology_updates}

Here we provide a high-level overview of the update algorithm for topology trees. A detailed description and pseudo-code are provided in \appref{Appendix~\ref{app:topology_update_proof}}.
Figure~\ref{fig:topology_update} shows an example of an update to a topology tree.
The update algorithm first deletes all clusters in $\mathcal{T}$ that are ancestors of the leaf clusters corresponding to the endpoints of the edge insertion or deletion, and then \defn{reclusters} the resulting topology tree components in a bottom-up fashion. 
Deleting all the ancestors of some clusters in the topology tree results in a forest of disjoint topology tree components. We call the clusters at the roots of these components \defn{root clusters}.

The topology tree is reclustered bottom-up starting from the level $0$ root clusters.
If all the root clusters are degree $0$, the update is complete.
The algorithm attempts to match each level $i$ root cluster with one of its neighbors that it may merge with according to the allowed merges listed previously.
If the neighbor it matches with is also a root cluster, a new level $i+1$ cluster is created that is the parent of the two clusters.
If the neighbor it matches with is not a root cluster (we refer to these as \defn{non-root clusters}), it can only merge with the target neighbor cluster if it does not already merge with some other cluster. If this is the case, the root cluster can set its parent to its neighbors existing parent.

When a root cluster merges with a non-root cluster, it is necessary to remove the ancestors of the existing parent of the non-root cluster.
This is necessary because this new merge changes the vertices contained in the parent cluster and any of its ancestors, thus the merges they participate in at higher levels may be invalid or non-maximal (see example in Figure~\ref{fig:topology_update}).
Although paths of ancestors may be deleted many times, it can be shown that the number of root clusters at each level and the number of deleted clusters at each level are bounded by a constant, resulting in Theorem~\ref{thm:topology_tree_updates}. We provide the proofs in \appref{Appendix~\ref{app:topology_update_proof}}.

\begin{theorem}\label{thm:topology_tree_updates}
    Topology trees updates take $O(\log n)$ time, where $n$ is the number of vertices in the input tree.
\end{theorem}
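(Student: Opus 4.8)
The plan is to bound the total work of a single update by charging it to two quantities at each level $i$ of the topology tree: the number of root clusters at level $i$ (call it $r_i$), and the number of clusters deleted at level $i$ (call it $d_i$). Since each level of reclustering does a constant amount of work per root cluster (attempting a match with a constant number of neighbors, creating at most one new parent) and per deleted cluster (unlinking it from the structure), and since the topology tree has height $O(\log n)$ by Theorem~\ref{thm:topology_tree_properties}, it suffices to show that $r_i = O(1)$ and $d_i = O(1)$ for every level $i$, and hence the total work is $\sum_i O(r_i + d_i) = O(\log n)$.

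First I would establish the base case: the initial ``delete ancestors'' step removes all ancestors of the (at most two) endpoint leaf clusters, which is at most two root-to-leaf paths, each of length $O(\log n)$; this contributes $O(\log n)$ deletions total and creates $O(1)$ root clusters at each level — in fact at most $4$ initially (two endpoints $\times$ two sides), and I would carefully count how the root clusters at level $i+1$ arise from those at level $i$. The key structural observation is that a root cluster at level $i+1$ is created only in two ways: either two level-$i$ root clusters merge (consuming two roots, producing one), or a level-$i$ root cluster fails to merge and becomes its own parent (one-to-one), or — and this is the subtle case — a level-$i$ root cluster merges into a \emph{non-root} cluster, which forces the deletion of the ancestors of that non-root cluster's parent, thereby possibly creating \emph{new} root clusters at higher levels. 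I would argue that because the input tree along which we recluster is itself a small forest (the components left after deleting $O(1)$ ancestor paths), the number of distinct non-root clusters that can be ``invaded'' at any level is bounded by a constant times the number of boundary edges between root-component clusters and the rest, which is $O(1)$ per level; each such invasion triggers one fresh chain of ancestor deletions, but the chains are pairwise disjoint above the invasion point and there are only $O(1)$ of them per level.

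The main obstacle — and where I expect the proof to require the most care — is exactly this cascading effect: a reclustering merge at level $i$ can delete an ancestor chain that reintroduces root clusters at levels $i+1, i+2, \ldots$, which in turn recluster and may trigger further invasions and further deletions. I would control this with a potential/amortization argument over levels: define the ``active boundary'' at level $i$ as the set of level-$i$ edges incident to a root cluster, show this boundary has constant size at every level (using the geometric decrease in cluster count that underlies Theorem~\ref{thm:topology_tree_properties} together with the fact that each merge can only create a bounded number of new boundary edges), and then observe that both $r_i$ and $d_i$ are bounded by a constant multiple of the boundary size at level $i$ summed with the boundary size at a constant number of nearby levels. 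Summing the per-level constant over the $O(\log n)$ levels gives the claimed $O(\log n)$ bound; I would also note that the same bookkeeping shows the update touches $O(\log n)$ clusters in total, which is what is needed for the later diameter-parametrized refinement.
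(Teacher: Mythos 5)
Your overall skeleton is exactly the paper's: bound the number of root clusters $r_i$ and deleted clusters $d_i$ at each level by a constant, observe that the work per level is $O(r_i+d_i)$, and sum over the $O(\log n)$ levels. You also correctly isolate the crux --- the cascading effect where a root cluster merging into a non-root cluster forces a fresh chain of ancestor deletions, reintroducing root clusters at higher levels.

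The genuine gap is in how you justify the claim that the ``active boundary'' has constant size at every level. You attribute it to ``the geometric decrease in cluster count that underlies Theorem~\ref{thm:topology_tree_properties} together with the fact that each merge can only create a bounded number of new boundary edges.'' That combination does not close the argument: if each level can add a bounded number of new boundary edges and nothing forces old ones to disappear, the boundary accumulates to $\Theta(\log n)$ over the levels, and your per-level constants evaporate. The paper resolves this with two \emph{separate} mechanisms that you have conflated. First, a conservation-style induction shows that each of the $O(1)$ connected components of affected clusters has at most \emph{two} ``frontier'' clusters (affected clusters adjacent to an unaffected one) at every level: when a frontier spreads affection to all of its unaffected neighbors it ceases to be a frontier, so the count is preserved rather than merely ``bounded per step.'' The delicate case is a degree-$3$ frontier, where the paper must argue that a degree-$3$ cluster can only become a frontier when it is the \emph{unique} frontier of its component --- otherwise two degree-$3$ frontiers each spreading to two neighbors would break the invariant. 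Second, the geometric decrease is used not to bound the frontier but to bound the total number of affected clusters per component, via a recurrence of the form $A^{i+1}_j \le 8 + (5/6)(A^i_j - 2) + 2$: the frontier bound caps the number of newly affected clusters added per round at a constant, while maximality of the matching forces at least a $1/6$ fraction of the existing affected clusters in each component to contract away. Without the frontier induction (and in particular its degree-$3$ case analysis), the constant per-level bounds on $r_i$ and $d_i$ --- and hence the theorem --- do not follow from what you have written.
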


\subsection{Topology Tree Queries}\label{sec:topology_queries}
Prior theoretical work~\cite{frederickson1985data} has shown how to use topology trees to answer path queries.
In this paper, we greatly expand the queries that topology trees can support. Our results are summarized in Theorem~\ref{thm:topology_tree_queries}. 
We give complete descriptions of how to support all of these queries in \appref{Appendix~\ref{app:topology_queries}}.
\revision{Crucially, all queries are read-only and any number of queries can be ran in parallel with no synchronization.}

\begin{theorem}\label{thm:topology_tree_queries}
    Topology trees can be modified to support path queries, subtree queries, LCA queries,  diameter queries, center queries, median queries, and nearest marked vertex queries.
\end{theorem}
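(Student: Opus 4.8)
The plan is to prove Theorem~\ref{thm:topology_tree_queries} by exhibiting, for each query type, a small amount of \emph{summary information} stored at every cluster of the topology tree, together with a \emph{merge rule} that computes a parent cluster's summary from the (at most two) children's summaries in $O(1)$ time, and a \emph{query procedure} that walks a logarithmic number of clusters in $\mathcal{T}$ and combines their summaries. Since Theorem~\ref{thm:topology_tree_properties} guarantees height $O(\log n)$ and the update algorithm of Theorem~\ref{thm:topology_tree_updates} touches only $O(\log n)$ clusters and spends $O(1)$ work per recluster step, augmenting each cluster with $O(1)$ extra words and an $O(1)$ merge rule preserves the $O(\log n)$ update bound; and because every query only \emph{reads} cluster summaries along root-to-cluster paths, any number of queries run concurrently without synchronization, as claimed. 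So the whole theorem reduces to designing the augmentation for each query family.

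I would organize the per-query designs in increasing order of difficulty. First, \emph{path queries} and \emph{subtree queries} for a commutative-associative weight function: here each cluster stores the aggregate of all vertex/edge weights it contains, plus (for path queries) boundary information recording the cluster's at-most-two ``external'' vertices and the partial aggregates along the internal tree-paths connecting those boundary vertices; the merge rule combines the two children's internal-path aggregates across the merged edge, which is exactly the classical top-tree/2-boundary-vertex bookkeeping. A path query $u\to v$ then finds $O(\log n)$ clusters whose internal portions partition the $u$--$v$ path and chains their summaries; a subtree query is even simpler since subtrees decompose into $O(\log n)$ whole clusters once we fix a root. Next, \emph{LCA queries}: augment each cluster with the identity of the cluster-local vertex closest to the designated root (or equivalently track the root's ``direction''), so that walking up from $u$ and from $v$ and detecting the first common ancestor cluster, then descending to pinpoint the vertex, takes $O(\log n)$. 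Then the \emph{non-local} queries --- diameter, center, median, nearest marked vertex --- which are the real work: for each I would store at every cluster not a single number but a constant-size \emph{profile} describing, as a function of ``distance/weight offset at the boundary vertices,'' the best achievable value; e.g. for diameter, store the within-cluster diameter together with the eccentricity (max distance to a boundary vertex) from each boundary vertex; for nearest marked vertex, store the distance from each boundary vertex to the closest marked vertex inside the cluster (or $\infty$); for median, store the number of vertices and weighted-distance sums relative to each boundary vertex; for center, store the eccentricity profile. In each case I would check that the parent profile is an $O(1)$-time function of the two children's profiles (taking a min or a sum across the merged edge and updating offsets), which is the standard ``cluster is a function of its boundary'' paradigm.

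The main obstacle I expect is \emph{uniformity of the argument under topology-tree merges rather than top-tree merges}. Top trees are built from rake and compress operations whose boundary structure is carefully controlled (every cluster has exactly one or two boundary vertices and the operations are exactly designed to keep this invariant), whereas topology-tree merges are defined combinatorially by the four allowed degree-pairs in Section~\ref{sec:topology_definition}, so I need to verify that a topology-tree cluster also always has a bounded number of ``interface'' vertices --- the vertices through which incident level-$i$ edges leave the cluster are at most a constant (here the degree bound is $3$), but the edges themselves attach at possibly several distinct vertices of the cluster, so the ``boundary'' may be up to $3$ vertices, not $2$. I would therefore first prove a structural lemma: \emph{every cluster at every level has at most $3$ boundary vertices, and its internal induced subgraph is connected}, and then redo the merge rules with $3$ boundary vertices instead of $2$. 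This inflates the constant-size profiles but keeps everything $O(1)$; the delicate cases are the non-local queries where a $3\times 3$ (rather than $2\times 2$) table of boundary-to-boundary distances must be maintained, and where, after deleting ancestor paths during an update, the reclustering must recompute these tables correctly from the children. I would argue correctness of reclustering by observing that the augmentation is a pure bottom-up function of the children, so re-running the merge rule at each newly created or re-parented cluster --- which the update algorithm already does $O(\log n)$ times --- restores all summaries. Finally, I would note the one exception flagged in the introduction: for non-invertible subtree aggregates (e.g.\ subtree maximum) a subtree does not decompose into $O(\log n)$ clusters cleanly and one needs an auxiliary balanced structure over each cluster's children, giving $O(\log n)$ rather than $O(\min\{\log n,\diam\})$, but this does not affect the statement of Theorem~\ref{thm:topology_tree_queries}, only the refined cost claims made for UFO trees elsewhere.
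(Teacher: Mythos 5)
Your overall strategy---augment each cluster with an $O(1)$-size summary, give an $O(1)$ merge rule so that the update algorithm's reclustering restores all summaries at no asymptotic cost, and answer queries by walking $O(\log n)$ clusters---is exactly the paper's strategy, and your per-query profiles (cluster-path aggregates for path queries, whole-cluster aggregates for subtree queries, eccentricity/distance-to-marked/weighted-distance profiles at boundary vertices for the non-local queries) match the paper's Appendix~\ref{app:topology_queries} almost item for item. The one place where you take a genuinely different route is the boundary-vertex issue you flag as the ``main obstacle.'' You propose to tolerate up to $3$ boundary vertices and maintain $3\times 3$ boundary-to-boundary tables; this would work, but it is heavier than necessary. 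The paper instead exploits a structural fact you did not use: none of the four allowed merges produces a degree-$3$ cluster, so every degree-$3$ cluster has fanout $1$ at every level and hence cardinality $1$ (it is a single input vertex). Consequently a degree-$3$ cluster has exactly one boundary vertex and an empty cluster path, and \emph{every} cluster has at most two boundary vertices, so the standard two-boundary (unary/binary) bookkeeping of top trees and RC trees applies verbatim; the only place degree-$3$ clusters surface is as a degenerate inductive case (e.g., case~(b) of the path-query induction, where the sibling is a singleton). Your $3$-boundary variant buys generality you do not need here and costs you larger case analyses in the merge rules for the non-local queries; the paper's observation buys a direct reduction to the known two-boundary framework. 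Two minor further differences: your path query decomposes the $u$--$v$ path into $O(\log n)$ whole cluster portions top-down, whereas the paper builds ``representative paths'' bottom-up from the two leaves to their LCA cluster (equivalent in cost); and your LCA query fixes a designated root in the augmentation, whereas the paper treats the root $r$ as a query parameter and exploits the symmetry of $u$, $v$, $r$, which avoids re-augmenting when the root changes. None of these differences constitutes a gap.
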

\section{UFO Trees}\label{sec:ufo}

In this section, we describe \defn{unbounded fan-out trees (UFO trees)}, a new dynamic tree data structure that directly allows for input trees with vertices of unbounded degree while supporting path queries as well as a rich collection of other types of queries on trees.
Unlike many other dynamic tree data structures, UFO trees do not require ternarization, and thus, they benefit from better memory usage and faster update speeds.
We prove that the height of a UFO tree is $O(\min(\diam, \log n))$ which gives asymptotically better performance than other dynamic tree data structures for all operations on low diameter inputs.
Figure~\ref{fig:ufo_example} shows an example of an input tree and a possible UFO tree for the input.

UFO trees are based on a similar type of tree contraction as topology trees and represent the hierarchical clustering of the vertices in the same way: as a tree where the children of any node represent the clusters that merged to form this cluster.
In order to handle unbounded degree vertices, UFO trees differ from topology trees in the ways that clusters may be combined. The allowed merges are:
\begin{itemize}[itemsep=0pt,parsep=0pt,leftmargin=15pt,topsep=2pt]
    \item A degree 1 cluster with a degree 1 cluster
    \item A degree 1 cluster with a degree 2 cluster
    \item A degree 2 cluster with a degree 2 cluster
    \item A \emph{high degree} cluster with one or more degree 1 clusters
\end{itemize}
The main difference is this fourth merge rule.
We call a cluster with degree $\geq 3$ a \defn{high degree} cluster.
Allowing multiple degree 1 clusters to combine with high degree clusters prevents the issue of a high degree vertex having to combine with all of its neighbors one at a time which may prevent logarithmic height of the tree. 
This property implies that there may exist clusters in the UFO tree with more than 2 children, which introduces some challenges with implementing updates (discussed in Section~\ref{sec:ufo_update}). 

In each round of the tree contraction process represented by UFO trees, a maximal independent set of the allowed merges are performed.
Specifically, any high degree cluster must merge with all of its degree 1 neighbors.
For the remaining degree 1 and 2 clusters, they are merged along a maximal matching.
The stronger maximality invariant for high degree clusters is necessary to ensure that the number of clusters in each successive round decreases geometrically.

In particular, we use maximality to prove Theorem~\ref{thm:ufo_tree_properties}, which shows that UFO trees share the same balance property as topology trees.
We further prove Theorem~\ref{thm:ufo_tree_diameter}, showing that the improved maximality guarantee on UFO trees also gives a stronger bound on the height in terms of the diameter, $\diam$, of the input tree. Therefore, the height of a UFO tree has a stricter bound of $O(\min\{\log n,\diam\})$. All of the proofs are provided in \appref{Appendix~\ref{app:ufo_efficiency_proofs}}.

\begin{theorem}\label{thm:ufo_tree_properties}
    UFO trees have height $O(\log n)$, contain $O(n)$ total nodes, and use $O(n)$ space, where $n$ is the number of vertices in the input tree.
\end{theorem}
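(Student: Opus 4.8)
\textbf{Proof proposal for Theorem~\ref{thm:ufo_tree_properties}.}

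The plan is to mirror the structure of the analysis of topology trees (Theorem~\ref{thm:topology_tree_properties}): establish that the number of clusters decreases by a constant factor in each round of the contraction, which immediately gives $O(\log n)$ rounds (hence height), and then observe that a geometric decrease in cluster counts implies $O(n)$ total nodes across all levels and therefore $O(n)$ space. The only genuinely new part is the geometric-decay argument, since UFO trees introduce the unbounded-fanout merge rule; everything downstream is identical to the topology-tree case. So I would focus the argument entirely on showing that if there are $n_i$ clusters at level $i$ (with $n_i \geq 2$, i.e.\ the tree has not yet been fully contracted), then $n_{i+1} \leq c \cdot n_i$ for some constant $c < 1$.

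First I would set up the charging argument. Fix a level $i$ and consider the forest on the level-$i$ clusters. Partition the level-$i$ clusters into two groups: those incident to a high-degree cluster (including the high-degree clusters themselves) and those not. For the first group, I would use the strong maximality invariant: every high-degree cluster merges with \emph{all} of its degree-$1$ neighbors, and a high-degree cluster has degree $\geq 3$. I would like to argue that each high-degree cluster together with its degree-$1$ neighbors forms a merge group that shrinks by a constant fraction. The subtle point here is that a high-degree cluster need not have \emph{any} degree-$1$ neighbors in a given round, so I cannot charge it locally; instead I would charge such "stuck" high-degree clusters against the fact that the tree has bounded degree on average, or more precisely use a global counting argument on the number of leaves (degree-$1$ clusters) versus internal clusters in each tree of the forest.

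The cleaner route, which I would pursue, is the classical one used for Miller--Reif tree contraction and for topology trees: in any tree with $n_i \geq 2$ nodes, at least a constant fraction of the nodes are either leaves (degree $1$) or degree-$2$ nodes lying on a chain; among the leaves and degree-$\leq 2$ nodes, a maximal matching of allowed rake/compress merges covers a constant fraction, so a constant fraction of clusters disappear. For UFO trees, the high-degree merge rule can only \emph{help}: it removes additional degree-$1$ clusters (the degree-$1$ neighbors of high-degree clusters) that the plain matching argument might have missed, and it never prevents a rake or compress merge that topology trees would have done, because a degree-$1$ cluster adjacent to a high-degree cluster is exactly the kind of cluster topology trees would also rake. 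Hence the number of clusters eliminated in a UFO round is at least the number eliminated in the corresponding topology-tree round, so the same constant-factor decrease $n_{i+1} \leq (1 - \Omega(1)) n_i$ holds. I would state this comparison carefully and cite the topology-tree bound from Theorem~\ref{thm:topology_tree_properties} as the baseline.

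The main obstacle I anticipate is making the "UFO merges only help" comparison airtight: one must check that the stronger maximality requirement on high-degree clusters (they \emph{must} take all degree-$1$ neighbors) does not conflict with a maximal matching chosen among the remaining degree-$1$ and degree-$2$ clusters, i.e.\ that a valid maximal merge set of the UFO type always exists and always eliminates at least as many clusters as some valid topology-tree merge set on the ternarized image would. I would handle this by describing an explicit greedy construction of the UFO merge set --- first commit every high-degree cluster to rake all its degree-$1$ neighbors, then run the ordinary maximal-matching rake/compress on the induced subforest of degree-$\leq 2$ clusters --- and verifying that each step is legal and that the union is maximal. Once that is in place, the rest (summing $\sum_i n_i = O(n)$ by the geometric series, and noting each cluster uses $O(1)$ space beyond its child list whose total size is $O(n)$) is routine.
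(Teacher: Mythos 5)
Your overall plan --- prove a constant-factor per-level decrease in the number of clusters, then read off the height as $O(\log n)$ and the node/space bounds from a geometric series --- is exactly the paper's proof (Lemma~\ref{lem:ufo_const_frac} followed by Lemmas~\ref{lem:ufo_height} and~\ref{lem:ufo_space}). Your second paragraph's ``global counting argument on the number of leaves versus internal clusters'' is precisely the argument the paper uses: after collapsing chains of degree-$1$ and degree-$2$ clusters, the residual tree has no degree-$2$ nodes, so more than half of its nodes are leaves, and \emph{every} leaf rakes into its neighbor because that neighbor is necessarily high degree; reintroducing the chains as in the topology-tree lemma then yields the $N/6$ fraction of merges.

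However, the step you single out as ``the cleaner route, which I would pursue'' --- arguing that the UFO merge rule ``can only help'' relative to topology trees and citing the topology-tree constant-fraction bound as a baseline --- does not work as stated and would be a genuine gap if it were the load-bearing step. Lemma~\ref{lem:topology_const_frac} is proved only for trees of maximum degree $3$, and its proof explicitly uses the fact that at most two leaves can share a common neighbor, so that a \emph{pairwise} maximal matching already covers half the leaves. On an unbounded-degree input there is no such baseline to compare against: on a star with $n$ leaves, a maximal matching of pairwise rake/compress merges eliminates exactly one cluster, not a constant fraction. So the multirake rule is not a bonus on top of an argument that already works; it is the only reason a constant fraction of clusters disappears. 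Your own setup already contains the correct fix (the global leaf count), and your explicit greedy construction --- commit every high-degree cluster to all of its degree-$1$ neighbors, then maximally match the remaining degree-$\leq 2$ clusters --- is exactly how the paper defines a contraction round. The repair is therefore to drop the comparison framing and run the leaf-counting argument directly, which is what the paper does; everything downstream (the $\log_{6/5} n$ height bound and the $\sum_i n (5/6)^i \leq 6n$ space bound) then goes through as you describe.
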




\begin{theorem}\label{thm:ufo_tree_diameter}
    UFO trees have height $\leq \lceil \diam/2 \rceil$, where $\diam$ is the diameter of the input tree.
\end{theorem}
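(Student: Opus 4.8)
The plan is to track how the diameter of the contracted tree shrinks from one round of UFO contraction to the next, and argue that it drops by at least a factor related to $2$ every round, so that after $\lceil \diam/2 \rceil$ rounds the tree has collapsed to a single cluster (hence the UFO tree has at most that many levels above the leaves, i.e. height $\leq \lceil \diam/2 \rceil$). Concretely, I would let $D_i$ denote the diameter (maximum number of edges on a simple path) of the level-$i$ contracted tree, with $D_0 = \diam$, and the goal is a per-round recurrence of the form $D_{i+1} \leq D_i - 2$ (or $D_{i+1} \leq \lfloor D_i/2 \rfloor$, whichever the contraction actually guarantees — I expect the additive bound is what yields the clean $\lceil \diam / 2\rceil$), together with the base fact that once $D_i \le 1$ the component is a single edge or single vertex and contracts to one cluster in one more step.

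The key step is the per-round analysis. Fix a longest path $P = v_0 v_1 \cdots v_k$ in the level-$i$ tree, so $k = D_i$. Every internal vertex $v_j$ of $P$ has degree $\geq 2$ in the level-$i$ tree; I want to show that the images of $v_0, \dots, v_k$ under the merge of round $i$ occupy at most $\lceil k/2 \rceil$ distinct clusters, and moreover that this is an upper bound on the diameter of the next round (since any path in the level-$(i+1)$ tree lifts to a walk in the level-$i$ tree, and in particular a longest level-$(i+1)$ path corresponds to a path in the level-$i$ tree whose image it is). To bound the number of distinct images of $P$, I would argue round-robin over consecutive pairs along $P$: the merge is a maximal matching on the degree-$1$ and degree-$2$ clusters (plus the high-degree absorption rule), and maximality forces that among any two adjacent internal edges of $P$, at least one edge of $P$ is contracted — if $(v_{j-1},v_j)$ and $(v_j,v_{j+1})$ were both uncontracted and $v_j$ had degree $2$, then $v_j$ is unmatched and could be matched along one of them, contradicting maximality; if $v_j$ is high-degree then it absorbs its degree-$1$ neighbors and, more to the point, the analysis of Theorem~\ref{thm:ufo_tree_properties} already shows a high-degree vertex on $P$ gets merged. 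This "at least one of every two consecutive edges of $P$ is contracted" is exactly what collapses $k$ vertices into $\leq \lceil k/2\rceil + \text{const}$ clusters, i.e. reduces the path length by roughly half the edges, giving $D_{i+1} \le D_i - \lfloor D_i/2 \rfloor = \lceil D_i / 2 \rceil$ — and iterating $\lceil D_i/2\rceil$ halving-style or, if instead one extracts $D_{i+1}\le D_i-2$ from each round, the additive recurrence — yields height $\le \lceil \diam/2\rceil$ after unrolling.

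The main obstacle is handling the endpoints $v_0, v_k$ and the high-degree clusters carefully, since the clean "matching is maximal" argument applies to degree-$1$ and degree-$2$ clusters, whereas a high-degree $v_j$ on the path behaves differently: it absorbs only degree-$1$ neighbors, so if both path-neighbors $v_{j-1}, v_{j+1}$ have degree $\geq 2$, the edges $(v_{j-1}, v_j)$ and $(v_j, v_{j+1})$ are both uncontracted, which locally breaks the "one of two consecutive edges contracts" claim. I would resolve this by noting that $v_{j-1}$ and $v_{j+1}$ are then themselves degree-$\ge 2$ path-internal vertices and applying the matching-maximality argument to the edges incident to them instead, so the deficit at $v_j$ is compensated by guaranteed contractions one step further out along $P$; making this bookkeeping tight enough to still get the factor $2$ (rather than some worse constant) is the delicate part. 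I would also need the easy base case: a level-$i$ component that is a single vertex is a root of the UFO tree, and one with diameter $1$ (a single edge) merges to a single cluster in round $i$, contributing the final $+1$ that the ceiling absorbs.
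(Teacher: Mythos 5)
There is a genuine gap in your main line of argument. The claim at the heart of your per-round analysis --- that among any two consecutive edges of the longest path $P$ at least one contracts --- is false for UFO trees, and the repair you sketch does not work. Consider a caterpillar: a spine $v_0 v_1 \cdots v_k$ in which every internal spine vertex $v_1,\dots,v_{k-1}$ carries one extra pendant leaf and therefore has degree $3$. Under the UFO merge rules a high-degree cluster absorbs only its degree-$1$ neighbors, so each internal $v_j$ absorbs its pendant leaf and \emph{no} internal spine edge contracts; the degree-$1$/degree-$2$ matching never touches the spine because no internal spine vertex has degree $\le 2$. Your proposed compensation (``the deficit at $v_j$ is compensated by guaranteed contractions one step further out along $P$'') fails here because $v_{j-1}$ and $v_{j+1}$ are themselves high-degree, so there is nothing further out to compensate with. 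Consequently the halving recurrence $D_{i+1}\le\lceil D_i/2\rceil$ is simply false, and no bookkeeping will recover it.

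What is true --- and what the paper proves --- is only the additive recurrence $D_{i+1}\le D_i-2$, which you mention parenthetically but do not establish. Its source is not the internal matching structure of $P$ at all; it is the two \emph{endpoints}. Take $P$ maximal, so its endpoints are leaves. A leaf endpoint $x$ with path-neighbor $y$ is guaranteed to witness a contraction among the first two edges of $P$: if $y$ has degree $1$ or degree $\ge 3$ then $x$ merges into $y$ (by maximality of the matching, resp.\ by the rule that high-degree clusters absorb all of their degree-$1$ neighbors); if $y$ has degree $2$ then maximality forces either $x$ to merge with $y$, or $y$ to merge with its other neighbor, which also lies on $P$. Applying this at both ends shortens $P$ by at least two vertices per round, which, after checking the short-path base cases, gives height $\le\lceil\diam/2\rceil$. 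The caterpillar shows this is tight: exactly two spine vertices disappear per round. So you should discard the halving argument entirely and promote the endpoint argument from an ``obstacle'' to the whole proof.
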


\begin{figure}
    \centering
    \includegraphics[width=\columnwidth]{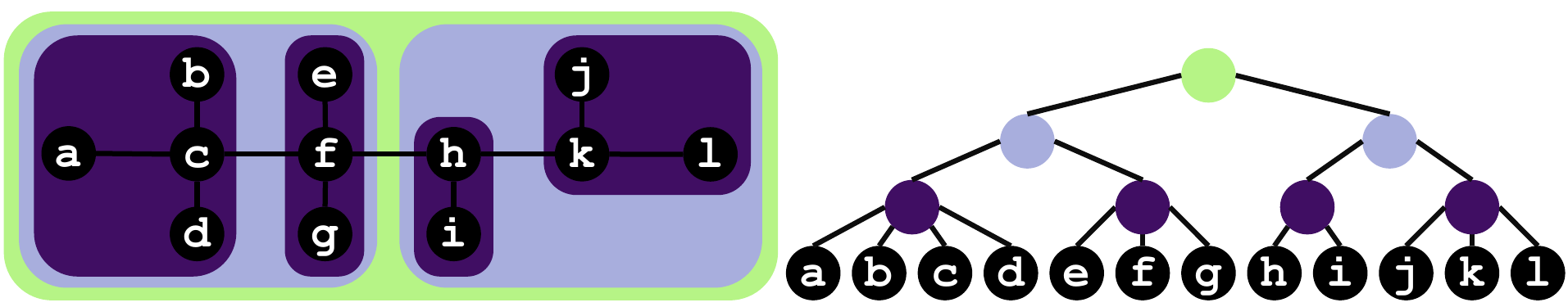}
    \caption{\small An example of an input tree (left) and a UFO tree for it (right). \revision{The high degree vertices $c$, $f$, and $k$ merge with all of their respective degree $1$ neighbors in the first round.}}
    \label{fig:ufo_example}
\end{figure}

\subsection{UFO Tree Updates}\label{sec:ufo_update}


The general strategy of updates in UFO trees is the same as in topology trees: we delete the ancestors of clusters whose contents were changed and recluster the tree bottom-up.
However, doing this naively is not enough to maintain the same $O(\log n)$ update cost of topology trees.
One challenge arises when a high degree cluster is deleted, say with degree $\omega(\log n)$, since simply removing that cluster from the adjacency lists of its neighbors can be too costly.
Another challenge is that the tree is no longer binary: clusters may have $\omega(1)$ fanout (recall that the fanout of a cluster is its number of children).
We define a \defn{high fanout} cluster as a cluster with fanout $\geq 3$.
Deleting such a cluster may result in more than a constant number of root clusters at a level.
To prevent these issues, our update algorithm carefully avoids deleting high degree and high fanout clusters.
Figure~\ref{fig:ufo_update} shows an example of an update to a UFO tree.
Algorithms~\ref{alg:ufo_update_1}~and~\ref{alg:ufo_update_2} show the pseudo-code for UFO tree updates.
We first describe the \textsc{DeleteAncestors} procedure, which is used when performing an update in Algorithm~\ref{alg:ufo_update_2}.

\begin{figure*}
    \centering
    \includegraphics[width=\textwidth]{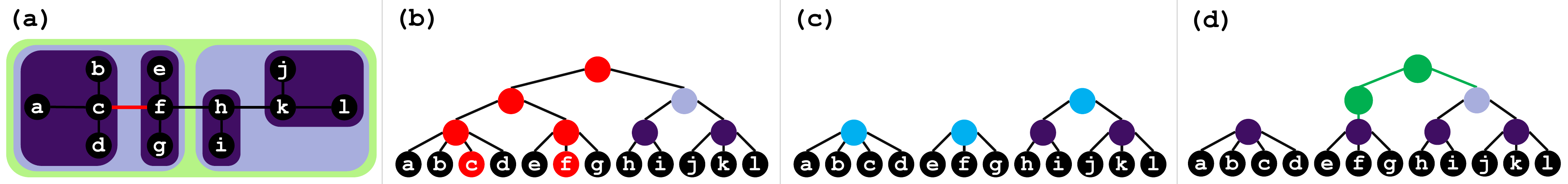}
    \caption{\small
    An example of an update to a UFO tree. Subfigure (a) shows the input tree, with the deleted edge $(c,f)$ in red.
    In (b), we show the original UFO tree with the ancestor clusters of $c$ and $f$ highlighted in red.
    In (c), we show the UFO tree after the initial ancestor removal step with the root clusters highlighted in blue.
    In this example, the parents of $c$ and $f$ were not deleted because they were high fanout, but all other ancestors were.
    In (d), we show the UFO tree after the reclustering process with the newly created clusters highlighted in green.
    }
    \label{fig:ufo_update}
\end{figure*}


\begin{algorithm}[ht]
    \small
    \caption{\small $\fname{DeleteAncestors}(c,i)$}\label{alg:ufo_update_1}
    $\vname{prev} \gets c$, $\vname{curr} \gets c.\vname{parent}$, $\vname{next} \gets \vname{curr.parent}$\;
    \While{$\vname{curr}$}{
        \uIf{$\vname{curr.degree} < 3$ and $\vname{curr.fanout} < 3$\label{line:ufo_ra_check_start}}{
            \For{$x \in \vname{curr.neighbors}$}{
                $x.\vname{neighbors}.\fname{delete}(\vname{curr})$
            }
            \For{$y \in \vname{curr.children}$}{
                $y.\vname{parent} \gets \mathsf{null}$\;
                $\vname{rootclusters}[i].\mathsf{insert}(y)$
            }
            \If {$\vname{next}$}{
                $\vname{next.children}.\fname{delete}(\vname{curr})$
            }
            $\fname{FreeCluster}(\vname{curr})$ \label{line:ufo_ra_check_end}\;
        }\ElseIf{$\vname{prev.degree} \leq 2$ and $\neg \vname{prev.deleted}$\label{line:ufo_ra_disconnect_start}}{
            $\vname{prev.parent} \gets \mathsf{null}$\label{line:ufo_ra_disconnect_end}
        }
        $\vname{prev} \gets \vname{curr}$, $\vname{curr} \gets \vname{next}$, $\vname{next} \gets \vname{curr.parent}$\;
        $i \gets i+1$\;
    }
\end{algorithm}

\myparagraph{$\fname{DeleteAncestors}$ Subroutine}
Whenever the algorithm attempts to delete a path of ancestors, it deletes only those clusters that are neither high degree nor high fanout (Algorithm~\ref{alg:ufo_update_1}, lines~\ref{line:ufo_ra_check_start}--\ref{line:ufo_ra_check_end}). The children of each deleted cluster become root clusters at their respective levels.
Additionally, if a cluster on the path is not deleted and its parent is also not deleted, the cluster is disconnected from its parent---i.e., it no longer participates in the merge forming the parent---only if its degree is $\leq 2$ (Algorithm~\ref{alg:ufo_update_1}, lines~\ref{line:ufo_ra_disconnect_start}--\ref{line:ufo_ra_disconnect_end}).
This is necessary because the contents of the degree 1 cluster have changed. If the cluster is high degree and its parent is not deleted, then it must be the ``center'' of some merge involving that high degree cluster, and it must still participate in the merge forming its parent since otherwise the remaining children of the parent no longer represent a valid type of merge.
We carefully prove that our update algorithm never produces any invalid merges in \appref{Appendix~\ref{app:ufo_proofs}}.

\begin{algorithm}[ht]
    \small
    \caption{\small $\fname{Update}(u,v,\vname{delete\_flag})$}\label{alg:ufo_update_2}
    $\fname{DeleteAncestors(u,0)}$, $\fname{DeleteAncestors(v,0)}$, $i \gets 0$ \label{line:ufo_update_ra_0} \;
    Insert or delete edge $(u,v)$ at all levels. \label{line:ufo_update_update}\;
    \While{$\vname{RootClusters}[i] \neq \emptyset$}{
        \For{$x \in \vname{RootClusters}[i]$ \label{line:ufo_update_high_deg_start}}{
            \If{$x.\vname{degree} \geq 3$}{
                $x.\vname{parent} \gets \fname{NewCluster()}$\;
                \For{$y \in x.\vname{neighbors}$}{
                    \If{$y.\vname{degree} = 1$}{
                        \If{$y.\vname{parent} \neq \mathsf{null}$}{
                            $\fname{DeleteAncestors}(y,i)$\;
                        }
                        $y.\vname{parent} \gets x.\vname{parent}$\;
                    }
                }
            }
        }\label{line:ufo_update_high_deg_end}
        \For{$x \in \vname{RootClusters}[i]$}{
            \lIf{$x.\vname{parent} \neq \mathsf{null}$} {\textbf{continue}}
            \uIf{$x.\vname{degree} = 2$ \label{line:ufo_update_deg2_start}}{ 
                \For{$y \in x.\vname{neighbors}$}{
                    \If{$y.\vname{degree} <= 2$ and $\neg y.\vname{merges}$}{
                        \uIf{$y.\vname{parent} \neq \mathsf{null}$}{
                            $\fname{DeleteAncestors}(y.\vname{parent},i+1)$\label{line:ufo_update_nonroot_1}\;
                            $x.\vname{parent} \gets y.\vname{parent}$\;
                        }\Else{
                            $x.\vname{parent} \gets \fname{NewCluster()}$\;
                            $y.\vname{parent} \gets x.\vname{parent}$\;
                        }
                    }
                }\label{line:ufo_update_deg2_end}
            }\ElseIf{$x.\vname{degree} = 1$\label{line:ufo_update_deg1_start}}{
                $y \gets x.\vname{neighbors}[0]$\;
                \uIf{$y.\vname{parent} \neq \mathsf{null}$ and $\neg y.\vname{merges}$}{
                    $\fname{DeleteAncestors}(y.\vname{parent},i+1)$ \label{line:ufo_update_nonroot_2}\;
                    $x.\vname{parent} \gets y.\vname{parent}$\;
                }\uElseIf{$y.\vname{parent} \neq \mathsf{null}$ and $y.\vname{degree} \geq 3$}{
                    $\fname{DeleteAncestors}(y.\vname{parent},i+1)$ \label{line:ufo_update_nonroot_3}\;
                    Delete edge $(y,x)$ at all levels $> i$.\;
                    $x.\vname{parent} \gets y.\vname{parent}$\;
                }\Else{
                    $x.\vname{parent} \gets \fname{NewCluster}()$\;
                    \If {$y.\vname{parent}=\mathsf{null}$}{$y.\vname{parent} \gets x.\vname{parent}$}
                }
            }\label{line:ufo_update_deg1_end}
            \If {$x$ could not merge with any neighbor}{
                $x.\vname{parent} \gets \fname{NewCluster}()$
            }
        }
        Populate the adjacency lists for new clusters.\;
        Add all new clusters to $RootClusters[i+1]$.\;
        $\vname{RootClusters}[i].\fname{clear}()$, $i \gets i+1$.
    }
\end{algorithm}

\myparagraph{\fname{Update} Algorithm}
The update algorithm begins by attempting to delete the ancestors of the endpoints of the update, $u$ and $v$ (Algorithm~\ref{alg:ufo_update_2}, line~\ref{line:ufo_update_ra_0}).
The updated edge is then inserted (or deleted) at all levels of the tree where it exists (Algorithm~\ref{alg:ufo_update_2}, line~\ref{line:ufo_update_update}).

Next, the root clusters are reclustered bottom-up, starting at level $0$.
First, the high degree root clusters are given a new parent and are merged with all of their degree 1 neighbors (Algorithm~\ref{alg:ufo_update_2}, lines~\ref{line:ufo_update_high_deg_start}--\ref{line:ufo_update_high_deg_end}). If any of those clusters are non-root clusters, the algorithm attempts to delete their ancestors to avoid memory leaks.

The remaining unmerged degree $2$ root clusters (Algorithm~\ref{alg:ufo_update_2}, lines~\ref{line:ufo_update_deg2_start}--\ref{line:ufo_update_deg2_end}) and degree $1$ root clusters (Algorithm~\ref{alg:ufo_update_2}, lines~\ref{line:ufo_update_deg1_start}--\ref{line:ufo_update_deg1_end}) then attempt to merge with a neighbor.
Degree 2 root clusters check both of their neighbors, looking for one that hasn't already merged with another cluster.
Degree 1 root clusters can either merge with a degree 2 neighbor that hasn't merged yet, or with any degree 1 or high degree neighbor.
If a root clusters merges with a non-root cluster, it is necessary to attempt to delete the ancestors of the non-root cluster's parent, since its contents have changed (Algorithm~\ref{alg:ufo_update_2}, lines~\ref{line:ufo_update_nonroot_1}, \ref{line:ufo_update_nonroot_2}, and \ref{line:ufo_update_nonroot_3}).

This update algorithm is efficient because all root clusters produced during an update have degree at most $4$. We prove this, along with bounds showing that the number of root clusters and the number of deleted clusters at each level are bounded by a constant. Combining this with the bounds on the height of UFO trees from Theorems~\ref{thm:ufo_tree_properties}~and~\ref{thm:ufo_tree_diameter}, results in Theorem~\ref{thm:ufo_tree_updates}. We provide the proof in \appref{Appendix~\ref{app:ufo_proofs}}.

\begin{theorem}\label{thm:ufo_tree_updates}
    UFO tree updates take $O(\min\{\log n,\diam\})$ time, where $n$ is the number of vertices in the input tree and $\diam$ is the diameter of the input tree.
\end{theorem}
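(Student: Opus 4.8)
The plan is to bound the total time by charging work to three quantities: the height $h = O(\min\{\log n, \diam\})$ of the UFO tree (from Theorems~\ref{thm:ufo_tree_properties} and~\ref{thm:ufo_tree_diameter}), the number of root clusters created at each level, and the number of clusters deleted at each level. The key structural claim I would establish first is that \emph{every root cluster produced during an update has degree at most $4$}. This is the crux of the argument: it ensures that each call into \textsc{DeleteAncestors} or the reclustering loop touches only $O(1)$ neighbors per cluster, so that adjacency-list edits, child-list edits, and the per-cluster branching in Algorithm~\ref{alg:ufo_update_2} all cost $O(1)$ per cluster rather than scaling with the (possibly unbounded) degree of the input tree. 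I would prove the degree-$\le 4$ bound by induction on the level $i$, tracking how a root cluster at level $i+1$ inherits edges from its level-$i$ children: an edge between two level-$i$ clusters survives to level $i+1$ unless it is the edge along which a merge happens, and a root cluster at level $i+1$ is formed by merging at most one degree-$2$ or degree-$1$ cluster pair (giving degree $\le 2$) or a high-degree cluster with degree-$1$ neighbors (which, after absorbing all degree-$1$ neighbors, can only retain edges to non-degree-$1$ clusters; one then argues the count stays bounded). The cases where \textsc{DeleteAncestors} disconnects a degree-$\le 2$ cluster from a surviving high-fanout/high-degree parent, or deletes a cluster's edge at higher levels (lines~\ref{line:ufo_update_nonroot_3}), need to be folded into this induction carefully, since they change degrees at higher levels.

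Next I would bound the number of root clusters and deleted clusters at each level by a constant. Initially there are $O(1)$ root clusters at level $0$ (the children of the deleted ancestor chains of $u$ and $v$, of which there are $O(1)$ because only non-high-fanout ancestors are deleted, each contributing $<3$ children, and only $O(1)$ ancestor chains are walked per update). The inductive step is the familiar topology-tree-style argument: by the maximality of the contraction at the previous (unchanged) level, a bounded number of new root clusters at level $i$ can cause only a bounded number of \emph{additional} merges to become invalid or non-maximal at level $i+1$, hence a bounded number of new \textsc{DeleteAncestors} calls originate at level $i+1$; each such call deletes a path that contributes $O(1)$ root clusters at each subsequent level (again using fanout $<3$ for deleted clusters and degree $\le 4$ for the resulting roots). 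I would phrase this as: if $r_i$ and $d_i$ are the numbers of root and deleted clusters at level $i$, then $r_{i+1} = O(r_i + d_i)$ and $d_{i+1} = O(r_i + d_i)$, and since the recursion is "flat" (the constants do not compound because each level's deletions are triggered by a constant number of events from the level below), both $r_i$ and $d_i$ remain $O(1)$ for all $i$. This is where I would need to be most careful, because a superficial reading suggests deletions could cascade multiplicatively; the resolution is that the edge $(u,v)$ is modified at all levels once (line~\ref{line:ufo_update_update}), and every \textsc{DeleteAncestors} call is triggered either by that single modification or by one of the $O(1)$ root clusters per level, so the "event budget" at each level is genuinely constant.

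Finally I would assemble the pieces: the total work is $\sum_{i=0}^{h} O(r_i + d_i) \cdot O(1) = \sum_{i=0}^{h} O(1) = O(h) = O(\min\{\log n, \diam\})$, where the $O(1)$ per cluster uses the degree-$\le 4$ bound to make all adjacency/child-list operations (implemented via the parallel hash tables and small constant-size lists) cost $O(1)$, and the loop in Algorithm~\ref{alg:ufo_update_2} runs for at most $h+1$ iterations because after height $h$ there are no clusters left, hence no root clusters. I would also remark that inserting or deleting $(u,v)$ "at all levels" costs $O(h)$ since the edge appears at a contiguous prefix of levels and each level's update is $O(1)$.

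I expect the main obstacle to be the second step — proving that $r_i$ and $d_i$ stay bounded by an absolute constant rather than growing with $i$ — since this requires a precise accounting of which merges at level $i+1$ can be invalidated by changes at level $i$, together with a clean argument that the triggering events do not multiply across levels. The degree-$\le 4$ invariant (first step) is also delicate because of the high-degree merge rule and the edge-deletion-at-higher-levels case, but it is a local induction; the non-compounding of the deletion cascade is the genuinely global part of the argument. A full proof would live in Appendix~\ref{app:ufo_proofs}, as the paper indicates.
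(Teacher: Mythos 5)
Your overall decomposition matches the paper's: bound the height by $O(\min\{\log n,\diam\})$ via Theorems~\ref{thm:ufo_tree_properties} and~\ref{thm:ufo_tree_diameter}, show each root cluster has degree at most $4$ so that per-cluster work is $O(1)$, and bound the number of root clusters and deleted clusters per level by a constant. However, your second step --- the one you yourself flag as the hardest --- has a genuine gap. You write the recurrence as $r_{i+1}=O(r_i+d_i)$ and $d_{i+1}=O(r_i+d_i)$ and then assert the constants ``do not compound'' because ``each level's deletions are triggered by one of the $O(1)$ root clusters per level.'' That justification is circular: the claim that there are $O(1)$ root clusters per level is precisely what you are trying to prove, and a recurrence of the form $r_{i+1}\le c(r_i+d_i)$ with $c>1$ gives exponential growth, not a constant. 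Nothing in your sketch supplies the mechanism that actually stops the accumulation.

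The paper's proof (Lemmas~\ref{lem:ufo_root_clusters} and~\ref{lem:ufo_deleted} in Appendix~\ref{app:ufo_proofs}) closes this gap with two ingredients you are missing. First, an \emph{additive} rather than multiplicative bound on growth: the affected (root) clusters are partitioned into at most $10$ disjoint sub-components, each of which is shown to have at most $2$ \emph{frontier} clusters (affected clusters adjacent to unaffected ones, Definition~\ref{def:frontiers}); since only frontiers can spread affection, at most $8$ \emph{new} affected clusters join each sub-component per level (Lemma~\ref{lem:affectedadded_subc}) --- a constant independent of the current count $r_i$. Second, a \emph{contraction} term: by the maximality guarantee (Lemma~\ref{lem:ufo_const_frac}), at least a $1/6$ fraction of the affected clusters in each sub-component merge away at each level. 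Together these give a recurrence of the form $A^{i+1}_j \le 8 + \tfrac{5}{6}(A^i_j - 2) + 2$, whose fixed point is a constant ($\le 50$ per sub-component, hence $\le 500$ root clusters per level), and an analogous recurrence bounds deletions by $360$ per level. Without both the frontier argument (bounding the additive growth) and the $5/6$ decay factor (preventing accumulation across levels), your recurrence cannot be resolved to $O(1)$, and the final summation $\sum_i O(r_i+d_i)=O(h)$ does not follow. Your first step (the degree-$\le 4$ invariant) and your final assembly are consistent with the paper; the missing idea is the frontier-plus-contraction analysis of the affected forest.
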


\subsection{UFO Tree Queries}

We show how UFO trees can support (to the best of our knowledge) all well-known queries described over dynamic trees in the literature.
Our results are summarized in Theorem~\ref{thm:ufo_tree_queries}.
We give complete descriptions of how to support all of these queries in \appref{Appendix~\ref{app:ufo_queries}}.
\revision{All queries are read-only and any number of queries can be run in parallel with no synchronization.}
Most of these queries can be supported without changing the base cost per operation of UFO trees. 
Others (e.g., non-invertible subtree queries) require recomputing the result of a non-invertible function over all of the children of a high fanout cluster.
To efficiently support these types of queries, we store the sets of children of high fanout clusters using rank trees~\cite{wulff2013faster}, where the weight of each cluster is the number of vertices contained in it. 
We use rank trees because their weight-biased property allows us to keep the overall height of the UFO tree and the cost of operations at $O(\log n)$ rather than $O(\log^2 n)$.

\begin{theorem}\label{thm:ufo_tree_queries}
    UFO trees can be modified to support path queries, subtree queries with invertible functions, and LCA queries with $O(\min\{\log n,\diam\})$ cost per operation.
    UFO trees can be modified to support any subtree queries, diameter queries, center queries, median queries, and nearest marked vertex queries with $O(\log n)$ cost per operation.
\end{theorem}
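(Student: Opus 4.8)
The plan is to handle each query family by (i) augmenting every cluster of the UFO tree with a small summary of the input vertices it contains, (ii) showing that this summary can be recomputed whenever a cluster is re-formed during an update without changing the asymptotic update cost, and (iii) giving a query algorithm that walks from the relevant leaf cluster(s) up the tree, combining partial summaries one level at a time. Since the height of a UFO tree is $O(\min\{\log n,\diam\})$ by Theorems~\ref{thm:ufo_tree_properties}~and~\ref{thm:ufo_tree_diameter}, any such walk touches $O(\min\{\log n,\diam\})$ clusters. The only thing to control is the work done at each cluster, and the crux is what happens at a \defn{high fanout} cluster, where we reach a cluster $c$ from one of its children $c'$ and must combine the summaries of all of $c$'s children except $c'$.

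For LCA queries, path queries, and subtree queries with an invertible combining function, the per-cluster cost is $O(1)$, giving total cost $O(\min\{\log n,\diam\})$. For LCA this is immediate: walk up simultaneously from the two leaf clusters using level and parent pointers until the two ancestor paths meet; the lowest cluster containing both endpoints determines the LCA in the input tree. For path and invertible subtree queries, we store at each cluster the aggregate of all its weights (plus, for path queries, the $O(1)$ spine weights needed to splice the two half-paths together at the meeting cluster); when the walk steps from $c'$ up to a parent $c$ of arbitrary fanout, the quantity ``$c$ minus $c'$'' is obtained in $O(1)$ time by subtracting $c'$'s stored aggregate via the inverse. Maintaining this aggregate during an update adds only $O(1)$ work per re-formed cluster on top of Theorem~\ref{thm:ufo_tree_updates}, and for a subtree query one decomposes the queried subtree into $O(\min\{\log n,\diam\})$ (partial) clusters in the usual way.

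For subtree queries with a general non-invertible function and for the non-local queries (diameter, center, median, nearest marked vertex) we cannot subtract, so at every high fanout cluster we additionally store a weight-biased \emph{rank tree}~\cite{wulff2013faster} over its children, with each rank-tree node annotated with the combined summary of the children in its subtree; the weight of a child is the number of input vertices it contains. The summary of ``all children of $c$ except $c'$'' is then read off in $O(\log(\mathrm{fanout}(c)))$ time from two partial-range queries. The key point is the weight bias: in a rank tree over children totaling $n_c$ vertices, a child $c'$ with $n_{c'}$ vertices sits at depth $O(\log(n_c/n_{c'}))$, so summing $\log(n_c/n_{c'})$ over the $O(\log n)$ clusters on a root-to-leaf path telescopes to $O(\log(n/1))=O(\log n)$. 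This keeps both the total query cost and the effective height of the (now binary) UFO tree at $O(\log n)$ rather than $O(\log^2 n)$. For the non-local queries we attach to each cluster the same kind of constant-size boundary summary used for top trees and RC trees (for diameter: the diameter realized inside the cluster together with the maximum distance from each of its $O(1)$ boundary vertices into it; analogous constant-size data for center, median, and nearest marked vertex), verify that two such summaries compose in $O(1)$ time across a merge, and handle the fourth (unbounded-fanout) merge by folding the pendant degree-$1$ children in through the rank tree, which also lets us re-aggregate a high fanout cluster's summary in $O(\log(\mathrm{fanout}))$ time when one child changes.

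The main obstacle, and where most of the detailed work lies, is two-fold. First, one must check that each augmented quantity composes correctly under \emph{all four} UFO merge rules --- in particular that a cluster formed by absorbing many pendant degree-$1$ clusters into a high-degree cluster still carries an $O(1)$-size boundary summary and can be rebuilt incrementally rather than from scratch. Second, one must run the telescoping argument carefully enough that splicing rank trees into the UFO tree neither raises its asymptotic height nor pushes query or update cost above $O(\log n)$. Finally, the $O(\log n)$ (rather than $O(\min\{\log n,\diam\})$) bound in the second half of the statement is unavoidable: subtree maximum queries already have an $\Omega(\log n)$ lower bound when $\diam = O(1)$, so no augmentation can make those queries diameter-sensitive.
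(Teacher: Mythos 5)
Your proposal is correct and follows essentially the same route as the paper: constant-size per-cluster summaries (cluster-path/boundary data) combined up a height-$O(\min\{\log n,\diam\})$ walk for path, invertible-subtree, and LCA queries, and weight-biased rank trees over the children of high-fanout clusters with the telescoping-depth argument for the non-invertible and non-local queries. The only detail you gloss over is the LCA extraction step (the paper's query takes a third root vertex $r$ and descends a degree-2 cluster path to locate the center vertex of the first high-degree cluster), but this stays within the same asymptotic bound and does not change the argument.
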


In \appref{Appendix~\ref{app:noninvertible_queries_lower_bound}} we show that there exist input trees with constant diameter where solving subtree minimum or maximum queries in $o(\log n)$ time per operation would break the comparison sorting lower bound. This lower bound helps explain why our data structure can not achieve $o(\log n)$ time per query for general non-invertible subtree queries.
We believe that similar lower-bound techniques could be used to prove that it is not possible to achieve $o(\log n)$ time per query for the other types of queries where UFO trees can not achieve an $O(\diam)$ cost bound.

\section{Parallel Batch-Dynamic Trees}\label{sec:parallel}

We begin this section by presenting a new batch-dynamic update algorithm for topology trees, and then extend the ideas to UFO trees.
Both of our algorithms are work efficient and low-depth.
In addition, we show that performing $k$ updates on a parallel batch-dynamic UFO tree takes $O(k\diam)$ work.

Our algorithms allow for mixed batches of insertions and deletions.
We assume that the input contains at most one update involving any edge, and that any possible ordering of the updates defines a valid sequence of sequential updates.

\subsection{Parallel Batch-Dynamic Topology Trees}

The basic framework for batch updates is the same as the sequential algorithm: first delete all the ancestors of the leaf clusters for the endpoints of each update, then recluster the tree bottom-up.
\revision{Unlike in the sequential setting, the number of root clusters and clusters to delete at each level can be much larger than $O(1)$ (we prove that it is $O(k)$ in \appref{Appendix~\ref{app:batch_update}}).
Hence an efficient batch-parallel algorithm should recluster and delete clusters in parallel.}
Algorithm~\ref{alg:batch_update_topology} shows the pseudo-code for our batch-parallel update algorithm.

\begin{algorithm}[ht]
\small
\caption{\small $\fname{BatchUpdate}((u_1,v_1,\vname{del}_1),\ldots,(u_k,v_k,\vname{del}_k))$}\label{alg:batch_update_topology}
Update adjacency lists of updated leaf clusters.\label{line:upd_adj}\;
$R_0 \gets$ Leaf clusters for $u_1,\hdots,u_k,v_1,\hdots,v_k$.\label{line:root0}\;
$D_1 \gets R_0.\fname{MapToParents}()$ \label{line:del1}\;
\While{$|R_i| > 0~\text{or}~|D_{i+1}| > 0$\label{line:batch_loop}}{
    $D_{i+2} \gets D_{i+1}.\fname{MapToParents}()$ \label{line:next_del}\;
    $R_i \gets R_i~\cup D_{i+1}.\fname{MapToChildren}()$\label{line:new_roots}\;
    \parfor{$c \in D_{i+1}$\label{line:del_start}}{
        Delete $c$ from its neighbors' adjacency lists.\label{line:del_adj}\;
        Delete $c$ from its parent's child list.\label{line:del_child}
    }
    \parfor{$c \in D_{i+1}$}{
        $\fname{FreeCluster}(c)$
    }\label{line:del_end}
    $[M, NR] \gets \fname{ParallelMaximalMatching}(R_i)$\label{line:matching}\;
    \parfor{$(c_1,c_2) \in M$\label{line:merge_start}}{
        \lIf{$c_2 \in R_i$}{$c_2.\vname{parent} \gets \fname{NewCluster}()$}
        $c_1.\vname{parent} \gets c_2.\vname{parent}$\;
    }\label{line:merge_end}
    Populate the adjacency lists for parents of $R_i$. \label{line:new_adj}\;
    $D_{i+2} \gets D_{i+2}~\cup NR.\fname{MapToParents}()$\label{line:new_del}\;
    $R_{i+1} \gets R_i.\fname{MapToParents}()$ \label{line:next_roots}
}
\end{algorithm}

\myparagraph{Updating Adjacency Lists in Parallel}
The first step is to update the adjacency lists of the leaf clusters for endpoints of the updates (line~\ref{line:upd_adj}).
\revision{This is done by mapping each edge update to two directed edges and then using a parallel semisort~\cite{gu2015top} to group the directed edges by their endpoint.
Since the input to the topology tree has vertices with degree $\leq 3$, the number of updates incident to any vertex is at most $6$: deleting at most $3$ current incident edges and inserting at most $3$ new ones.}
A similar approach can be used to efficiently remove all deleted clusters from the adjacency lists of their neighbors (line~\ref{line:del_adj}), and to populate the adjacency lists for new clusters (line~\ref{line:new_adj}). 
The same approach handles updating the child lists of clusters (line~\ref{line:del_child}) since the fanout of clusters is  $\leq 2$ in topology trees.

\revision{\myparagraph{Low-Depth Ancestor Removal}}
During \revision{the sequential} reclustering step, whenever a root cluster merges with a non-root cluster, the ancestors of the non-root cluster's parent must be deleted.
Since this can happen at each of $O(\log n)$ levels, and the number of ancestors on a path is $O(\log n)$, proactively deleting the ancestors would result in $O(\log^2 n)$ depth overall.
Instead, we employ a lazy approach to deleting clusters.
The idea is to keep a set of clusters that should be deleted \revision{at the level just above the current level that is being reclustered.}
We use $R_i$ and $D_i$ to indicate the set of root clusters at level $i$ and the set of clusters to delete at level $i$ respectively.
At each level $i$, the algorithm maintains $R_i$, and $D_{i+1}$.
The set of updated leaf clusters will serve as $R_0$ (line~\ref{line:root0}).
Since all of the ancestors of the root clusters in $R_0$ should be deleted, $D_1$ is \revision{initialized as} the set of parents of clusters in $R_0$ (line~\ref{line:del1}).
Whenever a set of clusters \revision{is mapped to} its parents or its children \revision{with the \fname{MapToParents} or \fname{MapToChildren} function, we use a parallel map operation followed by a parallel remove duplicates operation~\cite{CLRS}.}

\revision{At each level $i$,} before reclustering the clusters in $R_i$, we store the parents of clusters in $D_{i+1}$ as $D_{i+2}$ \revision{(line~\ref{line:next_del})}.
Then the clusters in $D_{i+1}$ are all deleted, and their children are added to $R_{i}$ \revision{(line~\ref{line:new_roots})}.
Additionally, when a root cluster in $R_{i}$ merges with a non-root cluster, its grandparent (if any) must be added to $D_{i+2}$ \revision{(line~\ref{line:new_del}), which corresponds to removing the ancestors of that non-root cluster's parent in the sequential update algorithm. This is described in further detail in the following paragraphs.}

\revision{\myparagraph{Parallel Reclustering}}
\revision{The sequential topology tree reclustering algorithm uses a simple greedy algorithm to find a maximal matching of the root clusters and their non-root neighbors that don't already merge with another cluster.}
For batch updates, the number of such clusters at each level can be much larger than constant (we prove that it is $O(k)$ in \appref{Appendix~\ref{app:batch_update}}).
\revision{For degree $3$ clusters which can only merge with neighboring degree $1$ clusters, these can all be done naively in parallel since no other cluster is adjacent to a degree $1$ cluster.
The remainder of the clusters of interest form a collection of linked lists.}
\revision{To compute the maximal matching over these linked lists efficiently in parallel}, we apply parallel list ranking~\cite{blelloch2020optimal}, which takes $O(\log k)$ depth \whp{} and $O(k)$ worst-case work in the binary fork-join model.
\revision{Given the list ranking, each cluster at an even index can merge with its neighbor at the next index.}
%

\revision{\myparagraph{Putting It All Together}}
\revision{The main loop of the algorithm deletes} level $i+1$ clusters and reclusters the level $i$ root clusters at every level until the tree is complete and all clusters that should be deleted have been (line~\ref{line:batch_loop}).
First, the parents of clusters in $D_{i+1}$ must be stored in $D_{i+2}$ before deleting the clusters in $D_{i+1}$ (line~\ref{line:next_del}).
Next, the clusters whose level $i+1$ parent will be deleted in this round are added to $R_i$ (line~\ref{line:new_roots}).
Finally, the clusters in $D_{i+1}$ are deleted from the adjacency lists of all their neighbors, and the clusters are deleted (lines~\ref{line:del_start}--\ref{line:del_end}).

Then the algorithm computes a maximal set of merges between root clusters and their non-root neighbors that don't already combine with another cluster, (line~\ref{line:matching}).
In the pseudo-code, the $\mathsf{ParallelMaximalMatching}$ function returns $M$ and $NR$. 
$M$ is the set of merge pairs. 
If a merge with a non-root cluster is found, it is the second cluster in the pair. 
$NR$ is the set of parents of non-root clusters that were merged with.
New parent clusters are created, and parent pointers are set appropriately for each merge, and adjacency lists at level $i+1$ are populated for the new clusters (lines~\ref{line:merge_start}--\ref{line:new_adj}).
For any combination in the computed matching that involves a non-root cluster, the algorithm must remove all of the ancestors of the existing parent cluster, thus the parents of clusters in $NR$ are added to $D_{i+2}$ (line~\ref{line:new_del}).
The new parent clusters and the parents of any non-root clusters involved in a combination compose $R_{i+1}$, the set of root clusters at the next level (line~\ref{line:next_roots}).
Once there are no root clusters or clusters to delete, the batch update is complete.

\myparagraph{Analysis}
We prove the correctness of Algorithm~\ref{alg:batch_update_topology} and analyze its cost in \appref{\ref{app:batch_update}}. Our result is summarized by Theorem~\ref{thm:batch_topology_cost}.

\begin{restatable}{theorem}{topologybatchupdate}
\label{thm:batch_topology_cost}
    Batch-parallel updates in topology trees take $O(k \log(1 + n/k))$ expected work and $O(\log n \log k)$ depth \whp{}, where $n$ is the number of vertices and $k$ is the batch size.
\end{restatable}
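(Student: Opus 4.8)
The plan is to bound the work and depth of Algorithm~\ref{alg:batch_update_topology} by a level-by-level accounting, leveraging the balance property of topology trees (Theorem~\ref{thm:topology_tree_properties}) together with a charging argument for the number of root clusters and deleted clusters at each level. First I would establish the key structural claim (proved in the referenced appendix, which I may assume): at each level $i$, $|R_i| = O(k_i)$ and $|D_{i+1}| = O(k_i)$, where $k_i$ is the number of updates whose ancestor paths reach level $i$. Since each of the $k$ updates touches at most $2$ leaf clusters and contributes an ancestor path, summing $k_i$ over all levels telescopes; moreover the multiset of distinct clusters that are ever root clusters or deleted clusters, across all levels, has size $O(k \log(1+n/k))$ by the standard ``$k$ leaf-to-root paths in a balanced tree of size $n$'' bound (each path has length $O(\log n)$, but near the top the paths merge, so the total number of distinct nodes is $O(k \log(1+n/k))$).

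Next I would account for the work. The per-level operations are: (i) \fname{MapToParents}/\fname{MapToChildren}, each a parallel map plus remove-duplicates on a set of size $O(k_i)$, costing $O(k_i)$ expected work; (ii) deleting clusters from neighbor adjacency lists and parent child-lists — since topology-tree inputs and clusters have degree and fanout $\leq 3$, this is $O(1)$ work per deleted cluster, handled in aggregate by a semisort on $O(k_i)$ directed edges, so $O(k_i)$ expected work; (iii) \fname{ParallelMaximalMatching} on $R_i$, which after isolating the degree-$3$/degree-$1$ merges reduces to list ranking on a collection of linked lists of total size $O(k_i)$, costing $O(k_i)$ worst-case work; (iv) creating new clusters and populating their adjacency lists, again $O(k_i)$ via semisort. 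Summing over levels gives total expected work $\sum_i O(k_i) = O(k \log(1+n/k))$ using the path-counting bound above. I would also need to argue that the lazy deletion scheme correctly deletes exactly the clusters the sequential algorithm would delete and no cluster is deleted twice or leaked, so that the $O(k\log(1+n/k))$ bound on distinct touched clusters genuinely governs the work — this is where I'd lean on the correctness argument in the appendix.

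For the depth, observe the \textbf{while} loop runs for $O(\log n)$ iterations by Theorem~\ref{thm:topology_tree_properties} (the loop cannot persist past the height of the tree plus $O(1)$, since $D$ and $R$ sets can only climb one level per iteration and the tree has height $O(\log n)$). Each iteration performs a constant number of parallel primitives — semisort, remove-duplicates, list ranking, parallel maps — each of depth $O(\log k)$ \whp{} (semisort and list ranking in the binary-forking model), plus parallel hash-table operations of depth $O(\log n)$ \whp{}; taking the max, each iteration has depth $O(\log n)$ \whp{}, and over $O(\log n)$ iterations this yields $O(\log^2 n)$ — too weak. To get the claimed $O(\log n \log k)$ I would instead bound the adjacency-list updates using semisort/list-ranking-style structures rather than general hash tables (so each primitive is $O(\log k)$ depth), giving $O(\log k)$ depth per iteration and $O(\log n \log k)$ total. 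The main obstacle is exactly this depth bookkeeping: ensuring every per-level subroutine — in particular maintaining adjacency lists and child lists under batched insertions and deletions — can be realized with $O(\log k)$ depth per level rather than $O(\log n)$, and ensuring the lazy-deletion interleaving of $D_{i+1}$ with reclustering does not force extra sequential dependencies between levels. Handling the matching's list-ranking step cleanly (the linked lists are formed only by degree-$\le 2$ clusters, with degree-$3$ merges peeled off first) is the crux of both the work-efficiency and the depth argument.
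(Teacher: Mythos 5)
Your proposal is correct and follows essentially the same route as the paper: per-level work $O(|R_i|+|D_{i+1}|)=O(k)$ via the appendix lemma, with the $O(k\log(1+n/k))$ total obtained because above level $O(\log(1+n/k))$ the whole tree has only $O(k)$ clusters (your ``paths merge near the top'' counting is the same argument), and depth $O(\log k)$ per level over $O(\log n)$ levels. Your worry about $O(\log n)$-depth hash-table operations dissolves exactly as you suspect: topology-tree clusters have degree $\le 3$ and fanout $\le 2$, so all adjacency- and child-list maintenance reduces to a semisort of $O(k)$ constant-size update groups, which is how the paper gets $O(\log k)$ depth per iteration.
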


\subsection{Parallel Batch-Dynamic UFO Trees}
The algorithm for batch-parallel  updates in a UFO tree follows a similar basic progression as that of Algorithm~\ref{alg:batch_update_topology}.
Rather than re-explaining the algorithm from scratch, here we summarize the key differences and challenges for parallel batch-dynamic UFO trees.
\begin{enumerate}[label=(\arabic*),topsep=0pt,itemsep=0pt,parsep=0pt,leftmargin=15pt]
    \item \revision{The sizes of adjacency lists and child sets in a UFO tree are not necessarily $O(1)$.}
    \item \revision{The UFO tree update algorithm requires certain clusters along ancestor paths to not be deleted. Consequently, edge insertions or deletions performed at lower levels must be propagated to higher levels if both endpoints of the modified edge remain part of distinct clusters that were not deleted at those higher levels.}
    \item \revision{In the sequential UFO tree algorithm, the degree of any root cluster is bounded by a constant. This is no longer true in the batch setting.}
\end{enumerate}
\revision{Algorithm~\ref{alg:batch_update_ufo} shows the pseudo-code for our batch-parallel update algorithm in UFO trees, highlighting the differences from the topology tree algorithm.}

\begin{algorithm}[ht]
\small
\caption{\small $\fname{BatchUpdate}((u_1,v_1,\vname{del}_1),\ldots,(u_k,v_k,\vname{del}_k))$}\label{alg:batch_update_ufo}
Update adjacency lists of updated leaf clusters.\label{line:ufo_upd_adj}\;
$R_0 \gets$ Leaf clusters for $u_1,\hdots,u_k,v_1,\hdots,v_k$.\label{line:ufo_root0}\;
$D_1 \gets R_0.\fname{MapToParents}()$ \label{line:ufo_del1}\;
\colorbox{ufogreen}{$E^-_0 \gets \{(u,v,\vname{del}) \in U \mid \vname{del}=true\}$}\;
\colorbox{ufogreen}{$E^-_1 \gets E^-_0.\fname{MapToParents}()$}\;
\While{$|R_i| > 0~\text{or}~|D_{i+1}| > 0$\label{line:ufo_batch_loop}}{
    $D_{i+2} \gets D_{i+1}.\fname{MapToParents}()$ \label{line:ufo_next_del}\;
    $R_i \gets R_i~\cup D_{i+1}.\fname{MapToChildren}()$\label{line:ufo_new_roots}\;
    \colorbox{ufogreen}{Delete edges in $E^-_{i+1}$ from adjacency lists.}\label{line:ufo_delete_edges}\;
    \colorbox{ufogreen}{$E^-_{i+2} \gets E^-_{i+1}.\fname{MapToParents}()$}\;
    \parfor{$c \in D_{i+1}$\label{line:ufo_del_start}}{
        \If{\colorbox{ufogreen}{$c.\vname{degree} < 3$ and $c.\vname{fanout} < 3$}}{
            Delete $c$ from its neighbors' adjacency lists.\label{line:ufo_del_adj}\;
            Delete $c$ from its parent's child list.\label{line:ufo_del_child}
            \colorbox{ufogreen}{Add all edges incident to $c$ to $E^-_{i+1}$.}\label{line:ufo_update_del_edges}
        }
        \ElseIf{\colorbox{ufogreen}{$c.degree < 3$}}{
            \colorbox{ufogreen}{Disconnect $c$ from its parent and add $c$ to $R_{i+1}$.}
        }
    }
    \parfor{$c \in D_{i+1}$}{
        \If{\colorbox{ufogreen}{$c.\vname{degree} < 3$ and $c.\vname{fanout} < 3$}}{
            $\fname{FreeCluster}(c)$
        }
    }\label{line:ufo_del_end}
    \colorbox{ufogreen}{$[M, NR] \gets \fname{ParallelUFOMaximalMatching}(R_i)$}\label{line:ufo_matching}\;
    \parfor{$(c_1,c_2) \in M$\label{line:ufo_merge_start}}{
        \lIf{$c_2 \in R_i$}{$c_2.\vname{parent} \gets \fname{NewCluster}()$}
        $c_1.\vname{parent} \gets c_2.\vname{parent}$\;
    }\label{line:ufo_merge_end}
    Populate the adjacency lists for parents of $R_i$. \label{line:ufo_new_adj}\;
    $D_{i+2} \gets D_{i+2}~\cup NR.\fname{MapToParents}()$\label{line:ufo_new_del}\;
    $R_{i+1} \gets R_i.\fname{MapToParents}()$ \label{line:ufo_next_roots}
}
\end{algorithm}

\myparagraph{Challenge 1: \revision{Large} Adjacency Lists and Child Sets}
\revision{Our batch-parallel topology tree update algorithm relied on the fact that the sizes of adjacency lists and child sets were bounded by a constant to perform efficient parallel updates.}
Since there is no constant bound on the degree of clusters in UFO trees nor the fanout of clusters, we use a parallel hash table~\cite{gil1991towards} to represent the neighbors of each cluster and the children. This allows us to efficiently insert and delete batches of neighbors in low depth.
%
To batch update the adjacency lists at a level, we use parallel semisort~\cite{gu2015top} to group the updates by their endpoint, and then pass each group as a batch update to the neighbor set of that endpoint.
\revision{The same technique is used for updating child sets.}

\myparagraph{Challenge 2: \revision{Conditionally} Deleting Clusters}
\revision{The sequential UFO tree algorithm does not delete} clusters that are high degree or high fanout.
\revision{This is complicated by the fact that deleting or disconnecting some children of a high fanout cluster, can cause it to become low fanout and it should subsequently be deleted.}
With batch updates, a cluster that was previously very high degree can become low degree, or a very high fanout cluster can become low fanout.
Since we explicitly store \revision{and update} neighbor and child sets for each cluster, it is easy to determine the degree and fanout of a cluster.
\revision{Thus when checking whether to delete clusters in $D_{i+1}$ (lines~\ref{line:ufo_del_start}--\ref{line:ufo_del_end}), the algorithm can simply check the size of the neighbor set and the child set for each cluster.
Just like the sequential algorithm, any clusters that are not deleted but have degree $\leq 2$ are disconnected from their parent and added as root clusters at the next level.}

\revision{
To handle the case of clusters becoming low degree due to edge deletion updates, we must remove these deleted edges from every level of the tree before determining whether to delete the clusters at that level.
%
To do this in low depth in the batch-parallel setting, we employ a similar method as the low-depth ancestor removal in batch-parallel topology tree updates.
We maintain a set $E^-_i$ which represent the edges that need to be deleted at level $i$.
We maintain this set just for the level directly above the current reclustering level rather than propagating the updates up through all levels immediately.
The edges in $E^-_i$ must be deleted from level $i$ before the algorithm determines whether to delete level $i$ clusters to handle the case of high degree clusters becoming low degree (line~\ref{line:ufo_delete_edges}).
Additionally, whenever we delete a cluster, all of its incident edges are deleted, and thus added to $E^-_i$ (line~\ref{line:ufo_update_del_edges}).
}

\myparagraph{Challenge 3: \revision{Reclustering} High Degree Clusters}
For sequential UFO trees, we proved that any root cluster has degree $\leq 4$. This is not true in the batch algorithm: root clusters can have degree as high as $\Omega(k)$.
To recluster these \revision{high degree} root clusters, we iterate over their neighbors in parallel using the parallel hash table, and find all of their degree $1$ neighbors.
\revision{Just like for topology trees, these high degree root clusters can be handled independently in parallel, since the sets of clusters they must merge with are disjoint. The remaining clusters form a collection of linked lists for which a maximal matching is determined using list ranking.}
Although some root clusters can have degree $\Omega(k)$, we prove in \appref{Appendix~\ref{app:batch_update}} that the sum of degrees of root clusters at any level is $O(k)$. Thus, our algorithm is work efficient.

\revision{
\myparagraph{Analysis}
In \appref{Appendix~\ref{app:batch_update}} we analyze the correctness and the cost of the UFO tree batch-update algorithm in detail, resulting in Theorem~\ref{thm:batch_ufo_cost}.
}

\begin{restatable}{theorem}{ufobatchupdate}
\label{thm:batch_ufo_cost}
    Batch-parallel updates in a UFO tree take $O(\min \{k \log(1 + n/k), k\diam\})$ expected work and $O(\log n \log k)$ depth \whp{}, where $n$ is the number of vertices, $\diam$ is the input forest's diameter, and $k$ is the batch size.
\end{restatable}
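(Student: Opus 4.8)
The plan is to establish the three parts of the statement---correctness, work, and depth---by reducing to the sequential analysis (Theorems~\ref{thm:ufo_tree_properties}, \ref{thm:ufo_tree_diameter}, and~\ref{thm:ufo_tree_updates}) together with a level-by-level accounting. For correctness, I would first fix an arbitrary valid sequential order of the $k$ updates (which exists by the input assumption) and argue that Algorithm~\ref{alg:batch_update_ufo} produces exactly the UFO tree obtained by applying those updates one at a time in that order. The crux is that the main loop maintains the correct ``work set'' at each level: the set $R_i$ of level-$i$ root clusters, the lazily propagated set $D_{i+1}$ of clusters to delete, and the set $E^-_{i+1}$ of edges to delete. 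I would prove, by induction on the level $i$, the invariant that after processing level $i$: (a) every cluster that the sequential run would have destroyed at levels $\le i$ has been freed or merged away; (b) the surviving level-$i$ clusters together with their incident level-$i$ edges are exactly the level-$i$ contraction of the updated forest; and (c) the merges chosen at level $i$ satisfy the UFO invariants---every high-degree cluster absorbs all its degree-$1$ neighbors and the remaining degree-$1$/degree-$2$ clusters are matched maximally. Parts (a)/(b) reduce to checking that the \fname{MapToParents}-style propagation of $D$ and $E^-$, the grandparent insertions from non-root merges (line~\ref{line:ufo_new_del}), and the conditional deletion test (free a cluster only when both its degree and its fanout are below $3$) together coincide with the union over the sequential run of the clusters deleted at that level; this is precisely where Challenge~2 (a high-degree cluster becoming low-degree via $E^-$, or a high-fanout cluster becoming low-fanout via child removal, and only then becoming deletable) must be handled carefully. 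Part (c) follows because \fname{ParallelUFOMaximalMatching} treats the high-degree root clusters independently in parallel (their degree-$1$ neighbor sets are pairwise disjoint) and runs list ranking on the remaining degree-$1$/degree-$2$ linked lists, mirroring the sequential greedy choice up to irrelevant tie-breaking.

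For the work bound, I would show the total work at level $i$ is $O(d_i + |D_{i+1}| + |E^-_{i+1}|)$ in expectation, where $d_i$ is the sum of degrees of the level-$i$ root clusters: adjacency- and child-set updates are performed by grouping directed edges with a parallel semisort and passing each group as a batch to a parallel hash table, both work-linear in expectation; the matching is computed by list ranking, which is work-linear. It then remains to bound $\sum_i(d_i + |D_{i+1}| + |E^-_{i+1}|)$. By a charging argument---each of the $k$ updates is responsible for only $O(1)$ root clusters and $O(1)$ incident degree per level, and each freed cluster or deleted edge is charged to the update whose ancestor path (or propagated $E^-$ set) introduced it---I would prove $d_i + |D_{i+1}| + |E^-_{i+1}| = O(k)$ at every level, which is exactly the sequential per-level $O(1)$ bound scaled by $k$. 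Combined with the height bound $\lceil \diam/2 \rceil$ of Theorem~\ref{thm:ufo_tree_diameter} (plus $O(1)$ further levels for the lazy sets to die out), this gives $O(k\diam)$ work. For the $O(k\log(1+n/k))$ bound I would additionally use the complementary estimate that the number of level-$i$ clusters touched, and their incident edges, is at most the total count of level-$i$ clusters/edges, which is $O(n/c^i)$ for a constant $c>1$ by the geometric-decrease property underlying Theorem~\ref{thm:ufo_tree_properties}; then $\sum_{i\ge 0}\min\{k,\,n/c^i\}$ splits at $i^\star=\log_c(n/k)$ into $i^\star$ terms of size $O(k)$ and a geometric tail summing to $O(k)$, i.e. $O(k\log(1+n/k))$. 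Since both bounds hold for the same run, its work is $O(\min\{k\log(1+n/k),\,k\diam\})$.

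For the depth bound, there are $O(\log n)$ levels by Theorem~\ref{thm:ufo_tree_properties} (plus $O(1)$ for the tails of the lazy sets). Within a level, every primitive used---the semisort grouping edge and child updates, the parallel hash-table batch insertions and deletions on neighbor and child sets, the parallel list ranking for the matching, and the duplicate removal inside \fname{MapToParents}/\fname{MapToChildren}---runs in $O(\log k)$ depth \whp{}, since all of these operate on batches of size $O(k)$ at every level. Composing the $O(\log k)$-depth phases within a level and running the $O(\log n)$ levels sequentially, and taking a union bound over the polylogarithmically many randomized invocations, yields $O(\log n \log k)$ depth \whp{}.

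The hardest part will be correctness invariant (a)/(b) interacting with conditional deletion: showing that deferring the deletion of high-degree and high-fanout clusters, propagating edge deletions level by level through $E^-_i$, and re-testing deletability only after those propagations, together reconstruct exactly the clusters a valid sequential order would delete while never producing an invalid merge at any level. A close second is the $d_i + |D_{i+1}| + |E^-_{i+1}| = O(k)$ charging argument, since a single high-degree root cluster contributes a large amount to $d_i$, so one must argue its incident level-$i$ edges are all ``paid for'' by updates near it in the original forest, generalizing the sequential lemma that every root cluster has degree at most $4$.
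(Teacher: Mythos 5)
Your overall architecture---correctness by level-by-level induction against a sequential order, work bounded per level by $O(k)$ via the sequential $O(1)$-per-level lemmas plus a degree-charging argument, and the threshold split $\sum_i \min\{k, n/c^i\}$ for the $O(k\log(1+n/k))$ bound---matches the paper's proof. However, your depth analysis has a genuine gap. You assert that the parallel hash-table batch insertions and deletions on neighbor and child sets run in $O(\log k)$ depth \whp{} ``since all of these operate on batches of size $O(k)$.'' That is not what the hash-table primitive guarantees: a batch of $k$ updates to a table with $y$ elements takes $O(\log y)$ depth, where $y$ is the current size of the set plus the batch size. In a UFO tree a neighbor set can have size $\Theta(n)$ (e.g., the center of a star), and a high-fanout cluster's child set can likewise be very large, so a single per-level phase can cost $O(\log n)$ depth and the naive composition over $O(\log n)$ levels gives only $O(\log^2 n)$, not the claimed $O(\log n \log k)$.

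The missing idea is a deferral argument for updates to large sets. The paper observes that whenever the algorithm would touch a neighbor set of size $\omega(k)$, the owning cluster cannot change status in a way that matters at the current level: insertions into a large neighbor set only occur for a cluster that is already high degree and stays high degree (so its parent is not deleted, it never becomes a root cluster, and its neighbor list is never iterated over this round); deletions of at most $O(k)$ elements from a set of size $\Omega(k)$ likewise cannot make the cluster low degree. Hence those batch updates can be postponed (or run in the background) without affecting the rest of the reclustering, and the only set operations performed on the critical path act on sets of size $O(k)$, giving $O(\log k)$ depth per phase. Child-set updates are deferred to the end of the whole recluster, since a parent's children can be recovered from the neighbor list of any one child. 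Without this deferral step your depth bound does not follow; with it, the rest of your argument goes through as in the paper.
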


\section{Experimental Evaluation}
All of the experiments presented in this paper were run on a 96-core machine (192 hyper-threads) with 4 $\times$ 2.1 GHz Intel Xeon(R) Platinum 8160 CPUs (each with 33MiB L3 cache) and 1.5TB of main memory.
Our implementations are all written in C++ and compiled with -O3 optimization.
Our parallel implementations use ParlayLib~\cite{parlaylib}, a C++ library for shared memory parallel programming. We use Microsoft's mimalloc~\cite{leijen2019mimalloc} in combination with the parallel allocator from ParlayLib for parallel memory allocation. 
We provide details about our implementations in \appref{Appendix~\ref{app:experiments}}.

\myparagraph{Inputs}
To compare performance on a wide range of trees with different attributes, we test our implementations on several types of synthetic input trees, including path graphs, perfect binary trees, perfect k-ary trees, stars, and dandelions.
We also test on randomly generated degree 3, unbounded degree, and preferential attachment trees.

We also test our implementations on spanning trees of several real-world graphs, \revision{summarized in Table~\ref{tab:input_graphs}.}
For each graph, we test both a {\em breadth-first spanning forest (BFS)} starting at a random vertex, and a {\em random incremental spanning forest (RIS)} generated by inserting the edges of the graph in a random order and only including edges that are not connected in the current spanning forest.

\subsection{Sequential Performance}
We compare UFO trees against our new implementations of topology and RC trees, both of which use ternarization only when required. 
We also compare against existing implementations of link-cut trees~\cite{tseng2019batch}, splay top trees~\cite{holm2023SplayTopTrees}, and Euler tour trees (implemented with treaps, splay trees, and skip lists)~\cite{tseng2019batch}.
We used our own sequential RC tree implementation because we found it to be significantly faster than running the recent parallel RC tree implementation~\cite{ikram2025parallel} with batch size $k=1$ on a single thread.

\begin{table}[ht]
   \revision{
   \centering
   \footnotesize
   \caption{\small The real-world graph datasets used in our experiments. }
   \begin{tabular}{l l l c c c}
       \toprule
       Name & Abbrv. & Type & $|V|$ & $|E|$ & Cite \\
       \midrule
       USA Roads & USA & Road & 23.95M & 28.85M & \cite{roadgraphs} \\
       ENWiki & ENW & Web & 4.21M & 91.94M & \cite{boldi2004webgraph} \\
       StackOverflow & SO & Temporal & 6.02M & 28.18M & \cite{paranjape2017motifs} \\
       Twitter & TWIT & Social & 41.65M & 1.20B & \cite{kwak2010what} \\
       \bottomrule
   \end{tabular}
   \label{tab:input_graphs}
   }
\end{table}

\begin{figure*}[ht]
    \centering
    \includegraphics[width=0.93\linewidth]{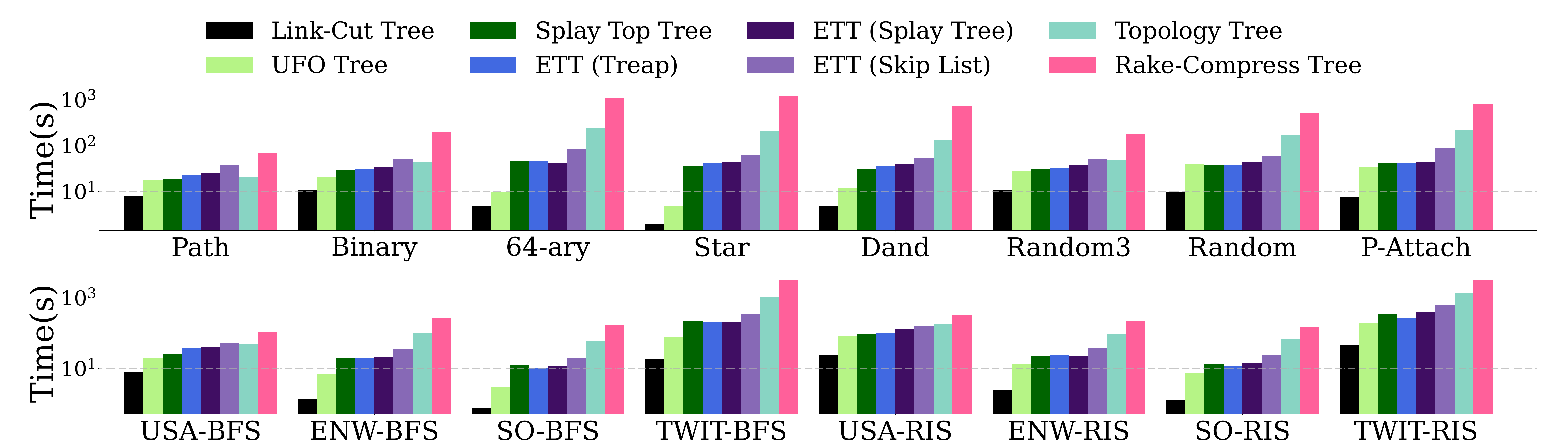}
    \caption{\small The results of our \textbf{sequential} dynamic tree data structure update speed experiments. The top row shows results on a variety of synthetic trees with $n=10^7$. The bottom row shows results on breadth-first search forests and random incremental spanning forests of our real-world graph datasets. We report the total time for inserting all edges and then deleting all edges, both in a random order.}
    \label{fig:seq_update_speed}
\end{figure*}

\myparagraph{Update Speed}
We study update speed by measuring the total time taken to build and delete an entire tree, i.e., for a tree with $10^7-1$ edges, to perform $10^7-1$ insert followed by $10^7-1$ deletions (both in a random order).
We run on all of our aforementioned synthetic trees with $n=10^7$, and BFS/RIS trees on real-world graphs. Figure~\ref{fig:seq_update_speed} shows the results.

We find that UFO trees perform better than all other data structures besides link cut tree on nearly every input.
We see that UFO trees are on average (geometric mean) only $3.23\times$ slower than link-cut trees (at most $5.7\times$ slower).
The next best data structure (splay top trees) can be as much as $18\times$ slower than link-cut trees ($6.11\times$ slower on average).
We also find that both UFO trees and link-cut trees are significantly faster than all other implementations on low-diameter inputs (e.g., star, 64-ary, and low-diameter real-world trees such as SO-BFS).
Compared to splay top trees, which support the same set of queries as UFO trees but are not obviously parallelizable, we find that UFO trees are $1.9\times$ faster on average.
RC trees, which are parallelizable and also support the same set of queries, always perform much slower than the other dynamic trees implementations, and are 24.78$\times$ slower than UFO trees on average.
In summary, we find that UFO trees perform on par with dynamic tree implementations that only support specialized queries i.e. link-cut trees and Euler tour trees, while supporting a much broader set of queries.

\begin{figure*}[ht]
    \centering
    \includegraphics[width=0.93\linewidth]{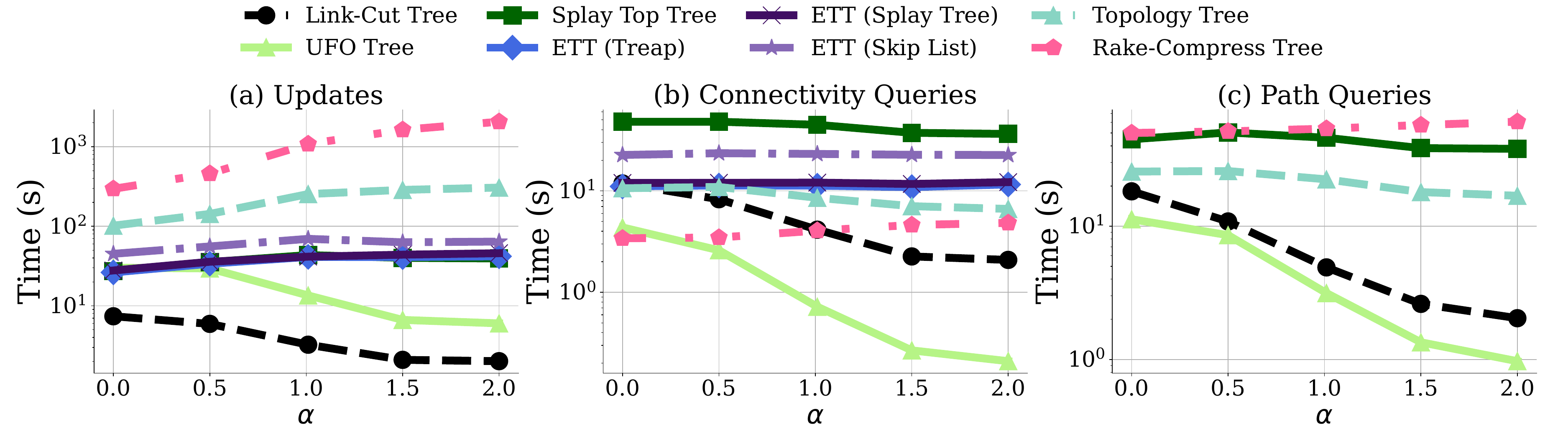}
    \caption{\small The results of our \textbf{diameter sweep} experiment. As $\alpha$ increases the diameter tends to decrease. Subplot (a) shows the results for total update time with $n=10^7$. Subplots (b) and (c) show the results for connectivity queries and path queries respectively. We report the total running time for $q=10^6$ queries on a full tree with $n=10^6$.}
    \label{fig:seqdiamsweep}
\end{figure*}

\myparagraph{Diameter Scaling for Updates and Queries}
To study whether the asymptotic diameter-based bounds we prove for link-cut and UFO trees in this paper yield improvements on low-diameter inputs in practice, we run an experiment that generates trees with varying (decreasing) diameter and study performance as a function of the diameter.
We generate trees by having node $i$ pick a target in $[0, i)$ based on the Zipf distribution with parameter $\alpha$, and then randomly permuting the node IDs.
As $\alpha$ increases the diameter tends to decrease.

The results for updates, connectivity queries, and path queries are shown in Figure~\ref{fig:seqdiamsweep}.
Our results show that both link-cut trees and UFO trees gracefully increase in speed with decreasing input diameter, for updates and two different types of queries.
Furthermore, all other implementations maintain relatively stable running times as the tree diameter decreases, or become slower.
We observe that updates in the topology tree and RC tree implementations become significantly slower on lower diameter inputs. 
This is due to ternarization overheads that become more costly on inputs with many high degree nodes (which is necessary for low diameter).
Interestingly, UFO trees obtain the best performance for both path and connectivity queries, even out-performing link-cut trees, likely due to the fact that link-cut trees require mutating the tree for queries, whereas UFO trees do not.

\begin{figure*}[ht]
    \centering
    \includegraphics[width=0.93\linewidth]{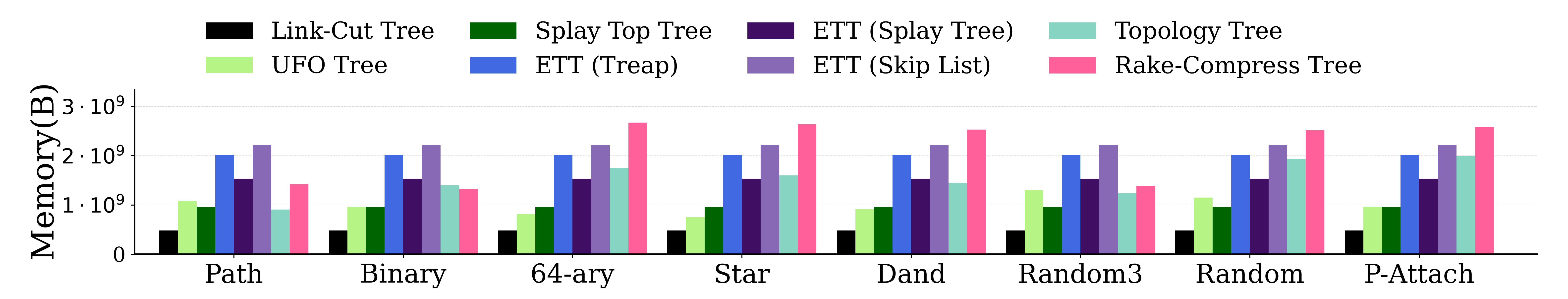}
    \caption{
    \revision{
    \small The results of our \textbf{memory usage} experiments for dynamic tree data structures. We measure memory usage on a variety of synthetic trees with $n=10^7$. We report the total bytes used for each data structure after inserting all of the edges.
    }
    }
    \label{fig:memory_usage}
\end{figure*}

\subsection{Memory Usage}
To compare the memory usage of different data structures, we measure the amount of memory used after building a full $n$ vertex tree on $n-1$ edges.
We run this experiment using $n=10^7$ and report our results in \revision{Figure~\ref{fig:memory_usage}}.
%
%
We found that tree contraction based data structures use less memory than ETTs when no ternarization is required. When ternarization is required, RC trees and topology trees use significantly more memory. UFO trees are the most space efficient tree contraction based data structure and also see a significant decrease in memory usage for low diameter inputs.
Link-cut trees consistently use the least memory in our experiments due to only using exactly $n$ tree nodes.

%



\begin{figure*}[ht]
    \centering
    \includegraphics[width=0.95\linewidth]{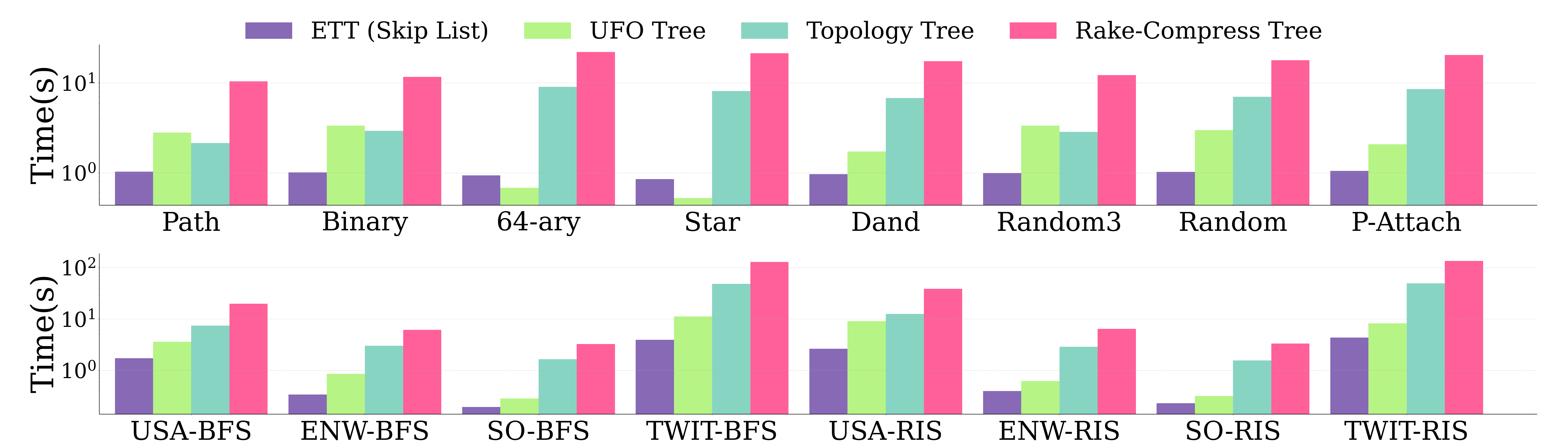}
    \caption{\small The results of our \textbf{parallel} dynamic tree data structure update speed experiments. All experiments use a fixed batch size of $k=10^6$. The top row shows results on a variety of synthetic trees with $n=10^7$. The bottom row shows results on breadth-first search forests and random incremental spanning forests of our real-world graph datasets. We report the total time for inserting all batches of edges and then deleting all batches edges, both in a random order.}
    \label{fig:par_update_speed}
\end{figure*}

\subsection{Parallel Batch-Dynamic Update Performance}
Next, we measure the performance of our new implementations of parallel batch-dynamic UFO trees and topology trees.
We compare against the state-of-the-art implementations of parallel Euler tour trees~\cite{tseng2019batch} and parallel rake-compress trees~\cite{ikram2025parallel}.
We use the batch-dynamic ternarization implementation from~\cite{ikram2025parallel} in both their implementation of RC trees and our new topology tree implementation.
\revision{We do not compare against link-cut trees or splay top trees in this section as to our knowledge no efficient batch-parallel algorithms or implementations of these data structures exist.}
We use the same synthetic and real-world inputs as the sequential update speed experiment, but we partition each input into update batches of size $k=10^6$.
\revision{We generate batches by randomly permuting the edges of the input tree, and partitioning the edges into groups of a fixed batch size. We first insert all of the edges in batches and then delete all of the edges in batches.}
Figure~\ref{fig:par_update_speed} shows the results.

We observe that in the parallel batch-dynamic setting, ETTs are the fastest implementation on average across our inputs.
The main reason for the impressive speed of ETTs is that the parallel algorithm is significantly simpler than algorithms based on tree contraction, and is very friendly for parallel and phase-concurrent settings. In particular, the update algorithm only runs a small constant number of phases of concurrent operations, and all nodes are pre-allocated hence updates require no additional memory allocation.

Despite being more complex, UFO trees are not far behind ETTs and are able to support a much larger set of queries.
UFO trees are only $1.95\times$ slower than ETTs on average, and can be up to $1.62\times$ faster on low diameter inputs.
We find that topology trees also perform well on some inputs, but suffer from ternarization overheads on inputs which require it.
Finally, RC trees do not perform as well on any of the inputs and also further suffer from ternarization overheads on inputs which require it.
RC trees are $8.89\times$ slower than UFO trees on average and upto $40\times$ slower in the worst case. 
We also show diameter scaling results for our parallel implementations in \appref{Appendix~\ref{app:results}}.




\begin{figure}[ht]
    \centering
    \includegraphics[width=\linewidth]{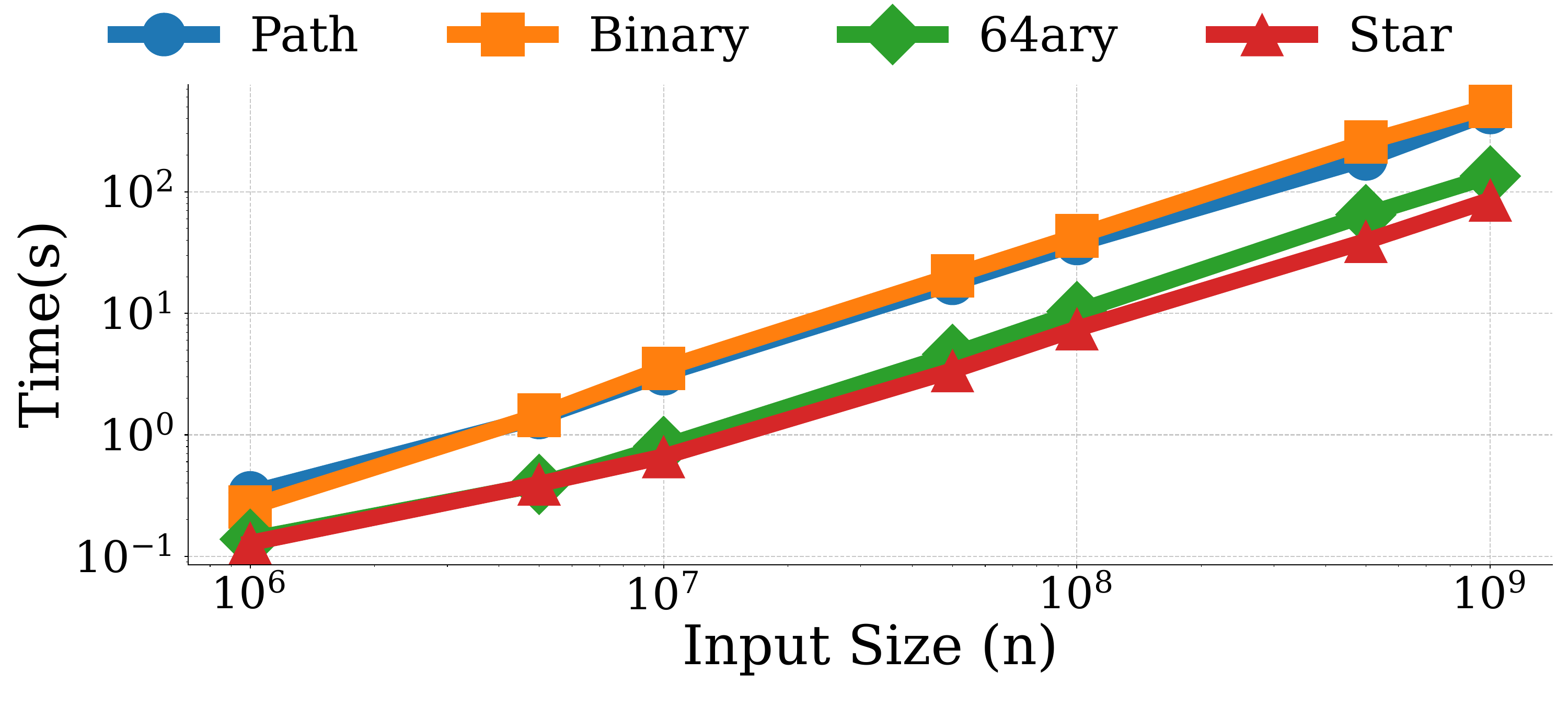}
    \caption{\small The total time for building (and destroying) UFO trees with a batch size of $k=10^6$ with varying $n$ on synthetic inputs.}
    \label{fig:scaling}
\end{figure}

\myparagraph{Scaling to Massive Inputs}
Lastly, we present the first experiments for dynamic trees on billion scale inputs and show how the performance of parallel UFO trees scales as we increase the input size.
We report the total time to insert $n-1$ edges and delete $n-1$ edges in the tree using a fixed batch size of $k=1M$ and a variable $n$.
We show the results in Figure~\ref{fig:scaling}.
All of these workloads finished within a few hundred seconds.
We note that during our experiments, only UFO trees were able to run on all input sizes on our machine.
Other implementations, like parallel ETTs segfaulted on the billion-scale input, or in the case of parallel RC trees ran out of memory on our machine.
\revision{

\section{Conclusion}

We have presented UFO trees, a new parallel batch-dynamic 
tree that supports all known query functionality for dynamic
trees.
We show that our algorithms are highly work-efficient in both theory and in practice,
and obtain asymptotically better performance than most existing 
algorithms when the diameter of the input tree is shallow.
Our experimental results show that UFO trees can easily scale to billion-scale
inputs and are the fastest dynamic tree
implementation that supports a wide range of query functionality.
In future work, we plan to study tree compression
techniques to design even more space-efficient implementations of dynamic trees.

}

\clearpage

\begin{acks}
This work is supported by NSF grants CCF-2403235 and CNS-2317194. We thank the anonymous reviewers for their useful comments.
\end{acks}

\bibliographystyle{ACM-Reference-Format}
\bibliography{references}

\clearpage

\begin{appendix}
\section{Additional Related Work}

\subsection{Ternarization}\label{app:ternarization}

Since topology trees and rake-compress trees are defined only for constant-degree trees, \defn{ternarization} must be applied to the input tree before building a dynamic tree over it.
Ternarization is a standard technique that transforms a tree of arbitrary degree into a tree with maximum degree $\leq 3$.
An example is shown in Figure~\ref{fig:ternarization}.

When ternarization is applied to an input tree $T$, each vertex $v$ with degree $d(v) > 3$ is replaced by a path of $d(v)$ vertices, each of degree $\leq 3$ and adjacent to the vertices that the original vertex was adjacent to. This path is referred to as the {\it ternarized path} of $v$. 
For a tree with $n$ vertices and $n - 1$ edges, there are at most $n/d$ vertices with degree $d$. Since ternarization will convert each of these vertices to a vertex of degree $1$ adjacent to a path of length $d$ (with an additional $d$ vertices), at most $d \times \frac{n}{d} = n$ vertices will be added to the tree, resulting in a tree with at most $2n$ vertices and $2n - 1$ edges. This argument can be applied to each tree in a forest, leading to linear bounds on the size of the ternarized tree. 

In the case of topology trees, updates and queries are handled as follows: Adding an edge via a link is as simple as adding another vertex to the path created by ternarizing that vertex. Deleting an edge via a cut can be done as follows. Suppose in the the ternarized path of a vertex $u$, the vertices $u_1, u_2,$ and $u_3$ are connected to each other via 1 edge and connected to each other via fake edges and to actual neighbors of $u$ via real edges (it should be noted that $u$ could be connected to some vertex $v$ that is also ternarized, in which case $u_i$ is connected to some $v_j$ on the ternarized paths of both vertices). If we cut the real edge associated with $u_2$, we remove it from the path by disconnecting it from $u_1$ and $u_3$ and then connecting $u_1$ with $u_3$.

Handling queries depends on the specific query type. Connectivity queries, for instance, require no additional modifications. However, queries such as the sum of weights on a path or maximum weight queries require the edges on the ternarized path to be assigned a suitable $\perp$ element as a weight. These specifications must be provided by the user when constructing the topology tree.

We note that this mapping can also be efficiently maintained in the batch-dynamic setting~\cite{ikram2025parallel}.
Ternarization does not affect the asymptotic costs of the data structure, however, it can lead to high overheads in both space and time in practice.

\begin{figure}
    \centering
    \includegraphics[width=0.47\textwidth]{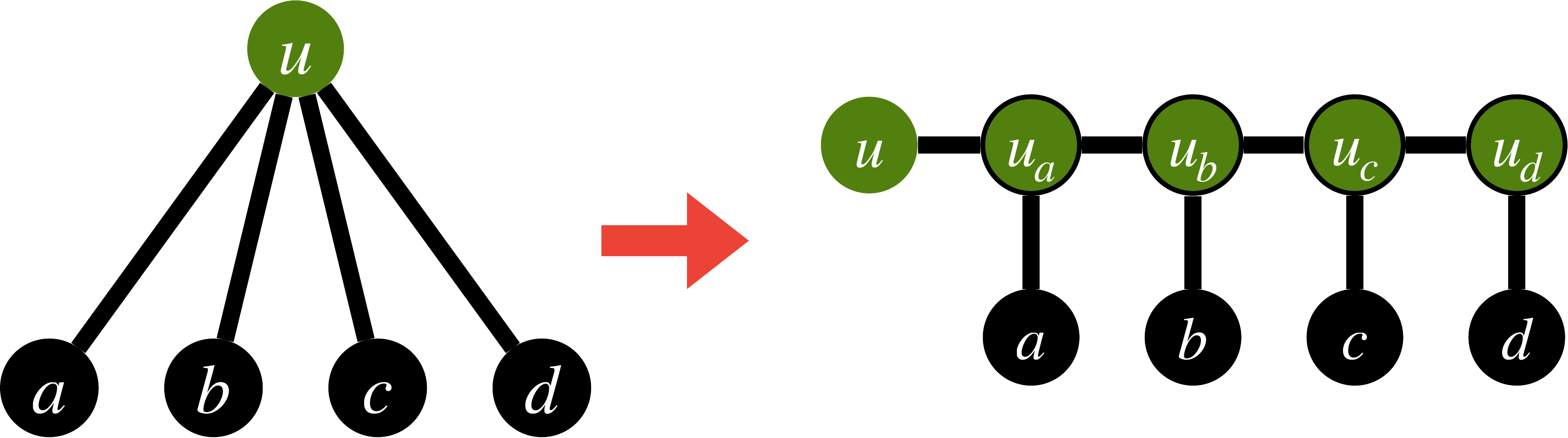}
    \caption{\small An illustration of some vertex in a tree being ternarized.}
    \label{fig:ternarization}
\end{figure}

\section{Detailed Analysis}

\subsection{New Analysis of Link-Cut Trees} \label{app:link_cut_tree}

Here we prove Theorem~\ref{thm:link_cut_diameter}, which states that the cost of link-cut tree operations is upper bounded by the diameter squared of the input tree.
We prove this for the original link-cut tree data structure~\cite{sleator1983data}, which uses splay trees and was previously proven to have $O(\log n)$ amortized cost per operation. Our proof should also apply to other link-cut tree variants, such as the variant with worst-case $O(\log n)$ cost per operation~\cite{sleator1983data}.
We assume the reader is familiar with the details of this data structure, and refer to the original paper otherwise.
Combining this new result with these prior results implies a cost bound of $O(\min\{\log n,\diam^2\})$ amortized or worst-case respectively for link-cut tree operations.

\begin{theorem}\label{thm:link_cut_diameter}
    Operations in link-cut trees take $O(\diam^2)$ worst-case time where $\diam$ is the diameter of the input tree.
\end{theorem}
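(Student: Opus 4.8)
The plan is to exploit two structural facts about link-cut trees: (i) every auxiliary (splay) tree represents a path in the represented forest that descends through preferred children, hence is monotone in depth, so it contains at most $\diam + 1$ vertices; and (ii) every supported operation (\texttt{link}, \texttt{cut}, \texttt{findroot}, \texttt{lca}, and path-aggregate queries) reduces to a constant number of \texttt{access} calls plus $O(1)$ extra splays and a constant number of leftmost/rightmost descents in auxiliary trees. Given these, it suffices to bound the worst-case cost of a single \texttt{access}. Throughout I interpret $\diam$ as (an upper bound on) the diameter of the represented forest at the time the operation is performed, so that every simple path in the forest — in particular every root-to-vertex path — has at most $\diam$ edges.

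First I would recall the mechanics of \texttt{access(v)}: it splays $v$ in its own auxiliary tree, then repeatedly lets $w$ be the path-parent of the current auxiliary-tree root, splays $w$, and splices the two auxiliary trees, until it reaches the root of $v$'s tree in the represented forest. The number of iterations equals the number of preferred paths met by the path from $v$ up to the root, which is at most one more than the number of edges on that path, i.e. at most $\mathrm{depth}(v) + 1$. Since the $v$-to-root path is a simple path in the represented forest, $\mathrm{depth}(v) \le \diam$, so \texttt{access} performs $O(\diam)$ splays.

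Next I would bound the cost of each individual splay. The key invariant is that at all times, including during an \texttt{access} (splicing two descending paths at a common vertex yields another descending path), every preferred path descends through preferred children, so it is monotone in depth and has at most $\diam + 1$ vertices; hence every auxiliary tree has size $O(\diam)$. A single splay of a node at depth $d$ in a binary search tree costs $O(1 + d)$ in the worst case, and $d \le m - 1$ for a tree of $m$ nodes, so each splay here costs $O(\diam)$. Combining with the previous paragraph, one \texttt{access} costs $O(\diam) \cdot O(\diam) = O(\diam^2)$ worst case; and then \texttt{link}, \texttt{cut}, \texttt{findroot}, \texttt{lca}, and path queries, being $O(1)$ accesses plus $O(1)$ further splays and $O(\diam)$-length descents, all run in $O(\diam^2)$ worst-case time, proving Theorem~\ref{thm:link_cut_diameter}.

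The step requiring the most care is the invariant in the third paragraph — that every auxiliary tree stays of size $O(\diam)$ at every intermediate moment of an operation, in particular that the splice step never creates an auxiliary tree encoding a non-descending (and hence potentially super-$\diam$-length) path. The entire $O(\diam^2)$ bound rests on this. I also expect a minor subtlety in pinning down exactly which quantity $\diam$ should be (the static input diameter versus the running maximum over the update sequence); for the worst-case-per-operation statement it is cleanest to take $\diam$ to dominate the current forest's diameter. Notably, this argument deliberately sidesteps the heavy-light / telescoping-potential machinery needed for the classical $O(\log n)$ amortized bound, trading it for the cruder but diameter-sensitive estimate, and a parallel version of the same reasoning applies verbatim to the worst-case $O(\log n)$ link-cut tree variant.
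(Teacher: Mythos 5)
Your proof is correct and follows essentially the same route as the paper's: both arguments bound the cost of a single expose/access by observing that (i) each preferred path (auxiliary splay tree) has at most $\diam+1$ vertices, so each splay/splice/concatenate/split costs $O(\diam)$ worst case, and (ii) the number of such sub-operations per expose is at most $\diam$ because the exposed path has length at most $\diam$. Your version is somewhat more careful about the invariant that auxiliary trees remain small at every intermediate step, but the decomposition and the key counting are identical to the paper's.
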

\begin{proof}
    The cost of all operations in link-cut trees is proportional to the cost of the expose operations, plus the cost of a concatenate or split operation.
    Each concatenate or split operation takes worst-case $O(\diam)$ time, because there are at most $\diam$ elements in any path, so the splay tree operation takes worst-case $O(\diam)$ time.
    For the same reason, a splice operation takes worst-case $O(\diam)$ time.
    An expose operation may call splice, concatenate, or split at most $\diam$ times, since the length of the path being exposed is at most $\diam$.
    Thus expose takes $O(\diam^2)$ worst-case time.
\end{proof}

%

\subsection{Proof of Theorem~\ref{thm:topology_tree_properties}}\label{app:topology_efficiency_proof}

\begin{lemma}\label{lem:topology_const_frac}
    At any level $l>0$ of a topology tree, the number of clusters is at most $5/6$ times the number of clusters at level $l-1$.
\end{lemma}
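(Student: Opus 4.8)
The plan is to show that a constant fraction of the level-$(l-1)$ clusters must be merged together in forming level $l$, so that the number of clusters drops by a constant factor. Let $F$ be the forest of level-$(l-1)$ clusters (with level-$(l-1)$ edges). I will classify the clusters of $F$ by degree: let $n_0, n_1, n_2, n_{\geq 3}$ be the number of clusters of degree $0$, degree $1$, degree $2$, and degree $\geq 3$, respectively, and let $m = n_0 + n_1 + n_2 + n_{\geq 3}$ be the total. Degree-$0$ clusters are isolated: each such cluster is its own tree (a single vertex of the contraction forest) and persists unchanged to level $l$. The key is to lower-bound how many merges occur among the clusters of positive degree.

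First I would handle the accounting for non-trivial trees (those with at least one edge). In any tree with $t \geq 2$ nodes, the number of leaves (degree-$1$ nodes) plus the number of degree-$\geq 3$ nodes, counted appropriately, dominates: more precisely, in a tree the number of leaves is at least $2 + \sum_{v:\deg v \geq 3}(\deg v - 2) \geq 2 + n_{\geq 3}'$ where the primes denote counts restricted to that tree. Summing over all non-trivial trees gives $n_1 \geq 2(\#\text{non-trivial trees}) + n_{\geq 3}$. Next, since each round performs a \emph{maximal} independent set of the four allowed merge types, I would argue that no node of positive degree can be "idle": every degree-$1$ cluster either gets merged, or its unique neighbor is already spoken for (merged with someone else); every degree-$2$ or degree-$3$ cluster that is not merged has all of its eligible neighbors already merged. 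A standard charging argument then shows that at least a constant fraction — I expect something like $\tfrac{1}{3}$ or $\tfrac{1}{4}$ — of the positive-degree clusters participate in a merge, and since each merge combines two clusters into one, the number of clusters among the positive-degree part shrinks by a factor of at least (merged clusters)/2. Combining with the fact that degree-$0$ clusters are a vanishing nuisance (they only get better as contraction proceeds, and one can show the fraction that are degree-$0$ cannot be too large because their neighbors at the previous level had positive degree), I would push the overall bound to $5/6$.

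Concretely, the cleanest route to the precise constant $5/6$ is probably: show that in each non-trivial tree of $F$ with $t$ nodes, the maximal matching of allowed merges saturates at least $t/6$ of the nodes — equivalently removes at least $t/6$ clusters when we pass to level $l$. This can be checked by a careful case analysis on the tree structure (paths, stars, and general trees) using the leaf-count inequality above, observing that the four allowed merge types are exactly rich enough that a maximal such matching in a path of length $t$ already saturates roughly $2t/3$ nodes, and that branch vertices only help. Then the number of level-$l$ clusters is at most $m - (\#\text{merged nodes})/2 \leq m - \tfrac{1}{2}\cdot\tfrac{1}{3}(m - n_0) \leq \tfrac{5}{6}m$ once we also bound $n_0$; I'd finish by an inductive/amortized argument (or simply absorb degree-$0$ clusters into the previous level's count, since a cluster can only become degree-$0$ once and then stays that way).

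The main obstacle I anticipate is the bookkeeping around degree-$0$ clusters and, more subtly, degree-$2$ clusters that sit between two neighbors both of which get matched elsewhere: the \emph{maximality} of the matching must be used carefully to rule out too many simultaneously-unmatched degree-$2$ clusters, since a long alternating pattern along a path could in principle leave many nodes unmatched if the matching were chosen adversarially. I expect the resolution is that any allowed maximal matching on a path of length $t$ necessarily saturates at least $\lceil 2t/3 \rceil$ of its nodes (between any two consecutive matched pairs there is at most one unmatched node, because two consecutive unmatched degree-$\leq 2$ nodes could themselves be merged, contradicting maximality) — and turning that local observation into the clean global $5/6$ bound, while correctly handling the boundary effects of leaves and high-degree branch points, is where the real work lies.
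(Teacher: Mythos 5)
Your overall strategy is the same as the paper's: classify the level-$(l-1)$ clusters by degree, use the tree inequality relating leaves to degree-$3$ nodes, and invoke maximality to argue that adjacent unmatched low-degree clusters cannot coexist, so that runs of degree-$2$ clusters are well saturated. But there are two problems. First, a factor-of-two slip: saturating $t/6$ of the nodes removes only $t/12$ clusters (each merge saturates two clusters but removes one), so to remove $t/6$ clusters and reach the $5/6$ bound the matching must saturate at least $t/3$ of the nodes --- which is what your later displayed inequality actually uses, but as literally stated your per-tree claim would only give $11/12$. Second, and more substantively, the assertion that a constant fraction (a third) of the clusters is saturated is itself the entire content of the lemma, and your sketch does not establish it: the $2t/3$ path-saturation bound holds for free-standing paths whose endpoints are leaves, but the hard case is a chain of degree-$2$ clusters wedged between two degree-$3$ branch points, whose endpoints can be left unmatched because their only eligible neighbors are matched elsewhere; in the extreme, a single degree-$2$ cluster between two degree-$3$ clusters participates in no merge at all, and you need to show such clusters cannot dominate.

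The paper closes exactly this hole with a counting argument you would need to reproduce. It collapses every maximal chain of degree-$1$/degree-$2$ clusters to an edge or a leaf, obtaining a tree of $N'$ nodes with only leaves and degree-$3$ nodes, in which at least $N'/2+1$ nodes are leaves; since (outside the degenerate $4$-node star) at most two leaves share a degree-$3$ neighbor, maximality forces at least half the leaves to merge, giving $N'/4$ merges. It then reintroduces the chains: chains of length $\geq 2$ retain at least a $1/3$ participation ratio by maximality, and the zero-participation chains (single degree-$2$ vertices between branch points) number fewer than the degree-$3$ nodes, hence fewer than the leaves, hence at most a third of the collapsed tree, which is what lets the $1/6$ merge-to-cluster ratio survive globally. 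Your sketch correctly names this boundary-effect issue as ``where the real work lies,'' but that work is the proof; without it the proposal is a plan rather than an argument.
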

\begin{proof}
Consider the tree formed by the $N$ level $l-1$ clusters. We will show that at least $N/6$ combinations must occur, thus the number of clusters at level $l$ is at most $5/6 \cdot N$.

To simplify the analysis, we construct a new tree by collapsing chains of degree 1 and degree 2 vertices.
Chains with two endpoints are replaced by an edge, and chains with one endpoint are replaced by a single leaf node. This results in a tree with only degree 3 nodes and leaves.
In such a tree, the number of leaves is at most 2 more than the number of degree 3 nodes. 
Thus, if this new tree has $N'$ nodes, then at least $N'/2+1$ of the nodes are leaves.

Now, assume each leaf corresponds to a single degree 1 vertex in the original tree. 
Since each leaf's neighbor must be a degree 3 vertex, and assuming we are not in the degenerate case of a star graph with 4 nodes (in which the result trivially holds), at most two leaves can share a common neighbor.
This implies that at least half of the leaves can be combined with their neighbors. Given that there are at least $N'/2+1$ leaves, there are at least $N'/4$ combinations. 

Next, we reintroduce the collapsed degree 2 vertices (the original chains) to complete the argument.
%
First, consider the chains that were a single degree 2 vertex replaced by an edge. Note that these cannot participate in combinations.
Since each such vertex must have two degree 3 neighbors, the number of such vertices is at most one less than the number of degree 3 vertices. And since the number of degree 3 vertices is less than the number of degree 1 vertices, these degree 2 vertices constitute at most $1/3$ of the vertices in the new tree. Therefore, since at least $N'/4$ combinations occur in the other at least $2/3$ of the vertices, at least $N/6$ combinations occur in the new tree.

Next, consider longer chains that were replaced by an edge but contained two or more degree 2 vertices. Reintroducing these chains does not eliminate any combinations already counted. Furthermore, due to the maximality condition, at least $1/3$ of the vertices in each such chain will still participate in a combination.
Thus, the overall ratio of combinations to vertices remains at least $1/6$.

Now consider chains replaced by a single leaf. These consisted of zero or more degree 2 vertices followed by a degree 1 vertex.
For leaves that did not share a neighbor with another leaf, reintroducing the chain, in which the number of combinations is at least $1/3$ of the number of vertices, will not decrease the ratio of combinations to vertices below $1/6$.
For leaf pairs that shared a neighbor, reintroducing both chains may eliminate one previously counted combination. However, since in each new chain there is at least one combination, we can reassign one of the combinations in the shorter chain and let it account for the removed combination.
The longer chain still contains at least $1/3$ combinations relative to its size, which is at least $1/6$ of the total number of vertices in both chains.
Thus, the overall ratio of combinations to vertices remains at least $1/6$, which completes the proof.
\end{proof}

The previous lemma immediately implies bounds on the height and space of topology trees.

\begin{lemma}\label{lem:topology_height}
    The number of levels in a topology tree for a tree with $n$ vertices is $O(\log n)$.
\end{lemma}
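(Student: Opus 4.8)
The plan is to read Lemma~\ref{lem:topology_height} as essentially a corollary of Lemma~\ref{lem:topology_const_frac}. Write $N_l$ for the number of level-$l$ clusters, so that $N_0 = n$ since every input vertex is its own leaf cluster. Lemma~\ref{lem:topology_const_frac} gives $N_l \le \tfrac{5}{6} N_{l-1}$ for every level $l > 0$ at which contraction is still taking place, so unrolling the recurrence by induction on $l$ yields $N_l \le (5/6)^l\, n$.

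Next I would argue about termination. For a single input tree the contraction process stops exactly when $N_l = 1$, and since the counts are integers, $N_l \le (5/6)^l n < 2$ already forces $N_l = 1$; this happens as soon as $l > \log_{6/5} n$. Hence the topology tree has at most $\lceil \log_{6/5} n \rceil + 1 = O(\log n)$ levels. For an input forest I would apply the same argument to each component separately: a component on $n_i$ vertices contracts to a single cluster within $O(\log n_i) = O(\log n)$ levels, and the height of the topology tree for the forest is the maximum over components, which is still $O(\log n)$.

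The routine part is the geometric-series bound, and there is no real obstacle here — the substantive combinatorial work was already done in Lemma~\ref{lem:topology_const_frac}. The only points that need a little care are (i) confirming that the $5/6$ ratio applies at \emph{every} intermediate level, i.e., that as long as at least one edge remains, the level admits a maximal independent set of the allowed merges so that Lemma~\ref{lem:topology_const_frac} is applicable; and (ii) the behavior of components that finish contracting before others in the forest case, which is why I would prefer the per-component argument over applying Lemma~\ref{lem:topology_const_frac} to the whole forest at once (a finished component contributes a fixed cluster to all later levels and would spoil a naive global ratio bound).
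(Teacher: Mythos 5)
Your proposal is correct and follows essentially the same route as the paper: it applies the $5/6$ per-level contraction ratio from Lemma~\ref{lem:topology_const_frac} and unrolls it to get at most $\log_{6/5} n$ levels. The extra care you take about termination and about per-component analysis in the forest case is sound but not needed beyond what the paper's one-line argument already establishes.
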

\begin{proof}
   Since Lemma~\ref{lem:topology_const_frac} proved that each level has at most $5/6$ the number of clusters of the previous level, the total number of levels is at most $\log_{6/5}n$.
\end{proof}


\begin{lemma}\label{lem:topology_space}
    The space usage of a topology tree for an input tree with $n$ vertices is $O(n)$.
\end{lemma}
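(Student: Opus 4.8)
\textbf{Proof proposal for Lemma~\ref{lem:topology_space}.}
The plan is to bound the total number of clusters in the topology tree and then observe that each cluster stores only a constant amount of information. First I would invoke Lemma~\ref{lem:topology_const_frac}: writing $N_l$ for the number of level-$l$ clusters, we have $N_0 = n$ (one leaf cluster per input vertex) and $N_l \le (5/6)\, N_{l-1}$ for every $l > 0$. Iterating this bound gives $N_l \le (5/6)^l\, n$, so the total number of clusters across all levels is at most
\[
\sum_{l \ge 0} N_l \;\le\; n \sum_{l \ge 0} \left(\tfrac{5}{6}\right)^l \;=\; 6n \;=\; O(n).
\]
(One may instead combine with Lemma~\ref{lem:topology_height} and the fact that the number of clusters is non-increasing in $l$, but the geometric sum above is cleaner and self-contained.)

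Next I would account for the per-cluster storage. Each cluster records a parent pointer, a list of children, and an adjacency list of incident (same-level) edges. By the definition of topology trees the fanout of every cluster is at most $2$, so each child list has size $O(1)$; and since the input tree has maximum degree $3$, every level-$i$ cluster has degree at most $3$ (none of the allowed merges produce a cluster of degree exceeding $3$), so each adjacency list also has size $O(1)$. Hence every cluster occupies $O(1)$ space, and multiplying by the $O(n)$ bound on the number of clusters yields total space $O(n)$.

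I do not expect a genuine obstacle here: the statement is an immediate corollary of the geometric decay already established in Lemma~\ref{lem:topology_const_frac} together with the constant-degree/constant-fanout structure guaranteed by the topology tree definition. The only point that warrants an explicit sentence is the justification that each cluster's adjacency and child lists are of constant size, so that "number of clusters is $O(n)$" upgrades to "space is $O(n)$"; everything else is the one-line geometric-series computation above.
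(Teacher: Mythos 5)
Your proof is correct and follows essentially the same route as the paper's: sum the geometric series induced by the $5/6$ decay from Lemma~\ref{lem:topology_const_frac} to bound the total number of clusters by $6n$, then note each cluster uses $O(1)$ space. Your extra sentence justifying the constant per-cluster storage (bounded fanout and degree) is a small elaboration the paper merely asserts, but it is not a different approach.
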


\begin{proof}
    From Lemma~\ref{lem:topology_const_frac}, we have that $(1/6)$ of the clusters contract at each level. Consequently, the total number of clusters is upper bounded by the sum of the geometric series $\sum_{i=0}^\infty n \cdot (5/6)^i \leq 6n$.
    Since each cluster uses up at most $O(1)$ space, the result follows.
\end{proof}

Lemmas~\ref{lem:topology_height} and \ref{lem:topology_space} together complete the proof of Theorem~\ref{thm:topology_tree_properties}.

\subsection{Proofs for the Update Efficiency of Sequential Topology Trees}\label{app:topology_update_proof}

\begin{restatable}{lemma}{topologyrootclusters}\label{lem:topology_root_clusters}
    During an edge insertion or deletion in a topology tree, there are $O(1)$ root clusters at any level.
\end{restatable}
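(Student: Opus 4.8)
The plan is to track how root clusters are created during a single edge update and argue that only a constant number can appear at any fixed level. Recall that the update algorithm first deletes the two ancestor paths of the leaf clusters for $u$ and $v$, and then reclusters bottom-up, and that during reclustering a root cluster that merges with a non-root cluster triggers the deletion of the ancestor path of that non-root cluster's parent. So root clusters are born in two ways: (i) as children of deleted ancestor clusters (when a cluster is deleted, each of its at most two children that are not themselves being deleted becomes a root cluster at its level), and (ii) as the newly created parent clusters formed during reclustering, which become root clusters one level up. I would set up the argument level by level, letting $r_i$ denote the number of root clusters at level $i$ and bounding $r_{i+1}$ in terms of $r_i$.

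The key structural observation is that each ``ancestor-path deletion'' event deletes at most one cluster per level (a path going strictly up the tree), and each such deletion contributes at most one fresh root cluster to that level (since a deleted topology-tree cluster has fanout $\le 2$, and one of its children lies on the deleted path). So I would first bound the number of distinct ancestor-path-deletion events that can touch a given level $i$. The two initial deletions from $u$ and $v$ account for $2$. Every other deletion event is triggered by a root cluster at some level $j$ merging with a non-root cluster; but the number of root clusters available to trigger such events at level $j$ is $r_j$, so the number of new deletion events originating at level $j$ is at most $r_j$. Each such event deletes an upward path and hence touches level $i$ at most once, contributing at most one root cluster there. Combining: the root clusters at level $i+1$ come from (a) the at-most-constant new parent clusters created by reclustering the $r_i$ root clusters at level $i$ — and since each merge consumes at least one root cluster, the number of new parents is $O(r_i)$ — plus (b) children exposed by deletion events reaching level $i+1$, which is bounded by $2 + \sum_{j \le i} r_j$. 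This would give a recurrence of the form $r_{i+1} = O(r_i) + O(\sum_{j\le i} r_j)$, which by itself only yields an exponential bound, so the recurrence alone is not enough.

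The real work — and the main obstacle — is to show the bound does \emph{not} blow up geometrically, i.e.\ that the total number of deletion events summed over \emph{all} levels is $O(1)$, not just $O(1)$ per level times $O(\log n)$ levels. The crucial point is maximality together with a potential/amortization-style argument: a root cluster can only fail to merge (and thus only ``survive'' to spawn work at the next level) for a structural reason (e.g.\ its sole candidate neighbor is a non-root cluster already committed to a merge), and when it does merge with a non-root cluster it triggers exactly one new deletion path. I would argue that the set of ``active'' clusters at each level — root clusters plus clusters currently on a to-be-deleted path — forms a connected (or $O(1)$-component) region of the topology tree whose ``frontier'' at each level has constant size, because the original perturbation is localized to two leaves and the reclustering maximality prevents the affected region from spreading sideways by more than a constant at each level. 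Concretely, I expect to prove by induction that $r_i \le c$ and that the number of clusters deleted at level $i$ is $\le c$ for an absolute constant $c$, with the inductive step using: (1) the restricted merge rules of topology trees (which cap fanout at $2$ and cap how many neighbors a cluster can interact with), (2) maximality of the matching at the previous level forcing most level-$i$ root clusters to get absorbed immediately rather than propagate, and (3) the fact that the two initial ancestor paths and every subsequently triggered path are vertex-disjoint above their branch points, so they cannot multiply. Making step (2)–(3) precise — pinning down exactly why an $O(1)$ frontier is preserved under one level of reclustering rather than growing — is where the bulk of the careful case analysis will go, and it is the step I would expect to be the technical heart of the proof.
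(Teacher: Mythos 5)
Your overall strategy matches the paper's: a level-by-level induction showing that the ``affected'' region stays of constant size per level, driven by a constant-size frontier and by maximality of the matching. But as written the proposal has a genuine gap, and it is exactly the one you flag yourself: your recurrence $r_{i+1}=O(r_i)+O(\sum_{j\le i} r_j)$ proves nothing, and the invariant that would rescue it --- that the frontier of the affected region has constant size and that maximality absorbs enough clusters each round --- is asserted as a plan rather than proven. The paper's proof spends essentially all of its effort on precisely this. It (a) defines affected clusters via four explicit rules and shows root clusters are a subset of them; (b) shows the initial update creates at most $10$ affected ``origin'' clusters and partitions the affected forest into at most $10$ disjoint sub-components, one per origin; (c) proves by induction that each sub-component has at most $2$ \emph{frontier} clusters (affected clusters adjacent to an unaffected one), via a case analysis on the frontier's degree --- the delicate case being that a degree-$3$ cluster can only become a frontier when it is the \emph{only} frontier of its component, which is what prevents two degree-$3$ frontiers from doubling the frontier count; (d) concludes that each round adds at most $8$ new affected clusters per sub-component; and (e) closes the loop quantitatively with the recurrence $A^{i+1}_j \le 8 + \tfrac{5}{6}(A^i_j-2)+2$, where the $\tfrac{5}{6}$ factor comes from the maximality-based contraction lemma applied to the sub-component, giving a fixed point of at most $50$ and hence at most $500$ root clusters per level. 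Steps (c) and (e) are the technical heart, and neither appears in your proposal beyond the statement that they are needed.

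One smaller point: your item (3), that the initial ancestor paths and subsequently triggered deletion paths are ``vertex-disjoint above their branch points, so they cannot multiply,'' is not the mechanism the paper uses and is not obviously true --- deletion paths launched at different levels can and do coalesce onto common ancestors. The actual bound on deletions per level (the companion Lemma on deleted clusters) comes from the opposite phenomenon: because at least a $1/6$ fraction of clusters contract per level, the deleted-ancestor counts \emph{shrink} geometrically going up, while only a constant number of new deletion events (at most $6$ per sub-component, those triggered by a root cluster merging with a non-root cluster) are injected per level, again yielding a bounded fixed point. So the disjointness intuition should be replaced by a coalescence-plus-bounded-injection argument.
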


\begin{restatable}{lemma}{topologydeleted}\label{lem:topology_deleted}
    During an edge insertion or deletion in a topology tree, there are $O(1)$ clusters that are deleted at any level.
\end{restatable}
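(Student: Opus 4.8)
The plan is to prove Lemmas~\ref{lem:topology_root_clusters} and~\ref{lem:topology_deleted} together, by induction on the level, maintaining the invariant that a suitably-defined \emph{affected region} at each level has size bounded by an absolute constant. The affected region at a level $l$ consists of the root clusters at level $l$, every level-$l$ cluster that gets deleted at any point during the update, and their neighbours in the level-$l$ contracted tree. Bounding the number of root clusters and the number of deleted clusters at level $l$ then follows immediately.

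First I would record two structural facts about the set $D$ of all clusters deleted during the update. (i) $D$ is \emph{ancestor-closed}: the initial step deletes the entire ancestor chains of the leaf clusters for $u$ and $v$, and every later $\fname{DeleteAncestors}$ call---triggered when a root cluster merges with a non-root cluster---walks strictly upward, deleting clusters until it reaches an already-deleted cluster or a topology-tree root; hence whenever a cluster is deleted, so is its parent (unless it is a topology-tree root). (ii) Every root cluster at level $l$ is either a child (in the old tree) of a deleted level-$(l+1)$ cluster or a new cluster created while reclustering level $l-1$. Since topology trees are binary and each root cluster participates in at most one merge, this gives a bound on the number of root clusters at level $l$ in terms of the number of deleted clusters at level $l+1$ and the number of root clusters at level $l-1$, so it suffices to bound the deleted clusters per level.

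For the base case, the initial ancestor-deletion of $u$ and $v$ puts at most two clusters into $D$ per level, occupying at most two bounded clumps (one near $u$, one near $v$), which is all the locality the induction needs. The inductive step is the crux: assuming the affected region at level $l$ has size at most a constant $C$, I would show that reclustering level $l$ enlarges the affected region at each higher level by only $O(1)$. The new level-$(l+1)$ clusters are contractions of affected level-$l$ clusters, so their count is at most the number of root clusters at level $l$, and they stay inside the image of the affected region. A merge between a root cluster and a non-root cluster $y$ can happen only when $y$'s parent has fan-out $1$, and it happens only at the boundary of the affected region---of which there are $O(1)$ clusters, since a bounded-size connected region in a degree-$\le 3$ tree has $O(1)$ boundary edges. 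Each such merge triggers exactly one ancestor-spur in $D$, based two levels above, whose re-exposed children are $y$'s modified parent (now containing the root cluster that $y$ merged with, hence in the image of the affected region) and that parent's sibling (adjacent to it). Chasing these attachments upward, and using ancestor-closedness of $D$, lets one close the induction with an absolute constant.

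The step I expect to be the main obstacle is exactly controlling this cascade: reclustering a low level triggers $\fname{DeleteAncestors}$ calls that reach arbitrarily high levels, re-exposing clusters that may themselves merge and spawn further deletions, and a crude count purely in terms of the number of root clusters at level $l$ would multiply across levels rather than stay constant. Making the induction go through requires the stronger \emph{geometric} invariant---that the affected clusters at every level remain inside a bounded-size neighbourhood of the images of $u$ and $v$ in the contracted trees---together with a careful check that every spur created while reclustering level $l$ has its base and re-exposed clusters inside that neighbourhood at levels $l+2$ and above. The bounded input degree and the binary structure of topology trees are what keep all of these counts $O(1)$; summing $O(1)$ work over the $O(\log n)$ levels of Theorem~\ref{thm:topology_tree_properties} then recovers Theorem~\ref{thm:topology_tree_updates}.
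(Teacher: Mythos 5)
Your high-level plan is the same as the paper's: define an affected region per level, show it stays constant-size by balancing per-level growth against the geometric contraction guaranteed by maximality (Lemma~\ref{lem:topology_const_frac}), and read off the bounds on root and deleted clusters. The paper likewise partitions the affected clusters into at most $10$ subcomponents, shows each contracts at rate $5/6$, and bounds deletions per level via the recurrence $d_{m+1} \le \tfrac{5}{6}(d_m + 60)$, where the additive $60$ counts the new \textsc{DeleteAncestors} spurs initiated at each level.

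The genuine gap is in how you bound the per-level growth of the affected region. You argue that root/non-root merges (which spawn new spurs and new affected clusters) ``happen only at the boundary of the affected region---of which there are $O(1)$ clusters, since a bounded-size connected region in a degree-$\le 3$ tree has $O(1)$ boundary edges.'' But a connected region of size $C$ in a degree-$3$ tree has up to $C+2$ boundary edges, so this gives a growth term \emph{proportional to} $C$, not an absolute constant. The resulting recurrence is $C_{l+1} \le \tfrac{5}{6}C_l + kC_l$ for some $k \ge 1$, which diverges; the induction does not close, and invoking it under the inductive hypothesis ``$C_l = O(1)$'' to conclude the boundary is $O(1)$ is circular in exactly the way that matters quantitatively. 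The paper's proof closes this hole with its frontier-cluster lemma (Lemma~\ref{lem:constant_frontiers}): each of the at most $10$ subcomponents has at most \emph{two} frontier clusters at any level, \emph{independently of the subcomponent's size}, because affection can only be spread by low-degree clusters under the two spreading rules of Lemma~\ref{lem:spreading}, and a degree-$3$ cluster can become a frontier only when its entire subcomponent is already affected. This caps the per-level growth at an absolute additive constant ($8$ per subcomponent, Lemma~\ref{lem:affectedadded_subc}), turning the recurrence into $C_{l+1} \le \tfrac{5}{6}C_l + 8$, which converges. That case analysis is the technical heart of the argument and is exactly the ``careful check'' you defer; without it (or the alternative observation that every deleted level-$(m{+}1)$ cluster has all its children among the level-$m$ root clusters, so the deletion bound follows directly from the root-cluster bound), the proof is incomplete.
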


We now provide proofs of the Lemmas ~\ref{lem:topology_root_clusters} and \ref{lem:topology_deleted}. 
To prove that updates take logarithmic time, we need to prove that during an update the number of root clusters and the number of deleted clusters at a particular level is constant.
Our proof for this is heavily inspired by and somewhat analogous to the analysis of updates in rake-compress trees given by Anderson~\cite{anderson2023parallel}. We restate the lemmas here for convenience.

The proof follows a level-by-level approach and shows that at each level at most a constant number of root clusters are added, while a constant fraction of them contract away. This is the key fact which implies that there are $O(1)$ root clusters at the next level.

To aid in presentation, we will start calling root clusters \defn{affected} clusters. 
We choose to use the same terminology used in \cite{anderson2023parallel}, as the notions of ``root clusters'' and ``affected clusters'' seem to serve the same purpose in topology trees and rake-compress trees respectively. 
We will also refer to the forest induced by the topology tree formed via the contractions that took place prior to the update as $T$, and the forest induced after the update as $T'$. To refer to the forest at a specific round $i$ of the contraction/update process, we will use the notation $T_i$ and $T_i'$.

We now define a notion of affection and spreading affection for topology trees.

\begin{definition}[Affected Cluster]\label{def_affection}
    A topology tree cluster in $T'$ is said to be affected if any of the following are true:
    \begin{enumerate}
        
        \item The cluster has at least 1 child affected cluster. 
        
        \item The cluster it merged with in the previous set of contractions does not exist. 

        \item The cluster did not contract in this round and is adjacent to an affected cluster that it can contract with. 

        \item It is one of the initial 2 clusters specified in the link/cut operation. 
        
    \end{enumerate}
\end{definition}

The motivation for this definition is to provide an upper bound on the number of root clusters during the update algorithm. We will show that the set of root clusters during an update is a subset of the affected clusters, therefore by bounding the number of affected clusters, we also have a bound for the number of root clusters.

In Definition~\ref{def_affection}, the first case accounts for both root clusters at level $i$ created by a combination of root clusters at level $i - 1$ and root clusters at level $i$ that are extensions of the root clusters at level $i - 1$ that did not merge with anything.
The second case accounts for root clusters created by the remove ancestors step. Note that for these clusters, the remove ancestors step deletes the cluster that they clustered with.
%
The third case in the definition accounts for the step where a non-root cluster is forced to merge with a root cluster to preserve maximality.
Finally, the last case accounts for the initial $2$ clusters that are affected.


We will now define how affection ``spreads'' from an affected cluster to an unaffected neighbor in the topology tree. 
The idea behind this is that even though the definition of affected clusters is static (i.e., any cluster that satisfies the definition above becomes affected), we can relate the new affected clusters at level $i + 1$ (that were not affected due to rule (1) of Definition~\ref{def_affection}) to affected clusters at level $i + 1$ created via rule $(1)$ (these are the root clusters created as a result of the contractions of root clusters at level $i$). In some sense, we can now think of the ``new'' root clusters as being added as a consequence of the affected clusters at level $i$. 
Ultimately, this serves to show how many clusters become ``new'' root clusters between $2$ levels and that this number is upper bounded by a constant.

A similar idea was also introduced in the analogous proof for RC trees in \cite{anderson2023parallel}, but our definition has to account for the fact that unlike RC Trees, neighbors of a cluster in topology trees are not necessarily the same after contractions take place at a level. 
As such, we view round $i$ as all operations that take place starting with the contractions at level $i$  until the beginning of the contractions at level $i + 1$. 
Accordingly, the process of spreading takes place once the contractions at level $i$ have taken place and the clusters/vertices of the topology tree at level $i + 1$ have been determined (see Figure~\ref{fig:round_i}). 
We now define how affection spreads via the lemma below which illustrates the properties of an affected cluster spreading affection to unaffected ones.

For clarification, in the lemma below, we say that a cluster $n$ in $T'_i$ does not exist in $T_i$ if it is an affected cluster in $T_i'$. 


\begin{lemma}\label{lem:spreading}
If an unaffected cluster $c$ becomes affected at round $i$ (is affected at round $i + 1$) and $c$ did not become affected via a link/cut operation, it has an affected neighbor $n$ at round $i$ such that one of the following is true:
\begin{enumerate}
    \item $n$ did not exist in $T_i$ and the cluster that $c$ contracted with in $T_i$ was deleted. \label{rule:spreading1}

    \item $n$ and $c$ can merge together and $n$ did not contract in $T_i$. \label{rule:spreading2}
\end{enumerate}
\end{lemma}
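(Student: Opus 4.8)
The plan is to establish Lemma~\ref{lem:spreading} by a careful case analysis on \emph{why} the cluster $c$ became affected at round $i+1$, working directly from the four rules of Definition~\ref{def_affection}. Since we are told $c$ did not become affected via a link/cut operation, case (4) of Definition~\ref{def_affection} is excluded, so $c$ became affected through one of cases (1), (2), or (3). The key observation driving the argument is that $c$ was \emph{unaffected at round $i$} but \emph{affected at round $i+1$}, so whatever triggered its affection must have happened during round $i$ — i.e., as a consequence of the contractions or deletions that took place at level $i$.

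First I would handle case (2) of Definition~\ref{def_affection}: the cluster that $c$ merged with in the previous (level-$i$) contractions does not exist. Here ``does not exist'' means (by the clarification just before the lemma) that this cluster is affected in $T'_i$; moreover, since $c$ merged with it and $c$ was unaffected at round $i$, the only way it could have been removed is via the remove-ancestors (delete) step of round $i$. The deleted cluster is exactly the neighbor $n$ we want, and it satisfies conclusion~\ref{rule:spreading1}: $n$ is affected, $n$ did not exist in $T_i$ (it was created/affected during round $i$), and the cluster $c$ contracted with in $T_i$ — namely $n$ itself, or more precisely the level-$i$ cluster that was deleted — was deleted. I would need to be slightly careful about the indexing here: ``the cluster $c$ contracted with in $T_i$'' refers to the level-$i$ contraction partner, and I must argue this is precisely the thing that got deleted, which follows because that is the only event at round $i$ that could revoke $c$'s parent while leaving $c$ itself unaffected at round $i$.

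Next I would handle case (3) of Definition~\ref{def_affection}: $c$ did not contract at round $i$ and is adjacent (in $T'_i$, i.e.\ after the level-$i$ contractions) to an affected cluster $n$ that $c$ can contract with. This directly gives conclusion~\ref{rule:spreading2}: $n$ is affected, $n$ and $c$ can merge, and $n$ did not contract in $T_i$ (if it had, it would no longer be available as a contraction partner for $c$ at round $i+1$, so it is the post-contraction cluster, which by definition did not itself get merged away at level $i$). Finally, case (1) — $c$ has an affected child — is the subtle one and I expect it to be the main obstacle: if $c$ is a cluster at level $i+1$ that is affected merely because one of its level-$i$ children is affected, I must exhibit an affected \emph{neighbor} $n$ of $c$ at round $i$, not just an affected child. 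The resolution is to observe that such a $c$ is a ``newly created'' root cluster formed by the contraction of root (affected) clusters at level $i$, so it is the image under round-$i$ contraction of an affected level-$i$ cluster; this is exactly the anchor that the spreading relation is designed to point \emph{back to}, so in this case $c$ itself plays the role of the target of spreading rather than the source, and the lemma's hypothesis ``$c$ becomes affected at round $i$'' should be read as excluding rule~(1)-only affection — paralleling the remark in the text that the new root clusters at level $i+1$ created via rule (1) are precisely the ones we spread \emph{to}. I would make this reading precise, and then the lemma reduces to cases (2) and (3) handled above. The bookkeeping around $T_i$ versus $T'_i$ and which cluster is ``the one $c$ contracted with'' is where I would spend the most care, since the neighbor structure changes across a contraction round, unlike in the RC-tree analysis of Anderson~\cite{anderson2023parallel}.
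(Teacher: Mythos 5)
Your overall strategy---a case analysis on which rule of Definition~\ref{def_affection} caused $c$ to become affected, discarding rule (4) by hypothesis and rule (1) by reading the lemma as concerning only ``newly spread'' affection---is exactly the paper's strategy, and your treatment of rule (3) (which yields conclusion~\ref{rule:spreading2} essentially immediately) matches the paper's.

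The gap is in your handling of rule (2) of Definition~\ref{def_affection}. You identify the witness $n$ as ``the deleted cluster,'' i.e.\ the level-$i$ cluster that $c$ contracted with in $T_i$ and that was removed by the delete-ancestors step. That object cannot serve as $n$: conclusion~\ref{rule:spreading1} requires $n$ to be an affected \emph{neighbor of $c$ at round $i$} that \emph{did not exist in $T_i$} (recall the convention stated just before the lemma: ``does not exist in $T_i$'' means ``is an affected cluster in $T_i'$''). The deleted contraction partner fails on both counts---it existed in $T_i$, and after deletion it is not present in $T_i'$ at all, so it is not a neighbor of anything. The two objects in conclusion~\ref{rule:spreading1} are genuinely distinct: the deleted partner, and a \emph{new} affected cluster $n$ that now occupies the corresponding adjacency slot of $c$ in $T_i'$. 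Establishing that such an $n$ exists is the actual content of this case, and the paper supplies it with a short contradiction argument you are missing: since $c$ was unaffected through round $i$, it was formed by the same contractions as in $T$ and therefore has the same neighbors and degree in $T_i$ and $T_i'$; since the neighbor it contracted with is absent from $T_i'$, something must have replaced it in $c$'s neighborhood, and any cluster of $T_i'$ not present in $T_i$ is by definition affected. You flagged the $T_i$-versus-$T_i'$ bookkeeping as the delicate point but did not resolve it, and as written your witness does not satisfy the conclusion you claim for it.
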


\begin{proof}
    Consider a cluster $c$ that ``becomes affected'' during round $i$ (i.e., it is an affected/root cluster during round $i+1$). It clearly has at least one neighbor during round $i$; otherwise it would finalize and remain unaffected.

    Since $c$ does not have any affected descendants (we do not consider this case as this process does not add an affected cluster to level $i + 1$), and it becomes affected in round $i$, it must become affected via rule (2) or (3) of Definition~\ref{def_affection}.
    Rule (3) requires that the cluster in question has an adjacent affected cluster. Thus, if $c$ becomes affected via rule (3), then case (2) of spreading affection (this lemma) is true.

    Suppose for the sake of contradiction that $c$ becomes affected by rule (2), i.e., $c$ becomes affected as a result of the cluster it previously merged with not existing anymore, and additionally that it is not adjacent to an affected cluster.
    Since $c$ was not affected before round $i$, it must have been formed using the same contractions as in $T$ and, as such, must have the same neighbors and degree in both $T_i$ and $T_i'$.
    Since it has the same neighbors, and the neighbor it contracted with does not exist in $T_i'$, it must be adjacent to an affected neighbor that has replaced the cluster it contracted with in $T_i$. This neighbor must be affected since any cluster not in $T_i$ is affected, leading to a contradiction.
    Therefore either (1) or (2) in this lemma is true.
\end{proof}

Lemma~\ref{lem:spreading} shows how affected clusters at level $i$ can create affected clusters at level $i + 1$. We will now prove that only a constant number of affected clusters can be added during each round.
Let $A^i$ denote the set of affected clusters at round $i$.


%

\begin{figure}
    \centering
    \includegraphics[width=0.45\textwidth]{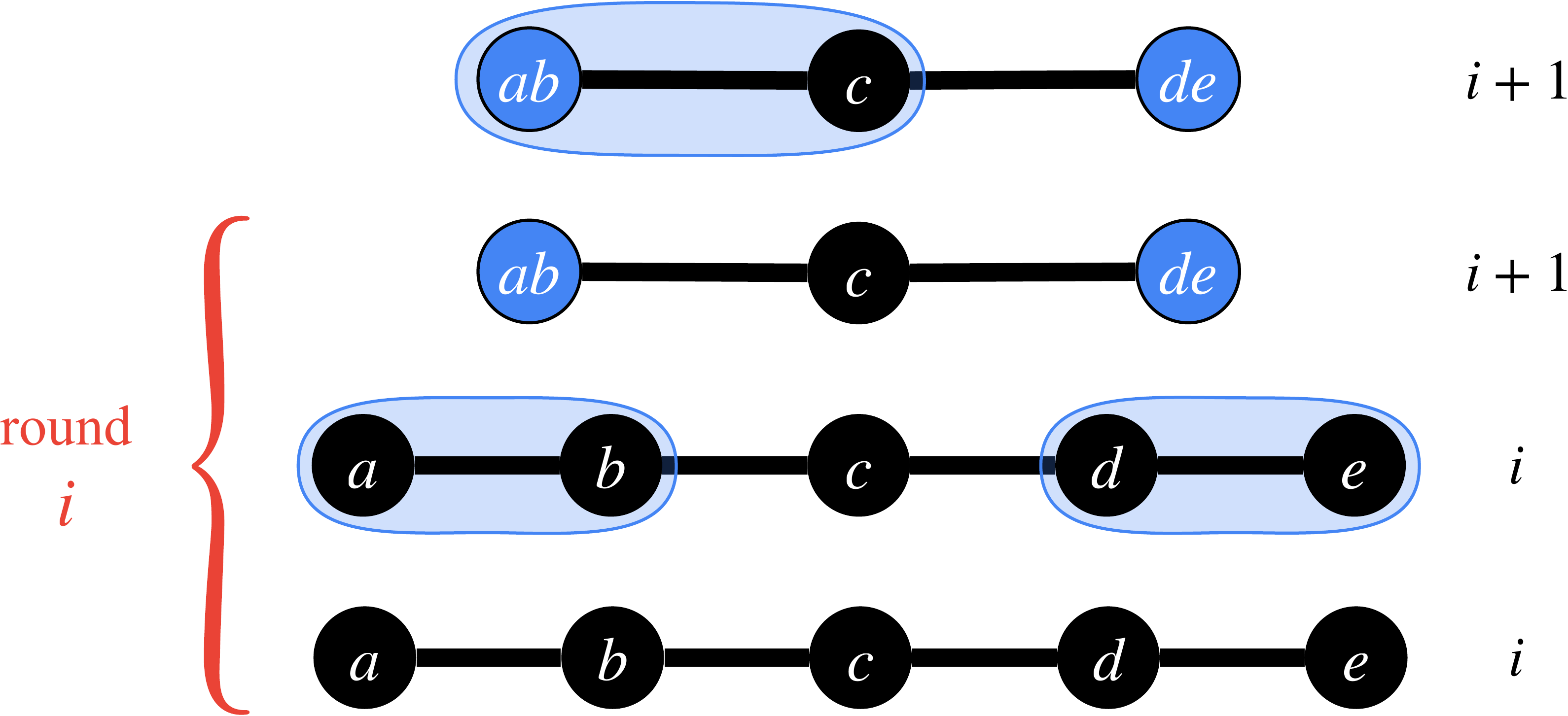}
    \caption{The operations that compose a round $i$ of contractions.}
    \label{fig:round_i}
\end{figure}

\begin{lemma}\label{lem:initialaffected}
At level $0$ of a topology tree, an update introduces at most $10$ affected clusters, i.e., $A^0 \le 10$.
\end{lemma}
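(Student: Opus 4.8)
The plan is to bound $|A^0|$ by directly examining Definition~\ref{def_affection} restricted to level $0$ clusters (the leaf clusters, i.e., the vertices of the input tree) and accounting for each of the four ways a level $0$ cluster can become affected during a single edge insertion or deletion.

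First I would observe that rule (4) of Definition~\ref{def_affection} contributes exactly the two endpoints $u$ and $v$ of the link/cut operation, giving $2$ affected leaf clusters. Rule (1) cannot contribute at level $0$ since leaf clusters have no children. Rule (2) says a leaf cluster becomes affected if the cluster it merged with in the first round of the original contraction no longer exists; since the only clusters that ``no longer exist'' at the start are the newly affected ones, this can only pull in clusters that merged (in round $0$ of $T$) with $u$ or $v$, or more precisely with the ancestors that were deleted. I would argue that at level $0$ the only deleted clusters are $u$ and $v$ themselves (their ancestors live at levels $\geq 1$), so rule (2) at level $0$ adds only the level $0$ partners of $u$ and $v$ in the original contraction. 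Since in a topology tree each cluster merges with at most one other cluster, $u$ contributes at most one such partner and $v$ at most one, for $2$ more. Rule (3) says an unaffected leaf cluster that did not contract in round $0$ but is adjacent to an affected leaf cluster it \emph{can} contract with becomes affected; here I would use the degree bound (input degree $\leq 3$ after ternarization, or we can argue directly that $u$ and $v$ each have $O(1)$ neighbors) to bound the number of such neighbors of $u$ and $v$ by their degrees. Each of $u,v$ has at most $3$ neighbors, contributing at most $6$, but I'd want to be careful about the edge being inserted/deleted itself and overlaps.

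Summing the contributions: $2$ from rule (4), at most $2$ from rule (2), and at most $6$ from rule (3) — but I expect the careful accounting to show the genuine bound is $\leq 10$ once we note that rule (2)'s two partners may coincide with rule (3)'s neighbors, or conversely that the neighbor count is really governed by $u$ and $v$ each having degree $\leq 3$ in the new tree $T'$, plus their old contraction partners. A clean way to organize this: the affected level $0$ clusters are contained in $\{u, v\} \cup N_{T'}(u) \cup N_{T'}(v) \cup \{\text{round-0 partner of } u \text{ in } T\} \cup \{\text{round-0 partner of } v \text{ in } T\}$, and since $|N_{T'}(u)|, |N_{T'}(v)| \leq 3$, this set has size at most $2 + 3 + 3 + 1 + 1 = 10$.

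The main obstacle I anticipate is making the rule-(3) accounting airtight: one must verify that an unaffected leaf cluster adjacent to an affected leaf cluster only becomes affected if that affected neighbor is actually one of $u$, $v$ (and not some other already-affected cluster reached transitively within level $0$), i.e., that affection does not cascade along a chain of level $0$ clusters. This should follow because after removing $u$ and $v$'s ancestors, the only affected leaf clusters that can force a merge are those directly incident to the edge change — but I would need to check that a leaf cluster made affected by rule (3) cannot itself trigger rule (3) on a further neighbor, which requires noting that a cluster made affected by rule (3) is one that did \emph{not} contract in round $0$ and is now being forced to, so it contracts and its neighbors' situation is governed by the higher-level analysis rather than by further level-$0$ spreading. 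Alternatively, if the transitive spread does occur, one falls back on the explicit containment set above, which is robust to the spreading mechanism and still yields the bound of $10$.
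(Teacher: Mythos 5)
There is a genuine gap in your proposal: your final containment set is not actually an upper bound on the affected level-$0$ clusters. Affection spreads \emph{two} hops at level $0$, not one. Let $p_u$ be the cluster that $u$ contracted with in round $0$ of $T$. Since $u$ is affected, $p_u$ becomes affected via rule (2) of Definition~\ref{def_affection} (its merge partner ``no longer exists''). But $p_u$ may now fail to re-merge with $u$, and to preserve maximality it may be forced to merge with one of its \emph{other} neighbors $w$ that did not contract in round $0$; such a $w$ becomes affected via rule (3), and $w$ need not lie in $N_{T'}(u)\cup N_{T'}(v)$. So the set $\{u,v\}\cup N_{T'}(u)\cup N_{T'}(v)\cup\{p_u,p_v\}$ omits these clusters, and your claim that this fallback set is ``robust to the spreading mechanism'' is false. (It is also over-counted as written, since $p_u$ is itself typically a member of $N_{T'}(u)$.) The paper's accounting is $2+2+2+4$: the two endpoints, their two round-$0$ partners, at most one further rule-(3) neighbor of each endpoint (the other two neighbor slots being occupied by the partner and, in the insertion case, the other endpoint), and at most two rule-(3) neighbors of each partner --- it is precisely that last ``$+4$'' term that your set is missing.

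Your instinct about where the spreading stops is the right one, and it is also the correct repair: a cluster made affected via rule (3) did not contract in round $0$ of $T$, so by maximality of the original matching no neighbor of it both failed to contract and could have contracted with it; hence it cannot trigger rule (3) on a further neighbor, and having no round-$0$ partner it cannot trigger rule (2) either. Applying this termination argument \emph{after} the second hop (rather than after the first, as your containment set implicitly does) recovers the bound of $10$ and matches the paper's proof.
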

\begin{proof}
The initial 2 clusters specified in the link/cut operation are affected via rule (4). In $T_0$, they both contracted with at most $1$ cluster each, which by rule (2) of Lemma~\ref{lem:spreading} becomes affected. Each of the clusters affected via rule (2) could then spread affection to at most two more clusters by rule (3) of Lemma~\ref{lem:spreading} while those affected by the original link/cut can spread affection to at most $1$ other cluster each ($1$ of their neighbors is affected via rule $4$ and the other via rule $2$). No further affection can be spread from the clusters added via rule (3) as they did not merge with any other cluster and are not adjacent to any others that they could have merged with in $T_i$. Thus $|A^0| \le 2 + 2 + 2 + 4$, i.e., $|A^0| \le 10$.
\end{proof}

Each of these $10$ affected clusters in $A^0$ will be considered as an \defn{origin cluster} of affection. 
Based on these, we will partition our affected forest into disjoint sub-components, each of which stores the affected clusters growing out of one of the origin clusters. Since there are at most $10$ such clusters, there are at most $10$ sub-components in any affected forest.
We will represent the set of affected clusters originating from the $j^{\text{th}}$ origin cluster as $A^i_j$.

We now define the notion of \defn{frontier clusters}. These are clusters in an affected forest capable of spreading affection to unaffected clusters and are defined as follows:

\begin{definition}[Frontier clusters]\label{def:frontiers}
    A frontier cluster is any affected cluster in $T_i'$ that is adjacent to an unaffected cluster. 
\end{definition}

Consider why these components must be disjoint: when each of these origin clusters spreads affection, it does so in the direction of unaffected clusters. As such, if $2$ origin clusters are connected, they effectively block each other from spreading affection in each other's sub-component. 

We now show that the number of affected clusters added in any sub-component is constant. 
Below, we first establish Lemma~\ref{lem:deg_3_affected}, which an intermediate result that is helpful to refer to in the proof of Lemma~\ref{lem:constant_frontiers}.
Lemma~\ref{lem:constant_frontiers} critically establishes that the number of frontier clusters is constant and thus the number of clusters that can become affected is constant in any sub-component (which we formalize in Lemma~\ref{lem:affectedadded_subc}). 

\begin{lemma}\label{lem:deg_3_affected}
Any degree $3$ affected cluster has at least $1$ affected neighbor.
\end{lemma}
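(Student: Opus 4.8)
\textbf{Proof proposal for Lemma~\ref{lem:deg_3_affected}.}
The plan is to argue by contradiction: suppose a degree $3$ affected cluster $c$ in $T_i'$ has no affected neighbor, i.e.\ all three of its neighbors are unaffected. The first step is to invoke the definition of affected clusters (Definition~\ref{def_affection}) to narrow down \emph{how} $c$ could have become affected. Since none of $c$'s neighbors are affected, rule (3) cannot be the cause, and since $c$ is the (sole) parent of its children, if $c$ had an affected child then $c$'s affection would propagate upward but that does not help — actually the point is that a degree $3$ cluster produced purely by the same contractions as in $T$ would have identical neighbors and degree in $T_i$ and $T_i'$. So I would split into the possible rules: rule (4) (one of the two endpoints of the link/cut) can be handled directly or absorbed into the constant, and rule (1) (has an affected child) together with rule (2) (the cluster it previously merged with no longer exists) are the substantive cases.

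The key step is to mirror the argument used in the proof of Lemma~\ref{lem:spreading}: if $c$ became affected via rule (2), then the cluster $c$ merged with in $T_i$ has been replaced, and since $c$ has the same neighbor set in $T_i$ and $T_i'$ (it wasn't affected before round $i$, so it was formed by the identical contraction), the replacement must sit among $c$'s current neighbors — but any cluster not present in $T_i$ is affected by definition, contradicting the assumption that all of $c$'s neighbors are unaffected. For rule (1), if $c$ has an affected child, then recall that a degree $3$ cluster must have fanout $1$ (stated in Section~\ref{sec:topology_definition}: no allowed merge produces a degree $3$ cluster, so it is the extension of a single level-$(i-1)$ cluster); that single child being affected means the child is a root cluster at level $i-1$, and I would trace how such a cluster's neighborhood at level $i$ relates to the child's neighborhood at level $i-1$ to locate an affected neighbor of $c$.

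The main obstacle I anticipate is the rule (1) case: relating the neighbor structure of a degree $3$ cluster at level $i$ to that of its unique affected child at level $i-1$, and showing that "affectedness" of the child forces an affected neighbor of $c$ at level $i$. This requires carefully using the fact that a degree $3$ cluster has fanout exactly $1$, so its neighbors at level $i$ are exactly the parents of its child's neighbors at level $i-1$ (after those neighbors' own contractions), and that the child being a root/affected cluster means at least one of its level-$(i-1)$ neighbor relationships changed. Once this bookkeeping is in place, the contradiction follows in the same style as Lemma~\ref{lem:spreading}, and the lemma is established.
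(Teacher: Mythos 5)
Your overall strategy---casing on which rule of Definition~\ref{def_affection} made the degree~$3$ cluster affected, dismissing rule~(3) as incompatible with the contradiction hypothesis, handling rule~(4) directly (a cut cannot leave a degree~$3$ endpoint, and for a link the other endpoint is an adjacent affected cluster), and handling rule~(2) by the ``replacement'' argument from the proof of Lemma~\ref{lem:spreading}---is essentially the paper's argument, which compresses all of this into two sentences by asserting that the cluster is either a link/cut endpoint or was affected by an adjacent cluster via Lemma~\ref{lem:spreading}.

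The genuine gap is in your rule~(1) case, and the repair you sketch would not work as stated. You propose to finish by arguing that ``the child being a root/affected cluster means at least one of its level-$(i-1)$ neighbor relationships changed,'' but that is false in general: a cluster can be affected solely via rule~(1) applied recursively (it has an affected child, which has an affected child, and so on down to level~$0$), with every one of its own neighbor relationships unchanged. So one level of bookkeeping between $c$ and its child cannot by itself produce an affected neighbor of $c$. The correct completion uses the structural facts you already identified---a degree~$3$ cluster has fanout~$1$, hence does not participate in a merge, hence none of its incident edges are contracted and its unique child is again a degree~$3$ cluster---to run an induction on the level: follow the chain of degree~$3$ descendants of $c$ down to the first level at which affection originated via rule~(2), (3), or (4); your arguments for those rules yield an affected neighbor at that level; and since the chain's incident edges all survive to level~$i$ and the parent of an affected cluster is affected (rule~(1)), taking parents of that affected neighbor level by level yields an affected neighbor of $c$. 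With that induction in place your proof closes; without it, the rule~(1) case is unproven.
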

\begin{proof}
    Since a degree $3$ cluster cannot be formed via a cut (due to our degree bound of $3$), it must be affected via either rule (4) of Definition~\ref{def_affection}, or was affected by an adjacent cluster (Lemma~\ref{lem:spreading}). 
    In either case, by the definition of affection, it has $\ge 1$ affected neighbor.
\end{proof}

\begin{lemma}\label{lem:constant_frontiers}
    Each component of the affected forest has at most $2$ frontier clusters at round $i$.
\end{lemma}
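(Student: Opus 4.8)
\textbf{Proof proposal for Lemma~\ref{lem:constant_frontiers}.}

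The plan is to argue by induction on the round $i$ that each sub-component $A^i_j$ of the affected forest has at most $2$ frontier clusters, and along the way to track \emph{where} those frontiers sit in the contracted forest $T_i'$. The base case is round $0$: by construction each sub-component is grown from a single origin cluster, so it has at most one cluster and hence at most $2$ frontiers (in fact at most the degree-bound permits, but we only need $\le 2$; one should double-check the origin cluster carefully, since a degree-$3$ origin cluster could naively touch $3$ unaffected neighbors, and here we would invoke the spreading analysis of Lemma~\ref{lem:spreading} together with Lemma~\ref{lem:deg_3_affected} to see that such a cluster already has an affected neighbor in its component, cutting the count down). For the inductive step, I would consider a component with $\le 2$ frontier clusters at round $i$ and examine what happens during the contractions of round $i$ and the subsequent spreading: a frontier cluster can spread affection only via rule~\eqref{rule:spreading1} or rule~\eqref{rule:spreading2} of Lemma~\ref{lem:spreading}, and in either case the newly affected cluster is adjacent to the old frontier. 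The key point is that each frontier is an endpoint of a ``path-like'' boundary of the component inside the (collapsed) contracted tree, and spreading moves that endpoint by one step rather than branching.

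The heart of the argument is to show that a frontier cluster cannot have \emph{two} distinct unaffected neighbors both of which become affected and remain frontiers. This is where the degree bound $3$ and the structure of allowed merges come in. If a frontier cluster $c$ has degree $\le 2$, then it has at most two neighbors total, at least one of which is already affected (it lies in the component), so it can spread to at most one new cluster — keeping the count of frontiers in the component unchanged (the new cluster replaces $c$ as a frontier, or $c$ stops being a frontier). If $c$ has degree $3$, then by Lemma~\ref{lem:deg_3_affected} it has at least one affected neighbor; moreover a degree-$3$ cluster cannot itself be the ``center'' that forces a new merge (none of the allowed merges produce degree-$3$ clusters from a merge), so the only way it spreads is via rule~\eqref{rule:spreading1}, and the cluster it previously contracted with is unique — so again at most one new frontier is created per old frontier. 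Combining: the number of frontiers in a component is non-increasing under the ``replace a frontier by its successor'' operation, hence stays $\le 2$.

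The main obstacle I anticipate is handling the interaction between the two frontiers of a single component as contractions proceed: it is conceivable that the contracted forest ``reconnects'' two parts of a component or that a single cluster simultaneously serves as the successor of one frontier and is adjacent to unaffected clusters on the other side. I would rule this out by using the disjointness of sub-components established just before the lemma (origin clusters block each other's spreading), together with the observation that the contracted tree $T_i'$ remains a forest, so a component remains a subtree with a well-defined (at most two-element) boundary into the unaffected region. A careful case analysis on the degree of each frontier ($1$, $2$, or $3$) and on which spreading rule applies should close all cases; the degree-$3$ case relying on Lemma~\ref{lem:deg_3_affected} is the most delicate and is the reason that intermediate lemma was stated. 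Once Lemma~\ref{lem:constant_frontiers} is in hand, it feeds directly into bounding the number of affected clusters added per round per component (Lemma~\ref{lem:affectedadded_subc}), and then into Lemmas~\ref{lem:topology_root_clusters} and~\ref{lem:topology_deleted}.
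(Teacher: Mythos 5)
There is a genuine gap. Your induction skeleton, base case, and use of Lemma~\ref{lem:deg_3_affected} match the paper's proof, but the invariant you build the inductive step on --- that each frontier spawns at most one successor, so the frontier count is non-increasing --- is false, and if it were true the bound would be $1$ rather than $2$. A component with a \emph{single} frontier of degree $2$ or $3$ can have two unaffected neighbors (your claim that a degree-$\le 2$ frontier always has an affected neighbor fails when the cluster is affected only because it has an affected child, or because its merge partner was deleted, e.g.\ an origin cluster) and can spread affection to both in one round, going from $1$ frontier to $2$. The paper's proof therefore splits into two cases --- one frontier versus two frontiers --- and the whole difficulty is showing that a component \emph{already at} two frontiers cannot grow further, while a component at one frontier may legitimately branch to two.

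The degree-$3$ case is where your argument actually fails. The dangerous scenario is two degree-$3$ frontiers $c_1,c_2$ in the same component, each with one affected neighbor (by Lemma~\ref{lem:deg_3_affected}) and two unaffected ones, each spreading to both: that would give four frontiers. Your justification --- that a degree-$3$ frontier can only spread via the first rule of Lemma~\ref{lem:spreading} and that ``the cluster it previously contracted with is unique'' --- does not bound the number of neighbors it spreads to: that rule conditions on the \emph{neighbor's} deleted merge partner and on the frontier being newly formed, so a newly formed degree-$3$ frontier could in principle affect both of its unaffected neighbors. The paper closes this case with a different structural fact: a degree-$3$ cluster only ever \emph{becomes} a frontier when the unique degree-$1$ cluster it merged with in $T_i$ is replaced, and every cluster of the sub-component whose path to it ran through that degree-$1$ cluster must already be affected; hence at the moment a degree-$3$ cluster becomes a frontier it is the \emph{only} frontier of its sub-component, reducing the dangerous scenario to the benign one-frontier case. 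Without an argument of this kind, the two-degree-$3$-frontier case is not excluded and the inductive step does not go through.
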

\begin{proof}
    We proceed by induction. At $A_j^0$ we see that this is trivially true as each sub-component only contains $1$ affected cluster.

    Consider a sub component at round $i$. Suppose the sub-component has only $1$ frontier cluster. If the frontier cluster is degree $1$ it can spread to at most $1$ unaffected neighbor and when it does it is no longer a frontier so the number of frontiers remains the same. If it is degree $2$ or $3$ it has only 2 unaffected neighbors (by Lemma~\ref{lem:deg_3_affected} for the degree $3$ case) and can spread to at most these $2$ unaffected neighbors. If it spreads to $1$ neighbor then we have added $1$ new frontier and the no. of frontiers is $\le 2$. If it spreads to both adjacent unaffected neighbors the original frontier cluster is no longer a frontier and the no. of frontier clusters is still $\le 2$.

    Now consider a sub component with $2$ frontier clusters, $c_1$ and $c_2$. Due to the fact that the partitioning scheme preserves the underlying connectedness of the affected subtrees, we note that both $c_1$ and $c_2$ are adjacent to at least $1$ affected cluster (the cluster joining it to its affected component subtree). 
    Suppose that a frontier cluster is degree $2$ (we cannot have a degree $1$ frontier cluster in this case as it has only 1 neighbor which is affected and as such it cannot be a frontier). This degree $2$ cluster has $1$ affected neighbor and spreads to at most $1$ new unaffected cluster (by Lemma~\ref{lem:spreading}). When it spreads to a neighboring cluster, all of its neighbors become affected and hence it is no longer an frontier cluster. If both $c_1$ and $c_2$ do this then the no. of frontier clusters is still at most $2$.

    Suppose now that we have a degree $3$ frontier cluster. 
    A clear issue with spreading affection via the rules described in \ref{lem:spreading}, is that if both $c_1$ and $c_2$ are degree $3$ frontier clusters then if they each spread affection to both neighbors then the total no. of frontiers increases by $2$. We show that this is not possible.

    Consider the first instance in which a degree $3$ cluster gets affected. 
    Either the degree $1$ cluster it clustered with in $T_i$ is now replaced with some affected cluster (rule (\ref{rule:spreading1}) of  Lemma~\ref{lem:spreading}), or it is spread affection to by an adjacent affected degree $1$ cluster (rule (\ref{rule:spreading2}) of lemma \ref{lem:spreading}).
    The latter case is not relevant to us as in it we only have $1$ frontier cluster. 
    Similarly, the case where the cluster it contracted with in $T_i$ is affected via a link/cut is also not something we have to consider, as we have shown previously that origin clusters spread affection in a manner disjoint to each other.
    Now suppose a degree $3$ cluster $u$ gets affected as the consequence of a frontier cluster (WLOG $c_1$) replacing some degree $1$ cluster $v$ that $u$ had clustered with in $T_i$ (see Figure~\ref{fig:deg3_frontier}).
    \begin{figure}
        \centering
        \includegraphics[width=0.9\linewidth]{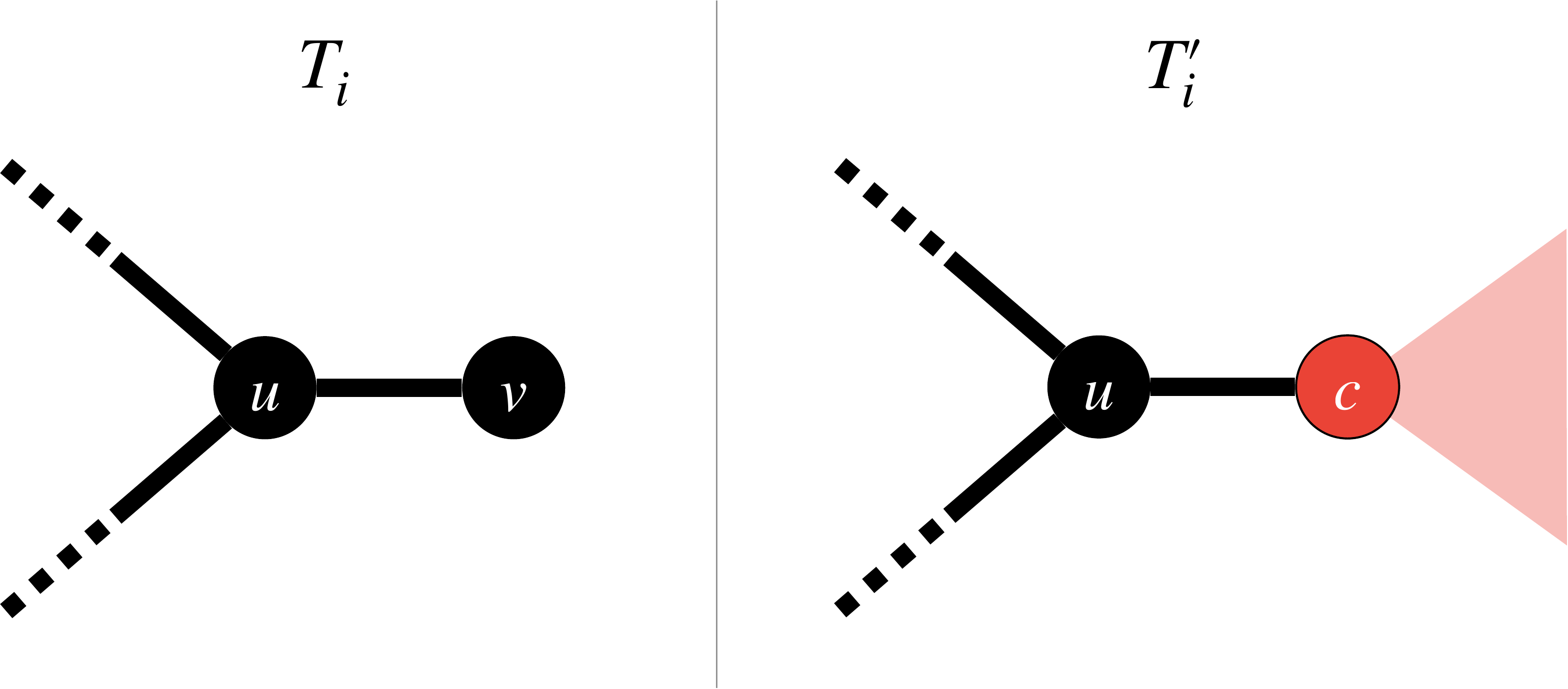}
        \caption{A depiction of affection spreading to a degree 3 cluster $u$ from frontier cluster $c$. If this happens, it must be the case that $c$ was the only frontier cluster.
       }
        \label{fig:deg3_frontier}
    \end{figure}
     We know that the contraction operations preserve the connectedness of the underlying subtree and that all clusters in this sub-component that had a path to $u$ via either $v$ or an ancestor of $v$ in $T_i$ must have compressed into the degree $1$ cluster that $c_1$ has replaced. 
    As a consequence, all clusters that are in this sub-component and have a path to $u$ via $c_1$ in $T_i'$ must be affected. 
    We further explain this rationale below: Since all of these clusters contracted with each other and the origin cluster to form a degree $1$ cluster in $T_i$, if they all contracted with each other in the same manner again then they have contracted with at least $1$ affected cluster and as such $c_1$ is now a degree $1$ frontier which we do not consider. 
    If a different series of contractions have taken place then since each of these clusters contracted with each other, the clusters they contracted with will have inevitably been replaced and as such they are now affected.
    Thus, when the degree $3$ cluster becomes a frontier, it must be the only frontier cluster in its sub-component and thus we are back to our case with $1$ frontier cluster.

    This is not to say we cannot have a sub-component with $2$ frontier clusters where $1$ of them is a degree $3$ cluster.
    A degree $3$ cluster can be a frontier cluster and spread affection to $1$ of its neighbors in a round and only spread affection in that branch and remain as a frontier cluster. 
    However, as shown above, degree $2$ vertices cannot spread affection in a way that causes the no. of frontier clusters to be $> 2$ and we cannot add a degree $3$ affected cluster without causing all other vertices in the sub-component to be affected first. 
    Furthermore, when the degree $3$ cluster spreads affection to its other branch, it is no longer a frontier and thus the no. of frontiers remains $\le 2$.
\end{proof}

\begin{lemma}\label{lem:affectedadded_subc}
    For any subcomponent $A_j^i$, $|A_j^{i + 1}| - |A_j^i| \le 8$.
\end{lemma}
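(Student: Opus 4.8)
The plan is to bound the growth of a single subcomponent $A_j^i$ between consecutive rounds by tracking how affection spreads outward from its frontier clusters. By Lemma~\ref{lem:constant_frontiers}, each subcomponent has at most $2$ frontier clusters at round $i$. The first step is to observe that only frontier clusters can spread affection to previously unaffected clusters during round $i$: this is immediate from Definition~\ref{def:frontiers} together with Lemma~\ref{lem:spreading}, since both spreading rules require the newly-affected cluster $c$ to have an affected neighbor $n$ at round $i$, and such an $n$ is by definition a frontier cluster. So all new affected clusters added to $A_j^i$ during round $i$ must be neighbors (in $T_i'$) of one of the at most $2$ frontier clusters of $A_j^i$.

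Next I would bound how many unaffected neighbors a single frontier cluster can spread affection to in one round. A frontier cluster has degree at most $3$ in $T_i'$ (by the degree bound of topology trees, after ternarization). A degree $1$ frontier cluster has exactly one unaffected neighbor; a degree $2$ frontier cluster has at most two unaffected neighbors; a degree $3$ frontier cluster has at most two unaffected neighbors, since by Lemma~\ref{lem:deg_3_affected} at least one of its three neighbors is already affected. So a single frontier cluster spreads affection to at most $2$ new clusters in round $i$ via rules (\ref{rule:spreading1}) and (\ref{rule:spreading2}) of Lemma~\ref{lem:spreading}. With at most $2$ frontier clusters per subcomponent, this gives at most $4$ clusters that newly become affected ``directly.''

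Then I would account for the second-order spreading: each of the (at most $4$) newly affected clusters can, in the same round, itself be a frontier and spread affection further via rule (3) of Definition~\ref{def_affection}. A cluster freshly affected via spreading rule (\ref{rule:spreading2}) (it can merge with an affected neighbor but did not contract) has its remaining neighbors available; arguing exactly as in the proof of Lemma~\ref{lem:initialaffected}, each such cluster can spread to at most one further unaffected neighbor in the same round (one of its $\le 2$ relevant neighbors is the frontier cluster that affected it). That contributes at most $4$ more. Summing, the total number of clusters added to the subcomponent in round $i$ is at most $4 + 4 = 8$, giving $|A_j^{i+1}| - |A_j^i| \le 8$. (The constant here is deliberately loose; any fixed bound suffices for the asymptotic argument, so I would not try to optimize it.)

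The main obstacle I expect is correctly handling the degree-$3$ case and making sure the second-order spreading argument does not double-count or undercount — in particular, verifying that a cluster newly affected in round $i$ cannot trigger a chain of more than one further level of spreading within the same round, and that the degree-$3$ spreading behavior established in Lemma~\ref{lem:constant_frontiers} (namely, that a degree-$3$ cluster only becomes a frontier when it is the unique frontier of its subcomponent) is compatible with the counting. I would lean heavily on Lemma~\ref{lem:spreading} and Lemma~\ref{lem:constant_frontiers} to keep this bookkeeping under control, and on the fact that the structure of a round (Figure~\ref{fig:round_i}) means spreading happens only once per round after the level-$i$ contractions are fixed.
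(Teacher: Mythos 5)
Your proof follows essentially the same route as the paper's: bound the number of frontier clusters of the subcomponent by $2$ via Lemma~\ref{lem:constant_frontiers}, bound the direct spreads per frontier by $2$ using Lemma~\ref{lem:deg_3_affected}, then account for one further level of second-order spreading within the round to reach the total of $8$. The only difference is bookkeeping of the second-order spreads --- the paper charges both extra spreads per frontier to the neighbor affected via rule~(\ref{rule:spreading2}) (noting the rule-(\ref{rule:spreading1}) neighbor contracted in $T_i$ and hence cannot spread further), whereas you distribute at most one to each directly-affected neighbor --- but both allocations yield the same constant, and any fixed constant suffices.
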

\begin{proof}
    By Lemma~\ref{lem:constant_frontiers}, we have at most $2$ frontier clusters. From Lemma~\ref{lem:deg_3_affected}, we know that each of these clusters can have at most 2 unaffected neighbors. 
    Affection spreads to at most these unaffected neighbors. 
    Since one of these was spread via rule (\ref{rule:spreading1}) and the other via rule (\ref{rule:spreading2}), one of them can possibly add an additional $2$ affected neighbors via rule (\ref{rule:spreading2})  of \ref{lem:spreading} (which cannot add any further neighbors as they did not merge with anything at this level). Thus the total no. of affected vertices added in a sub-component per round is at most $2 \times (2 + 2) = 8$
\end{proof}

\begin{lemma}\label{lem:component_affected}
    The total no. of affected clusters at any level $i$ and any sub-component $j$ is at most $48$ i.e for $i \ge 0, A^i_j \le 48$
\end{lemma}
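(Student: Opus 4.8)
The plan is to bound $A^i_j$ by combining the per-round growth bound of Lemma~\ref{lem:affectedadded_subc} with the observation that a constant fraction of the affected clusters in a subcomponent must contract away at each round, so the size of the subcomponent cannot grow without bound. First I would fix a subcomponent index $j$ and track the sequence $(|A^i_j|)_{i \ge 0}$. By Lemma~\ref{lem:initialaffected} (and the partitioning into at most $10$ origin clusters) we have $|A^0_j| \le$ a small constant — in fact each subcomponent starts with a single origin cluster, so $|A^0_j| = 1$. By Lemma~\ref{lem:affectedadded_subc}, $|A^{i+1}_j| \le |A^i_j| + 8$, i.e.\ the subcomponent can gain at most $8$ new affected clusters per round. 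The missing ingredient is a matching statement that it must also \emph{lose} a constant fraction per round: the affected clusters in $A^i_j$ induce a subtree (the partitioning preserves connectedness), and running one round of topology-tree contraction on that subtree — respecting the maximality invariant — contracts at least a $1/6$ fraction of its clusters, exactly as in Lemma~\ref{lem:topology_const_frac}. Those contracted clusters do not appear at level $i+1$ unless re-created by a spreading rule, which is already accounted for in the $+8$ term.

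The key steps, in order, would be: (1) argue that the clusters of $A^i_j$ that \emph{survive} to level $i+1$ (as opposed to the new ones introduced via Lemma~\ref{lem:spreading}) number at most $\tfrac{5}{6}|A^i_j|$, by applying the same chain-collapsing / leaf-counting argument used in the proof of Lemma~\ref{lem:topology_const_frac} to the affected subtree; (2) conclude the recurrence $|A^{i+1}_j| \le \tfrac{5}{6}|A^i_j| + 8$; (3) solve this recurrence: its fixed point is $6 \cdot 8 = 48$, and since $|A^0_j| \le 48$, an easy induction gives $|A^i_j| \le 48$ for all $i \ge 0$. Concretely, if $|A^i_j| \le 48$ then $|A^{i+1}_j| \le \tfrac{5}{6}\cdot 48 + 8 = 40 + 8 = 48$, closing the induction. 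I would present step (3) as the explicit induction and cite Lemma~\ref{lem:affectedadded_subc} and the contraction-fraction fact for steps (1)–(2).

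The main obstacle I anticipate is step (1): making rigorous the claim that \emph{within a single subcomponent} a $1/6$ fraction of the currently-affected clusters contracts in the next round. Lemma~\ref{lem:topology_const_frac} is proved for a whole topology-tree level, where the maximality of the matching is a global property; here I need it for the affected subtree in isolation, being careful about (a) boundary clusters of the subcomponent that may contract with clusters \emph{outside} $A^i_j$ (this only helps — it removes an affected cluster), and (b) the fact that a frontier cluster might \emph{not} contract because the update algorithm is mid-reclustering. Point (b) is the delicate one: I would handle it by noting that there are at most $2$ frontier clusters per subcomponent (Lemma~\ref{lem:constant_frontiers}), so at most a constant number of clusters are exempt from the contraction-fraction argument, and the $1/6$-fraction bound degrades only by an additive constant, which can be absorbed by slightly enlarging the final constant ($48$ may need to be replaced by a larger but still universal constant). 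An alternative, cleaner route that sidesteps (b) entirely: use a potential/amortization argument over the \emph{total} number of affected clusters across all subcomponents summed against the number of contractions, but the direct recurrence above is more transparent and is what I would write up.
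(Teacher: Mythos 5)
Your proposal follows essentially the same route as the paper's proof: at most $2$ frontier clusters per subcomponent (Lemma~\ref{lem:constant_frontiers}), a $1/6$-contraction fraction applied to the affected subtree, the $+8$ per-round growth from Lemma~\ref{lem:affectedadded_subc}, and the resulting geometric recurrence solved by induction. The obstacle you flag in step (1) --- frontier clusters being exempt from the contraction argument --- is exactly what the paper handles by writing the recurrence as $A^{i+1}_j \le 8 + \tfrac{5}{6}(A^i_j - 2) + 2$, which, just as you anticipated, pushes the fixed point from $48$ up to $50$; so the paper's own proof in fact establishes the constant $50$ rather than the $48$ stated in the lemma.
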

\begin{proof}
    Since any affected sub-component  consists of only $2$ frontier clusters, the remainder of the affected sub-component can be thought of as its own connected subtree. 
    Consequently, since a maximal number of contractions must take place in this subtree, at least $1/6 - 2$ of the clusters must contract away. 
    Between any $2$ levels we add at most $8$ new root clusters. These 2 facts in combination give us the following inequality:
    $$A^{i + 1}_j \le 8 + \left(\frac{5}{6}\right)(A^i_j - 2) + 2.$$
    \noindent Thus, since $A^0_j = 1$,
    $$A^{i+1}_j \le \frac{50}{6} \cdot \sum_{r=0}^{\infty} \left(\frac{5}{6}\right)^r \le 50.$$
\end{proof}

We can now complete the proofs of the lemmas stated at the beginning of this analysis: \ref{lem:topology_root_clusters} and \ref{lem:topology_deleted}.

\topologyrootclusters*
\begin{proof}
    There are at most $10$ subcomponents and Lemma~\ref{lem:component_affected} proves that each subcomponent can have at most $A_j^i \le 50$. Therefore the total number of affected vertices per level is at most $10 \cdot 50 \le 500$.
\end{proof}

\topologydeleted*
\begin{proof}
    We utilize a simple level by level approach to prove this lemma. 
    Consider level $0$ in $T'$. A link/cut creates at most $10$ root clusters, each of which made a call to remove ancestors and deleted its respective parent in $T_1$. 
    As a consequence, of Lemma~\ref{lem:topology_const_frac}, $1/6$ of these must have previously contracted and as such at most $5/6$ of these cluster i.e. their parents in $T_1$ were deleted. 
    Consider now, what happens at $T_1$. By Lemma~\ref{lem:affectedadded_subc}, $8$ new clusters are added per subcomponent, of which at most $6$ clusters are added via rule (3) of Definition~\ref{def_affection} and as such call remove ancestors. 
    Since there are $10$ subcomponents, $\le 60$ clusters are added in this manner. 
    Thus the number of parents deleted at level $2$ is at most $5/6$ of the sum of number of clusters deleted at the previous level plus $60$.
    This pattern continues at each level up, thus the number of deletions per level is bounded by a constant:

    $$\text{\# Deletions} \le 60 \cdot \sum_{r=0}^{\infty} \left(\frac{5}{6}\right)^r\le 360.$$
\end{proof}

The proof of lemmas \ref{lem:topology_root_clusters} and \ref{lem:topology_deleted} now allow us to complete the proof of the total cost of the update algorithm for topology trees:

\begin{lemma}\label{lem:topology_update_cost}
    The cost of edge insertions and deletions in a topology tree with $n$ vertices is $O(\log n)$.
\end{lemma}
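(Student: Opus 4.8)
The plan is to combine the three structural facts already established: the topology tree has $O(\log n)$ levels (Lemma~\ref{lem:topology_height}), during an update each level has $O(1)$ root clusters (Lemma~\ref{lem:topology_root_clusters}), and each level has $O(1)$ deleted clusters (Lemma~\ref{lem:topology_deleted}). The update algorithm proceeds level by level: it first deletes the ancestors of the two endpoints (a path of $O(\log n)$ clusters), then reclusters bottom-up, and occasionally triggers additional \textsc{DeleteAncestors} calls when a root cluster merges with a non-root cluster. So the total cost is the sum over all $O(\log n)$ levels of (work to recluster that level's root clusters) $+$ (work to delete that level's deleted clusters).

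First I would bound the reclustering work at a single level $i$. Each root cluster at level $i$ has degree $\leq 3$ (since the input is ternarized), so scanning its neighbor list to find an eligible merge partner and updating adjacency/child lists costs $O(1)$. By Lemma~\ref{lem:topology_root_clusters} there are $O(1)$ root clusters at level $i$, so the reclustering work at level $i$ is $O(1)$. Next I would bound the deletion work at level $i$: each deleted cluster likewise has degree $\leq 3$ and fanout $\leq 2$, so removing it from its neighbors' adjacency lists, detaching it from its parent's child list, and freeing it costs $O(1)$; by Lemma~\ref{lem:topology_deleted} there are $O(1)$ such clusters, giving $O(1)$ deletion work at level $i$. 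I should also note that the extra \textsc{DeleteAncestors} calls triggered during reclustering do not cause double counting: every cluster they delete is one of the $O(1)$ deleted clusters accounted for at its level by Lemma~\ref{lem:topology_deleted} (whose proof already folds these cascading deletions into the count), and the initial ancestor-deletion of the two endpoints touches at most $2$ clusters per level.

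Summing $O(1)$ per level over the $O(\log n)$ levels (Lemma~\ref{lem:topology_height}) gives a total cost of $O(\log n)$, completing the proof. The main subtlety — really the only place care is needed — is making sure the accounting of the repeatedly-invoked \textsc{DeleteAncestors} calls is consistent with Lemma~\ref{lem:topology_deleted}: one must be explicit that ``number of clusters deleted at level $i$'' in that lemma already includes all clusters removed by any cascading ancestor-removal during the update, so that the per-level $O(1)$ deletion bound is genuinely a bound on total deletion work at that level rather than per-invocation. Once that is stated, everything else is a routine geometric-sum argument, and the bound follows immediately.
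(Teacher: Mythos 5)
Your proposal is correct and follows essentially the same route as the paper's own proof: it combines the $O(\log n)$ height bound (Lemma~\ref{lem:topology_height}) with the per-level $O(1)$ bounds on root clusters (Lemma~\ref{lem:topology_root_clusters}) and deleted clusters (Lemma~\ref{lem:topology_deleted}), using the constant degree and fanout of clusters to get $O(1)$ work per level. Your extra remark that the cascading \textsc{DeleteAncestors} invocations are already absorbed into the per-level deletion count is a worthwhile clarification, but it does not change the argument.
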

\begin{proof}
    At any level, each root cluster may examine its (at most 3) neighbors and create a new cluster, while some clusters may be deleted. By Lemma~\ref{lem:topology_root_clusters}, the number of root clusters per level is constant, and by Lemma~\ref{lem:topology_deleted}, the number of deleted clusters per level is also constant. Since there are $O(\log n)$ levels in the topology tree (Lemma~\ref{lem:topology_height}), the overall cost of an update is $O(\log n)$.
\end{proof}

To conclude this section, we provide a proof which shows that the topology tree obtained following an update is a valid topology tree.

\begin{lemma}\label{lem:topology_update_correct}
    The algorithm for edge insertion or deletion in a topology tree results in a valid topology tree following the update.
\end{lemma}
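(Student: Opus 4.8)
The plan is to prove the statement by induction over the levels of the tree, carrying a package of invariants that together say ``every level is a valid round of topology-tree contraction.'' Concretely, after the update I would show that for every level $i \ge 0$: (I1) the level-$i$ clusters partition the vertex set $V$; (I2) the level-$i$ cluster graph $G_i$ (two clusters adjacent iff some input edge runs between them) is exactly the contraction of the input tree by that partition, and each cluster's stored adjacency list equals its neighborhood in $G_i$; (I3) every level-$i$ cluster with $i>0$ is the result of one of the four allowed merges applied to its one or two children, which are level-$(i-1)$ clusters; and (I4) the set of merges taking level $i-1$ to level $i$ is a maximal matching of $G_{i-1}$ with respect to the allowed-merge relation. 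Once (I1)--(I4) hold, termination of the contraction (each component collapsing to a single cluster, so the structure really is a finite rooted forest) is immediate from the geometric shrinkage guaranteed by Lemma~\ref{lem:topology_const_frac}, so (I1)--(I4) suffice.

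The base case $i=0$ is routine: leaf clusters are never created or destroyed by the algorithm, and the adjacency-update step rewrites exactly the neighbor lists of $u$ and $v$, so $G_0$ is the new input tree. The induction then rests on one structural observation that I would establish first: because the only changed input edge is $(u,v)$ and the first phase deletes \emph{all} ancestors of the leaf clusters of $u$ and $v$, any cluster that survives the first phase and is not touched by reclustering has unchanged content (it contains neither $u$ nor $v$), hence an unchanged neighborhood in its level's cluster graph, hence an unchanged merge; such clusters therefore inherit (I1)--(I4) verbatim from the pre-update tree, which is valid by hypothesis. This reduces the whole argument to checking the clusters that the reclustering loop creates or re-exposes as root clusters.

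For the inductive step I would analyze a single reclustering round at level $i$, assuming (I1)--(I4) up to level $i$. Invariants (I1)--(I3) for level $i+1$ are then largely mechanical: the loop only ever installs a parent via one of the four allowed merges (so (I3) holds), it never drops or duplicates a vertex (so (I1) holds), and the adjacency-population step makes the merged edges disappear and relabels the surviving edges onto the new clusters, which is precisely the contraction defining $G_{i+1}$ (so (I2) holds) --- the one place needing care is confirming that an edge is removed at level $i+1$ exactly when its two level-$i$ endpoints were merged, including the edge $(u,v)$ itself once it becomes internal to a cluster. The real content is (I4). At the end of the round a cluster is unmatched only if it is either a root cluster every neighbor of which is already matched or would form a disallowed merge --- which the loop rules out, since it inspects \emph{all} neighbors of \emph{every} root cluster and performs any available allowed merge --- or a non-root cluster that was already unmatched before the update, in which case any cluster it could validly merge with is either a root cluster the loop already considered or a non-root cluster whose unchanged state already forbids the merge by maximality of the pre-update tree. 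Either way no allowed merge between two unmatched clusters survives, so the matching is maximal. Finally I would verify that the invariants propagate upward: whenever a root cluster merges with a non-root cluster $c$ whose parent is $p$, the algorithm deletes the ancestors of $p$, so $p$ becomes a root cluster at level $i+1$; since $p$ is the only level-$(i+1)$ cluster whose content changed, this is exactly the set of clusters that must be re-examined there, which is what the next iteration of the loop does.

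I expect the main obstacle to be the maximality invariant (I4), and within it two points in particular: first, that the greedy, order-dependent way the loop pairs up root clusters cannot leave an allowed merge between two unmatched clusters --- this I would handle by the standard greedy-maximal-matching argument applied to $G_i$, using the structural observation above that every cluster whose state could have changed is re-exposed as a root cluster, so ``stale'' non-maximal configurations cannot persist; and second, keeping the bookkeeping honest for degree-$3$ clusters, which have fanout $1$ and can only merge with a degree-$1$ neighbor, so one must check that such a cluster is given a fresh singleton parent exactly when all of its degree-$1$ neighbors are already matched. The adjacency-consistency check for (I2) across the level where merged edges must vanish is a second, more tedious but entirely routine, source of case work.
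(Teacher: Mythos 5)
Your proposal is correct and follows essentially the same route as the paper's proof: both verify that every merge is between adjacent clusters and of an allowed type, and both establish maximality by splitting clusters into those re-exposed as root clusters (for which the algorithm explicitly inspects all neighbors) and untouched non-root clusters (which inherit maximality from the pre-update tree because their content and merges are unchanged, with the ancestor-deletion step re-exposing exactly the clusters whose content did change). Your packaging as an explicit level-by-level induction with invariants (I1)--(I4) is more formal than the paper's direct verification, but the substance is the same.
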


\begin{proof}
    We define a valid topology tree as one that satisfies the definition given in Section~\ref{sec:topology_definition}.
    
    First, note that after any remove ancestor call, root clusters can only combine with other clusters that were either not deleted or ones that they became adjacent to as a consequence of insertions into their adjacency list from a previous level.
    Thus, clusters can only be combined with other clusters that they are adjacent to at their level during reclustering.
    Furthermore, a cluster is inserted into the adjacency list of another cluster if and only if it was either adjacent to a child of that cluster or vice versa.
    Consequently, the degree of any cluster never grows to be $> 3$.
    Second, combinations are only permitted in the way listed in Section~\ref{sec:topology_definition}.
    Thus, we have that, at any level, combinations take place between adjacent clusters and only in such a way that satisfies the definition of topology trees.
    Lastly, to prove that the clustering at each level is maximal, we will show that every cluster obeys maximality, i.e., either it combines with another cluster or all its neighbors combine with a different cluster.
    Here, we use \emph{new clusters} to denote any cluster in the new topology tree that was a root cluster at some point during the update algorithm. Other clusters in the new tree are referred to as \emph{old clusters}.
    It can be observed that any new cluster obeys maximality since the update algorithm checks all of its neighbors and attempts to combine it with them.
    
    For old clusters only adjacent to other old clusters, they must obey maximality because all of their neighbors are exactly the same as the old topology tree (the tree before the update), and the parent of an old cluster never gets deleted.
    For an old cluster $X$ adjacent to one or more new clusters, either $X$ combined with another old cluster and still does, or it is not able to contract with any old cluster. 
    In the latter case, if one of the adjacent new cluster(s) does not combine with anything else and it can combine with $X$, then it does. 
    Therefore, $X$ obeys maximality.
\end{proof}

%

\subsection{Proofs of Theorems~\ref{thm:ufo_tree_properties} and \ref{thm:ufo_tree_diameter}}\label{app:ufo_efficiency_proofs}
\begin{lemma}\label{lem:ufo_const_frac}
    At any level $l>0$ of a UFO tree, the number of clusters is at most $5/6$ times the number of clusters at level $l-1$.
\end{lemma}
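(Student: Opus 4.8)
The plan is to mirror the proof of Lemma~\ref{lem:topology_const_frac} almost verbatim, with the only substantive change appearing in the step that handles degree-$1$ clusters. Writing $N$ for the number of level $l-1$ clusters (applying the argument to each tree of the forest they induce), it suffices to show that round $l$ reduces the cluster count by at least $N/6$. As in the topology case, I would first collapse every maximal chain of degree-$1$ and degree-$2$ clusters into a single edge (when the chain has two high-degree endpoints) or a single leaf (when it has one high-degree endpoint and a degree-$1$ end). The resulting tree on $N'$ nodes has only leaves and internal nodes of degree $\geq 3$, and a degree count --- the sum of all degrees equals $2(N'-1)$ and is at least $3I + L$, where $I$ and $L$ count the internal nodes and leaves with $I + L = N'$ --- yields $L \geq N'/2 + 1$. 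The sole difference from the topology proof here is that ``degree exactly $3$'' becomes ``degree $\geq 3$''.

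The place where the UFO argument diverges from, and in fact simplifies, the topology argument is the leaf-merging count. In a topology tree a degree-$3$ cluster can absorb at most one degree-$1$ neighbor, so one only obtains that at least half the collapsed leaves merge, for a reduction of $\geq N'/4$. In a UFO tree the fourth merge rule \emph{forces} every high-degree cluster to merge with \emph{all} of its degree-$1$ neighbors, so in the base case where each collapsed leaf is a single degree-$1$ vertex adjacent to a high-degree cluster, \emph{every} such leaf is absorbed, for a reduction of at least $N'/2 + 1$ --- strictly more slack than the topology proof enjoyed. The two degenerate shapes are handled just as in Lemma~\ref{lem:topology_const_frac}: a single path is covered entirely by the chain accounting below, and for a star with a high-degree center the reduction is an immediate $\geq 3/4$.

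Finally I would reintroduce the collapsed chains, following the same chain-by-chain accounting as in Lemma~\ref{lem:topology_const_frac}. A single degree-$2$ vertex that became an edge contributes no merge but, having a high-degree neighbor on each side, can only be one of at most a $1/3$ fraction of the nodes; and along every longer chain the maximality of the matching over degree-$1$/degree-$2$ clusters --- the same requirement as for topology trees --- forces at least a $1/3$ fraction of the chain's clusters to participate in a merge, keeping the reduction-to-size ratio within the chain at least $1/6$. Reattaching a chain never destroys a reduction already counted, except when two former leaves shared a high-degree neighbor, in which case the single lost reduction is reassigned to a merge inside the shorter of the two reattached chains, exactly as before; the generous slack from the leaf step ($1/2$ against the required $1/6$) easily absorbs the remaining short-chain edge cases. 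Combining the leaf contribution with the per-chain contributions gives a total reduction of at least $N/6$, which is the claim. I do not anticipate any genuinely new obstacle: every step is either literally the topology argument or an easier version of it, so the only real work is the routine bookkeeping of reattaching chains and checking the handful of degenerate configurations.
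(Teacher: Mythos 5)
Your proposal is correct and follows essentially the same route as the paper's proof: collapse chains of degree-$1$/degree-$2$ clusters, count that at least $N'/2+1$ of the remaining nodes are leaves, observe that in a UFO tree every such leaf is absorbed by its high-degree neighbor (the paper conservatively states only $N'/4$ combinations here, but the stronger count you give is valid), and then reintroduce the chains with the same $1/3$-per-chain maximal-matching accounting. The paper's own write-up is exactly this ``same as Lemma~\ref{lem:topology_const_frac} except the leaf step is easier'' argument, so there is nothing further to reconcile.
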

\begin{proof}
    The proof of this is similar to that of Lemma~\ref{lem:topology_const_frac}.
    Imagine replacing every chain in the tree with an edge or vertex in the same way as Lemma~\ref{lem:topology_const_frac}. In the resulting tree of $N$ nodes (which contains no degree 2 nodes), the number of leaves is at least 2 more than the number of other nodes, so at least $N/2+1$ of the nodes are leaves. In a UFO tree, all of these leaves combine with their neighbor (unless the tree is just two nodes in which case the lemma is trivial), since the neighbor of each leaf is high degree. Therefore at least $N/4$ combinations occur.
    The same logic as Lemma~\ref{lem:topology_const_frac} applies for reintroducing the chains. This proves that at least $N/6$ combinations occur.
\end{proof}

The previous lemma immediately implies bounds on the height and space of UFO trees.

\begin{lemma}\label{lem:ufo_height}
    The number of levels in a UFO tree for a tree with $n$ vertices is $O(\log n)$.
\end{lemma}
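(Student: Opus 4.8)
The statement to prove is Lemma~\ref{lem:ufo_height}: the number of levels in a UFO tree for an $n$-vertex input tree is $O(\log n)$.

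\textbf{Plan.} The proof is immediate from Lemma~\ref{lem:ufo_const_frac}, exactly mirroring the argument for topology trees in Lemma~\ref{lem:topology_height}. First I would recall that Lemma~\ref{lem:ufo_const_frac} establishes that every level $l>0$ of a UFO tree contains at most $5/6$ times as many clusters as level $l-1$. The leaf level (level $0$) has exactly $n$ clusters, one per input vertex. Therefore the number of clusters at level $i$ is at most $n\,(5/6)^i$. Contraction stops once a tree is reduced to a single cluster, so the number of levels is at most the smallest $i$ for which $n\,(5/6)^i < 1$, i.e., $i > \log_{6/5} n$. Hence the number of levels is $O(\log n)$.

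\textbf{Main obstacle.} There is essentially no obstacle here: all the work has already been done in establishing the geometric-decrease Lemma~\ref{lem:ufo_const_frac}, which is where the real content (the handling of chains and the unbounded-fanout merge rule) lives. The only minor point to be careful about is that the forest may contain several trees; the bound applies to each tree independently since the $5/6$ factor holds for the cluster count on each connected component, so the height of the whole structure is governed by the slowest-shrinking component, which still shrinks geometrically. This is the same reasoning used implicitly for topology trees, so I would simply state it in a single short paragraph paralleling the proof of Lemma~\ref{lem:topology_height}.
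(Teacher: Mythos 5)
Your proposal is correct and matches the paper's proof exactly: both deduce from Lemma~\ref{lem:ufo_const_frac} that the cluster count shrinks by a factor of at least $5/6$ per level, giving at most $\log_{6/5} n$ levels. The extra remark about handling multiple trees in a forest is a harmless (and reasonable) addition that the paper leaves implicit.
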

\begin{proof}
    Since Lemma~\ref{lem:ufo_const_frac} proved that each level has at most $5/6$ the number of clusters of the previous level, the total number of levels is at most $\log_{6/5}n$.
\end{proof}

\begin{lemma}\label{lem:ufo_space}
    The space usage of a UFO tree for an input tree with $n$ vertices is $O(n)$.
\end{lemma}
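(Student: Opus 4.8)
The plan is to mimic exactly the argument used for topology trees (Lemma~\ref{lem:topology_space}), now that the geometric decrease per level has already been established for UFO trees in Lemma~\ref{lem:ufo_const_frac}. The key inputs are: (i) the number of level-$0$ clusters is exactly $n$, the number of vertices in the input tree; (ii) by Lemma~\ref{lem:ufo_const_frac}, the number of clusters at level $l$ is at most $(5/6)$ times the number at level $l-1$, hence at most $n\cdot(5/6)^l$; and (iii) every cluster, regardless of its fanout or degree, occupies $O(1)$ space for its own record — the adjacency information and child information are stored as pointers/edges whose total count over the whole tree is bounded by the number of edges in the contracted forests, which is itself $O(n)$ since each contracted forest is a forest on at most $n\cdot(5/6)^l$ nodes.

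The main steps, in order, are: first, sum the geometric series $\sum_{l=0}^{\infty} n\cdot(5/6)^l = 6n$ to bound the total number of clusters (equivalently, nodes of the UFO tree) by $O(n)$. Second, observe that the edges maintained at level $l$ form a forest on the level-$l$ clusters, so there are at most $n\cdot(5/6)^l - 1$ of them; summing again gives $O(n)$ total adjacency entries across all levels. Third, the parent/child pointers: each non-root cluster has exactly one parent pointer, so there are $O(n)$ of those in total, and the child lists collectively store the same set of (child $\to$ parent) relationships, hence also $O(n)$ entries in aggregate. Adding these contributions yields $O(n)$ total space.

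The one point that deserves a sentence of care — and the closest thing to an obstacle — is that a single UFO-tree cluster may have unbounded fanout or unbounded degree, so one cannot naively say "$O(1)$ space per cluster times $O(n)$ clusters." The resolution is the amortized/aggregate view above: the total size of all adjacency lists at a given level equals twice the number of edges in that level's contracted forest, and the total size of all child lists equals the number of non-root clusters; both are governed by the same geometric series and are therefore $O(n)$ in total even though individual lists can be large. With that observation in place the bound follows immediately, and together with Lemma~\ref{lem:ufo_height} this completes the proof of Theorem~\ref{thm:ufo_tree_properties}.
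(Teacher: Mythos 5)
Your proposal is correct and takes essentially the same route as the paper's proof: the paper also bounds the per-level space by $O(x)$ (where $x$ is the number of clusters at that level) by noting that the induced subgraph at each level is a tree whose edges are each stored twice and that each cluster needs only $O(1)$ additional space, and then sums the geometric series from Lemma~\ref{lem:ufo_const_frac}. Your explicit remark that unbounded fanout and degree are handled in aggregate (adjacency entries bounded by forest edges per level, child lists bounded by parent pointers) is a welcome elaboration of the same idea.
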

\begin{proof}
    At each level of the UFO tree with $x$ clusters, the space is $O(x)$ since the induced subgraph is a tree, each edge is stored twice, and each cluster uses $O(1)$ additional space. Combining this with Lemma~\ref{lem:ufo_const_frac} gives us that the total space is bounded by $\sum_{i=0}^\infty O(n) \cdot (5/6)^i \leq 6\cdot O(n)$.
\end{proof}

Lemmas~\ref{lem:ufo_const_frac}, \ref{lem:ufo_height} and \ref{lem:ufo_space} complete the proof of Theorem~\ref{thm:ufo_tree_properties}.

\vspace{1em}
\begin{proof}[Proof of Theorem~\ref{thm:ufo_tree_diameter}]
   Consider any maximal path in the tree of length $d$ (i.e., no additional vertices can be added to extend the path). The endpoints of this path must be leaves. Focus on one of the endpoints, $x$, and let its neighbor on the path be $y$.
   If $y$ is of high degree or degree 1, then $x$ combines with $y$, reducing the length of the path by at least one.
   If $y$ is of degree 2, then either $x$ combines with $y$ or $y$ combines with its other neighbor $z$, which also lies on the path. In both cases, the length of the path decreases by at least one. The same argument applies symmetrically to the other endpoint of the path. 
   Thus, in each level of the UFO tree, the path shortens by at least two vertices. Therefore, the number of levels needed to fully contract the path is at most $\lceil d/2 \rceil$. Since the longest path in the input tree has length $\diam$, the bound follows. 

\end{proof}

\subsection{Sequential UFO Tree Proofs}\label{app:ufo_proofs}

As a result of the overlap in properties of UFO and Topology trees, many lemmas can be expressed with the same reasoning as those for topology trees with some minor differences. 
In this section, lemmas \ref{lem:ufo_root_clusters} -\ref{lem:ufo_deleted} establish the efficiency of the update algorithm and lemma \ref{lem:ufo_update_correct} establishes the correctness of the update algorithm.

For clarity, in the lemmas presented below we will refer to the UFO tree created before the update as $T$ and the one after the update as $T'$. Similarly, the tree induced at each round is expressed using $T_i$ and $T_i'$.

\begin{restatable}{lemma}{uforootclusters}\label{lem:ufo_root_clusters}
    During an edge insertion or deletion in a UFO tree there are $O(1)$ root clusters at any level.
\end{restatable}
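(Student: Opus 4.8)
The plan is to mirror the structure of the topology-tree argument (Lemmas~\ref{lem:topology_root_clusters}--\ref{lem:component_affected}) while adapting every step to the two new features of UFO trees: unbounded-degree clusters and the fact that the \fname{DeleteAncestors} subroutine deliberately does \emph{not} delete high-degree or high-fanout clusters. First I would reintroduce the notion of an \emph{affected cluster} in $T'$ (the analogue of Definition~\ref{def_affection}), but with one extra generating rule capturing the situation where a high-degree or high-fanout cluster survives \fname{DeleteAncestors} yet has its contents changed (one of its children disconnected, or an incident edge deleted): such a cluster, together with the clusters created when a degree-$1$ root cluster attaches to it in lines~\ref{line:ufo_update_high_deg_start}--\ref{line:ufo_update_high_deg_end}, must be flagged affected. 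The key claim to establish first is the analogue of the earlier observation that \emph{the set of root clusters produced during the update is a subset of the affected clusters at each level}; this follows by inspecting each branch of Algorithms~\ref{alg:ufo_update_1}~and~\ref{alg:ufo_update_2} and checking that every cluster inserted into $\vname{RootClusters}[i]$ satisfies one of the generating rules.

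Next I would re-prove the ``spreading'' lemma (the analogue of Lemma~\ref{lem:spreading}): an unaffected cluster that becomes affected at round $i$ has an affected neighbor at round $i$, where the neighbor either replaced the cluster that it contracted with, or did not contract and can now merge with it. The novelty here is that a high-degree center cluster can spread affection to \emph{all} of its degree-$1$ neighbors at once; so unlike in topology trees, the per-round growth bound needs a separate accounting for the degree-$1$ leaves hanging off an affected high-degree cluster. Crucially, though, those leaves are themselves terminal — a degree-$1$ cluster that merges with a high-degree center has no further neighbors to which it could spread affection — so the ``frontier'' of affection still cannot proliferate. I would then redo the origin-cluster / subcomponent partition (at most a constant number of origin clusters at level $0$, by the analogue of Lemma~\ref{lem:initialaffected}, counting the $\leq 2$ clusters of the update plus the clusters they contracted with plus the bounded spread), show each subcomponent has $O(1)$ frontier clusters (the analogue of Lemma~\ref{lem:constant_frontiers}), and conclude that each subcomponent adds $O(1)$ affected clusters per round. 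Combining this constant additive growth with the geometric contraction guaranteed by Lemma~\ref{lem:ufo_const_frac} (at least a $1/6$ fraction of each subcomponent's clusters contract away each round) gives, via the same geometric-series argument as Lemma~\ref{lem:component_affected}, an $O(1)$ bound on affected clusters per level per subcomponent, hence $O(1)$ root clusters per level overall.

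The main obstacle I anticipate is the frontier-counting step (the analogue of Lemma~\ref{lem:constant_frontiers}) in the presence of high-degree clusters. In topology trees the argument that a newly-affected degree-$3$ cluster forces its entire subcomponent to already be affected relied on the degree bound; for UFO trees one must argue the same for an arbitrary high-degree center cluster, using the fact that all of its degree-$1$ neighbors collapsed \emph{into it} in the previous round, so any affected path reaching the center must have passed through one of those collapsed leaves and hence the whole affected region on that side must have been affected already. A second, more technical wrinkle is that \fname{DeleteAncestors} can leave a cluster undeleted at one level but delete its ancestor at a higher level (the high-fanout cluster survives but its parent is purged), so the ``a cluster's parent is never deleted unless the cluster is affected'' invariant used for topology trees must be replaced by a slightly weaker statement that still suffices to bound the count; making this precise is where I expect most of the care to be needed. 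Everything else — the geometric series bookkeeping, the reduction to subcomponents, the terminal nature of degree-$1$ leaves — should go through essentially verbatim from the topology-tree analysis.
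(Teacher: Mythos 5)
Your proposal follows essentially the same route as the paper's proof: the paper likewise adapts the topology-tree affected-cluster framework to UFO trees, re-proves the spreading lemma, partitions the affected forest into at most $10$ subcomponents each with at most $2$ frontier clusters, and combines the constant per-round growth with the geometric contraction of Lemma~\ref{lem:ufo_const_frac} to get an $O(1)$ bound per level. The only notable difference is that the paper resolves your anticipated obstacle by observing that clusters of degree $\geq 4$ cannot spread affection at all (any such cluster has a high-fanout or degree $\geq 3$ parent that the update algorithm never deletes), which is slightly stronger than your ``terminal degree-$1$ leaves'' accounting but yields the same constant bound.
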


Borrowing from our topology tree proof, we approach this proof using a level-by-level approach. 






 

However, we no longer treat affection as a static property of clusters during an update but rather something that is spread by the initial affected clusters during a link/cut.
Effectively, we show how the existence of certain root clusters implies the existence of other root clusters at that level and that this considers all root clusters that can be created by the update algorithm.
The variation in our approach is due to the trickiness in defining a static definition for affection for UFO trees due to how the high degree and high fan-out cases are handled by our algorithm.

In the rest of the proof we slightly change the meaning of the terms {\bf low degree} and {\bf high degree} from our update algorithm - low degree now refers to a cluster with degree $\le 3$ and high degree a cluster with degree $\ge 4$.
\begin{definition}\label{lem:ufo_spreading}
    If a UFO tree cluster $c$ was unaffected at the beginning of round $i$, is affected at the beginning of round $i + 1$ (got affected during round $i$), and was not part of the initial link/cut operation, it must have an affected neighbor $n$ such that one of the following is true.
    
    \begin{enumerate}
        \item $n$ is low degree, did not exist in $T_i$, and the cluster $c$ combined with was deleted.
        \item $n$ is low degree and $c$ did not combine in $T_i$ but can now merge with $n$.
    \end{enumerate}
\end{definition}

Additionally, similar to topology trees, the clusters representing the endpoints of the update operation are affected, and any cluster with at least one affected child is affected.

Based on the lemma above, the consequent idea for the rest of the proof is that we can effectively use the previous affected clusters to figure out which of the unaffected clusters can become affected.
This is known as spreading affection.
This will serve as an upper bound on the no. of root clusters and allows us to bound this by a constant amount. 
Notice that high degree (degree $\ge 4$) clusters may become affected, but they cannot spread affection further to their unaffected neighbors. 
This follows from our algorithm as any such cluster would have either a high fan-out or degree $\ge 3$ parent which will not be deleted by the update algorithm.
All other cases of spreading affection are analogous to topology trees.

Similar to our proof for Lemma ~\ref{lem:topology_root_clusters} we will now refer to the set of affected vertices at round $i$ as $A^i$.

\begin{lemma}\label{lem:ufo_init_affected}
    Following the initial link/cut operation, we can have at most $10$ affected vertices in round $0$ i.e. $A^0 \le 10$.
\end{lemma}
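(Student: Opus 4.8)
The plan is to follow the template of the topology-tree analog, Lemma~\ref{lem:initialaffected}, tracking the ``origin'' clusters of affection from the two endpoints of the link/cut and bounding how far affection can propagate using the two spreading cases of Definition~\ref{lem:ufo_spreading}. First I would observe that the two clusters $u$ and $v$ named in the operation are affected by fiat, contributing $2$. Next, for each endpoint I would case on its degree in $T_0$. If the endpoint is low degree it merged with at most one partner along the matching; the parent it shares with that partner has fanout $2$ and low degree, hence is deleted by \fname{DeleteAncestors}, which turns the partner into a root cluster (hence affected) --- at most $1$ extra cluster per endpoint. If the endpoint is high degree, then either its parent is high fanout (never deleted, and the endpoint is moreover not disconnected since its degree exceeds $2$), or the parent has fanout $\le 2$, in which case deleting it exposes at most one degree-$1$ child as a root cluster; the degenerate subcase where the high-degree center was dropped to degree $2$ by an edge deletion is handled by the disconnection branch of \fname{DeleteAncestors} (lines~\ref{line:ufo_ra_disconnect_start}--\ref{line:ufo_ra_disconnect_end}), which simply makes the endpoint itself a root cluster. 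In every high-degree case the contribution is at most one additional affected cluster, no worse than the low-degree case, so this stage adds $\le 2$.

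After the edge is inserted or deleted, each endpoint (now a low-degree root cluster, in the only case that matters) can be drawn into at most one new merge in $T_0'$, pulling in at most one previously unmerged low-degree neighbor via case~(2) of Definition~\ref{lem:ufo_spreading}; that is another $\le 2$. Finally, the old merge partners, now low-degree root clusters, can each spread affection to their remaining unmerged low-degree neighbors: a degree-$2$ such cluster touches at most two unaffected neighbors, giving $\le 4$ more. Summing, $|A^0| \le 2 + 2 + 2 + 4 = 10$. The recursion terminates here because, by the remark following Definition~\ref{lem:ufo_spreading}, any high-degree cluster that becomes affected cannot spread affection further (its parent is high fanout or has degree $\ge 3$ and is never deleted), and the clusters pulled in at the last step did not merge with anything, so they trigger no further spreading; I would also invoke the disjointness of the per-origin sub-components (affection only ever spreads toward unaffected clusters, so two origins joined by an affected path block each other) to ensure these contributions are additive rather than double counted.

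The main obstacle I anticipate is the bookkeeping around high-degree endpoints, and specifically the case where deleting an edge drops an endpoint or one of its neighbors from degree $3$ to degree $2$, since then a cluster's former parent may have its merge invalidated even though the algorithm does not delete it; I need to argue carefully that this produces no affected cluster at level $0$ beyond those already counted --- the point being that the affected cluster created is just the degree-$2$ cluster itself (a root cluster), and the repair of its now-invalid ancestor is deferred to later levels rather than adding new level-$0$ affection. A secondary subtlety is ruling out the possibility that a high-degree endpoint absorbing many degree-$1$ neighbors creates many root clusters; this is prevented because such a merge produces a high-fanout parent that is never deleted, so none of those degree-$1$ children is exposed.
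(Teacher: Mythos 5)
Your proposal is correct and takes essentially the same route as the paper: the paper's own proof is a one-line reduction to the topology-tree version (Lemma~\ref{lem:initialaffected}), whose $2+2+2+4$ accounting of endpoints, former merge partners, and their respective spreads you reproduce, together with the same key UFO-specific observation that high-degree and high-fanout clusters cannot spread affection because their parents are never deleted. The only quibble is the phrase ``a degree-$2$ such cluster touches at most two unaffected neighbors'' --- a degree-$2$ former partner is adjacent to the already-affected endpoint and so has at most one unaffected neighbor, with the bound of two per partner coming from the degree-$3$ case --- but this does not change the total of $10$.
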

\begin{proof}
    This lemma follows from Lemma ~\ref{lem:initialaffected} for topology trees, as degree $\ge 4$ vertices cannot spread affection and in all other cases the spreading step is equivalent to that of topology trees.
\end{proof}

Like topology trees we again partition $A^i$ into disjoint sub-components with our $10$ origin clusters of affection. Going forward we will refer to the vertices that became affected as a result of affection being spread from the $j^{\text th}$ origin cluster as $A^i_j$.

We will now again work with frontier clusters (refer to definition ~\ref{def:frontiers} and the text that follows it) to show how existing affected clusters imply that other unaffected clusters at a level are affected. 

\begin{lemma}
    In any level $i$ sub-component of the affected forest $A^i_j$, the number of frontier clusters $\le 2$. 
\end{lemma}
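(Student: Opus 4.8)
The plan is to adapt the proof of Lemma~\ref{lem:constant_frontiers} for topology trees to the UFO-tree setting; the only genuinely new ingredient is handling the fourth (unbounded fan-out) merge rule. I would induct on the round $i$. For the base case $i=0$, each sub-component $A^0_j$ consists of a single origin cluster, which is its only candidate frontier, so the bound of two holds trivially. For the inductive step I would assume that every sub-component has at most two frontier clusters at the start of round $i$ and show the same at the start of round $i+1$, after the round-$i$ reclustering.

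The structural fact I would build the argument around is that in a UFO tree affection is spread only by \emph{low-degree} (degree $\le 3$) affected clusters: by Definition~\ref{lem:ufo_spreading}, whenever a previously unaffected cluster $c$ that was not part of the initial link/cut becomes affected during round $i$, it has an affected neighbor $n$ of degree $\le 3$ witnessing rule (1) or rule (2). A high-degree ($\ge 4$) affected cluster can be a frontier, but it cannot pass affection to its unaffected neighbors, because \fname{DeleteAncestors} never removes a cluster that is high degree or high fanout, so the parent such a cluster merges into is preserved, and neither rule of Definition~\ref{lem:ufo_spreading} can fire through it. Hence a high-degree frontier cluster is ``inert'' — it stays one frontier without creating new ones — and it suffices to bound the number of low-degree frontier clusters per sub-component.

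Given that reduction, the case analysis would mirror Lemma~\ref{lem:constant_frontiers} almost verbatim. If a sub-component has one frontier cluster, a degree-$1$ frontier spreads to at most one neighbor and then loses frontier status, while a degree-$2$ or degree-$3$ frontier has at least one affected neighbor — the UFO analogue of Lemma~\ref{lem:deg_3_affected} — hence at most two unaffected neighbors, so if it spreads to both it is no longer a frontier; the count stays $\le 2$ in every case. If a sub-component has two frontiers $c_1,c_2$, each is joined to the rest of its (connected) sub-component by an affected neighbor, so a degree-$2$ frontier spreads to at most one new cluster before losing frontier status, and two degree-$2$ frontiers still leave at most two. The crux is a degree-$3$ frontier inside a two-frontier sub-component, which naively could spread to two new clusters and push the total to four. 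I would rule this out with the same argument used for topology trees: at the first round in which a degree-$3$ cluster $u$ becomes affected, either it is affected via rule (2), in which case its sub-component has a single frontier and we are back to the previous case, or it is affected via rule (1) because the degree-$1$ cluster $v$ that $u$ merged with in $T_i$ was replaced on one branch by an affected cluster; since contractions preserve the connectivity of the underlying subtree, every cluster reaching $u$ through $v$ in $T_i$ compressed into $v$, so in $T_i'$ each such cluster either re-contracted identically (and thus merged with an affected cluster, making that branch a degree-$1$ frontier) or contracted differently (and thus has an affected cluster replacing its former partner), hence all are affected — so when $u$ becomes a frontier it is the unique frontier of its sub-component, and it loses frontier status as soon as it spreads on its other branch.

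The hard part will be establishing the UFO analogue of Lemma~\ref{lem:deg_3_affected}: that every degree-$3$ affected cluster has an affected neighbor. For topology trees this is immediate because the maximum degree is $3$, so no degree-$3$ cluster can be produced by a cut; in UFO trees, cutting an edge incident to a degree-$4$ cluster produces a degree-$3$ cluster, so I would have to argue this separately — for instance by observing that such a cluster's former merge partner, or one of its children, must itself be affected, or that the other endpoint of the deleted edge lies in an adjacent affected cluster. Reconciling this case, and the ``high-degree clusters are inert'' claim, with the conditional deletion carried out by \fname{DeleteAncestors} is where the real work of the proof lies.
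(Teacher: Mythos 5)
Your overall plan is the right one and matches the paper's: induct on the round, reuse the topology-tree case analysis for degree-$1$ and degree-$2$ frontiers, and observe that degree-$\ge 4$ affected clusters are inert (they cannot spread affection because \fname{DeleteAncestors} never removes their high-degree/high-fanout parent). You have also correctly located the one genuinely new difficulty: degree-$3$ affected clusters. But that is exactly where your proposal has a gap, and the route you sketch for closing it would fail. You propose to prove a UFO analogue of Lemma~\ref{lem:deg_3_affected}, i.e.\ that every degree-$3$ affected cluster has an affected neighbor. That statement is \emph{false} for UFO trees, and the paper explicitly treats the counterexample you yourself anticipate: a degree-$4$ cluster that loses an incident edge becomes a degree-$3$ affected cluster that may have no affected neighbor at its level. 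Your suggested patches (the former merge partner is affected, a child is affected, the other cut endpoint is in an adjacent affected cluster) do not produce an affected \emph{neighbor in $T_i'$ at the current level}, which is what the frontier-counting argument needs.

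The paper's resolution is different and is the missing idea: it does not try to rescue the analogue of Lemma~\ref{lem:deg_3_affected}, but instead handles the no-affected-neighbor case head-on. Such a degree-$3$ cluster was previously high degree, so under the UFO merge rules it can only combine with (and hence only spread affection to) its degree-$1$ neighbors; a degree-$1$ cluster whose unique neighbor is affected has no unaffected neighbor and therefore can never be a frontier. Consequently this degree-$3$ cluster remains the \emph{sole} frontier of its sub-component until it contracts away, and the bound of two is preserved without any claim about its neighbors being affected. Note also that this pathological case can only arise in the one-frontier scenario: in the two-frontier scenario both frontiers are attached to the rest of a connected affected sub-component and hence automatically have an affected neighbor, so your reuse of the topology-tree argument there is fine. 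With that one case replaced, the rest of your proof goes through.
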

\begin{proof}
    We note that at the base level this is already true, as each sub-component has only $1$ affected cluster and hence only $1$ frontier.

    At any subsequent level $i$. Suppose some sub-component $A^i_j$ has just $1$ frontier cluster. The cases of the frontier being degree $1, $ or $2$ can be handled in the same manner as topology trees. 
    The degree $3$ case is slightly more tricky for UFO trees. In the case where a degree $3$ frontier has $1$ affected neighbor, this can be handled in a manner similar to topology trees. 
    In the case where it has no affected neighbors (for example it was formed after a degree $4$ cluster had $1$ of its neighbors cut), then since it was previously high degree it could have only combined with adjacent degree $1$ neighbors and can thus spread affection only to them.
    As these degree $1$ clusters cannot be frontiers, regardless of how this degree $3$ frontier spreads affection, it will be the only frontier till it contracts away.
    Degree $\ge 4$ clusters cannot spread affection and any such frontier will remain so until it clusters into something that can spread affection.
    These cases are handled as shown previously in this lemma.

    Now consider the case where we have $2$ frontiers. 
    As the clustering operations of the UFO tree preserve the underlying connectedness of the tree, the $2$ frontiers $f_1$ and $f_2$ must have at least $1$ affected neighbor. 
    The degree $1$ and $2$ cases are easily seen to be handled in the same manner as topology trees.
    Since degree $3$ clusters still combine only with degree $1$ clusters, it is again the case that if a degree $3$ cluster spreads affection, it is the only frontier in its component and thus this case is also handled similar to Lemma ~\ref{lem:constant_frontiers}. Lastly, high degree clusters cannot spread affection and thus if either $f_1$ or $f_2$ is such a frontier, it will not add any frontier until it clusters into a low degree cluster. These cases are again shown to be handled above.
    From the case-work above, it is evident that each sub-component has $\le 2$ frontier clusters.
\end{proof}

\begin{lemma}
    For any round $i$ and sub-component $j$ the no. of affected clusters added between any $2$ rounds is $\le 8$ i.e. $|A^{i+1}_j| - |A^i_j| \le 8$.
\end{lemma}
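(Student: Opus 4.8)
The plan is to mirror the argument for the topology-tree analogue, Lemma~\ref{lem:affectedadded_subc}, substituting the preceding lemma (each sub-component $A^i_j$ has at most $2$ frontier clusters) for Lemma~\ref{lem:constant_frontiers}, and Definition~\ref{lem:ufo_spreading} for Lemma~\ref{lem:spreading} as the description of how affection spreads. First I would recall the two structural facts that make the counting go through: affection propagates only from a frontier cluster to its currently unaffected neighbors, and — the one UFO-specific point — a high-degree ($\geq 4$) cluster cannot spread affection, since its parent is either high-fanout or has degree $\geq 3$ and hence is never removed by \textsc{DeleteAncestors}. Consequently every frontier cluster that actually spreads affection during round $i$ is low degree (degree $\leq 3$), so it has at most $3$ neighbors in total.

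Next I would bound the number of new affected clusters contributed by a single spreading frontier $f$. When the sub-component has two frontiers, connectedness of the affected subtree forces $f$ to have at least one affected neighbor, so $f$ has at most $2$ unaffected neighbors, and affection reaches at most these two. By Definition~\ref{lem:ufo_spreading}, a newly affected neighbor $c$ of $f$ is affected either because $f$ replaced the cluster $c$ had merged with (rule~1) or because $c$ did not merge in $T_i$ but can now merge with $f$ (rule~2); at most one of $f$'s two unaffected neighbors is of the rule-2 type. Such a rule-2 cluster $c$, having not merged in $T_i$, can in turn force at most $2$ further neighbors of its own to become affected (those it can now merge with, since $c$ is low degree), and those cannot spread any further because they did not merge at this level. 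Hence $f$ is responsible for at most $2 + 2 = 4$ newly affected clusters, and with at most two frontiers per sub-component we get $|A^{i+1}_j| - |A^i_j| \leq 2 \cdot (2 + 2) = 8$.

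The remaining case is a sub-component with a single frontier $f$. If $f$ is low degree with an affected neighbor, the bound above applies with one frontier instead of two, giving at most $4 \leq 8$. The part I expect to require the most care is the degenerate possibility of a lone degree-$3$ frontier $f$ with \emph{no} affected neighbor, which arises when a degree-$4$ cluster loses an incident edge and so does not fall under the degree-$3$ reasoning of Lemma~\ref{lem:deg_3_affected} used for topology trees; this is also precisely the case isolated by the preceding frontier lemma. Here I would invoke that lemma's observation that, since $f$ was high degree, it can only merge with degree-$1$ neighbors, hence only spreads affection to its at most $3$ degree-$1$ neighbors, each of which is a leaf adjacent only to $f$ and therefore spreads no further; so the increment is at most $3 \leq 8$. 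In all cases the per-sub-component, per-round increase is at most $8$, which completes the proof.
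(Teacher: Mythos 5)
Your proof is correct and follows essentially the same route as the paper's: both invoke the at-most-two-frontiers lemma together with the spreading rules to charge each frontier with at most $2+2=4$ newly affected clusters (two direct unaffected neighbors, plus two more reachable only through the single rule-2-type neighbor), yielding $2\cdot(2+2)=8$. The only cosmetic difference is which UFO-specific corner case gets spelled out---the paper notes that a degree-$3$ frontier spreading to both neighbors via rule~1 contributes only $2$, while you isolate the lone formerly-high-degree frontier with no affected neighbor---but both situations are subsumed by the same $\le 4$ per-frontier accounting.
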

\begin{proof} 
    We naturally want to consider only cases where each frontier has $2$ unaffected neighbors that it can spread affection to as these would add more affected clusters than if there was only one such unaffected neighbor. 
    Consider the cases where a cluster can spread affection to both its adjacent unaffected neighbors. 
    It is either a lone degree $2$ frontier or a degree $3$ frontier with an affected neighbor. 
    The only manner in which spreading can take place differently from topology trees is if the degree $3$ cluster spreads affection to both of its neighbors by rule $1$ of lemma ~\ref{lem:ufo_spreading}.
    However, this can add potentially only $2$ new affected clusters, which is less than the $4$ that could be added in the other methods. The rest of the lemma follows from ~\ref{lem:affectedadded_subc}.
\end{proof}

\begin{lemma}
    The total no. of affected clusters at any level $i$ and any sub-component $j$ is at most $50$ i.e. for $i \ge 0, A^i_j \le 50$.
\end{lemma}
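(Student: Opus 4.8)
The plan is to follow the template of the proof of Lemma~\ref{lem:component_affected} for topology trees, now feeding in the UFO-specific facts established in the three preceding lemmas of this subsection. The ingredients are: (i) every affected sub-component $A^i_j$ has at most $2$ frontier clusters; (ii) at most $8$ new affected clusters enter a sub-component per round, i.e. $|A^{i+1}_j| - |A^i_j| \le 8$; and (iii) the geometric-contraction guarantee of Lemma~\ref{lem:ufo_const_frac}, that at least a $1/6$ fraction of the clusters in any level-induced tree contract away in one round. The base case is $|A^0_j| = 1$, since by Lemma~\ref{lem:ufo_init_affected} each sub-component starts life as a single origin cluster.

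First I would argue that, in the transition from round $i$ to round $i+1$, deleting the at most two frontier clusters from $A^i_j$ leaves a connected subtree all of whose clusters are affected and whose contraction is governed by the UFO maximality invariant; hence at least $\frac{1}{6}(|A^i_j| - 2)$ of those clusters contract away (the two frontier clusters are set aside because they are adjacent to unaffected clusters, so the local contraction structure there is not controlled by the subtree alone). Combining this with ingredient (ii) gives the recurrence
$$|A^{i+1}_j| \;\le\; \frac{5}{6}\bigl(|A^i_j| - 2\bigr) + 2 + 8 \;=\; \frac{5}{6}\,|A^i_j| + \frac{25}{3}.$$
Unrolling from $|A^0_j| = 1$, the right-hand side is dominated by a geometric series with ratio $5/6$ and additive constant $\frac{25}{3}$, whose limit is $\frac{25}{3}\cdot 6 = 50$; hence $|A^i_j| \le 50$ for every $i \ge 0$, which is the claimed bound.

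The main obstacle is the first step: making rigorous that the affected sub-component minus its frontier clusters is a connected subtree to which the $1/6$-contraction bound of Lemma~\ref{lem:ufo_const_frac} genuinely applies. This needs the facts that UFO contractions preserve connectedness of the underlying induced subtree (already invoked in the frontier lemmas) and that stripping the boundary frontier clusters does not artificially inflate the count of surviving clusters relative to a maximal contraction on the interior. One should also dispatch the degenerate cases where $|A^i_j| \le 2$ (so the $\frac{1}{6}(|A^i_j|-2)$ term is vacuous and the recurrence holds trivially) and where a high-degree affected cluster lies on the boundary of the sub-component (such a cluster contracts only with its degree-$1$ neighbours, and cannot spread affection, so the recurrence only ever over-counts survivors). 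In all of these the inequality above still holds, which completes the argument.
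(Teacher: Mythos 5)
Your proposal is correct and follows essentially the same route as the paper: the paper's own proof simply states that the mathematics is identical to that of Lemma~\ref{lem:component_affected} for topology trees, and your recurrence $|A^{i+1}_j| \le \frac{5}{6}(|A^i_j|-2) + 2 + 8$ with base case $|A^0_j|=1$ and geometric-series limit $50$ is exactly that calculation, instantiated with the UFO-specific frontier and spreading lemmas. Your additional remarks on the degenerate cases and on high-degree boundary clusters are sensible bookkeeping but do not change the argument.
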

\begin{proof}
    The mathematics used in proving this lemma is exactly the same as in Lemma~\ref{lem:component_affected}.
\end{proof}

As such, since each sub-component has at most $48$ root clusters and there are $10$ sub-components, the total number of root clusters is $\le 500$. 


\begin{restatable}{lemma}{ufodeleted}\label{lem:ufo_deleted}
    During an edge insertion or deletion in a UFO tree there are $O(1)$ clusters that are deleted at any level.
\end{restatable}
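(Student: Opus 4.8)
The plan is to follow the level-by-level argument used for topology trees in the proof of Lemma~\ref{lem:topology_deleted}, substituting the UFO-tree analogues developed earlier in this section. The first observation is that every cluster freed during an update is freed inside \fname{DeleteAncestors} (Algorithm~\ref{alg:ufo_update_1}, line~\ref{line:ufo_ra_check_end}); no other line of Algorithm~\ref{alg:ufo_update_2} frees a cluster. The calls to \fname{DeleteAncestors} are of three kinds: the two initial calls on the endpoints $u$ and $v$ (Algorithm~\ref{alg:ufo_update_2}, line~\ref{line:ufo_update_ra_0}), the calls on degree-$1$ neighbors of high-degree root clusters, and the calls on the parent of a non-root cluster into which a root cluster merges (lines~\ref{line:ufo_update_nonroot_1}, \ref{line:ufo_update_nonroot_2}, and \ref{line:ufo_update_nonroot_3}). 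Since by Lemma~\ref{lem:ufo_root_clusters} only $O(1)$ root clusters occur at each level, only $O(1)$ fresh \fname{DeleteAncestors} calls are issued that touch any particular level, and each such call deletes at most one cluster per level along its upward path (which may skip high-degree or high-fanout clusters).

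The structural fact specific to UFO trees is that \fname{DeleteAncestors} frees a cluster only when it is neither high degree nor high fanout (Algorithm~\ref{alg:ufo_update_1}, line~\ref{line:ufo_ra_check_start}); a high-degree or high-fanout cluster on the path is merely skipped and is never freed, nor does it create new root clusters. Consequently every deleted cluster has fanout $\le 2$, so at most two deleted clusters at level $i$ can share a common parent at level $i+1$. This is exactly the property the topology-tree argument relies on, so the unbounded-fanout merge rule does not break the accounting.

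I would then set up the recurrence exactly as in the proof of Lemma~\ref{lem:topology_deleted}. Let $d_i$ denote the number of clusters freed at level $i$ during an update. The level-$(i+1)$ deletions are contained in the union of (i) the parents of the level-$i$ deleted clusters and (ii) the level-$(i+1)$ clusters reached by the $O(1)$ \fname{DeleteAncestors} calls first issued at levels $\le i$. By the maximality of the contraction (Lemma~\ref{lem:ufo_const_frac}) together with the fanout-$\le 2$ bound, the level-$i$ deleted clusters have at most a constant fraction $\alpha < 1$ (concretely $\tfrac{5}{6}$, as in the topology case) as many distinct parents, and part (ii) contributes only a constant $\beta = O(1)$, obtained from Lemma~\ref{lem:ufo_root_clusters} together with the per-round bound of at most $8$ newly affected clusters per subcomponent over the $O(1)$ subcomponents (Lemma~\ref{lem:ufo_init_affected}). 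Hence $d_{i+1} \le \alpha\, d_i + \beta$ with an $O(1)$ base case, and summing the geometric series gives $d_i = O(1)$ at every level, as claimed.

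The main obstacle I expect is the bookkeeping around the unbounded-fanout merge rule: I must verify that when \fname{DeleteAncestors} skips a high-fanout cluster and resumes freeing clusters further up, each subsequently freed cluster is still charged to the same call, and that the disconnection of a degree-$\le 2$ child of a retained cluster (Algorithm~\ref{alg:ufo_update_1}, lines~\ref{line:ufo_ra_disconnect_start}--\ref{line:ufo_ra_disconnect_end}) produces no extra deletions beyond those already counted. I also need to confirm that the new root clusters created as children of deleted clusters are precisely those already bounded by Lemma~\ref{lem:ufo_root_clusters}, so that the recurrence does not conceal a cascading term.
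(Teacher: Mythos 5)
Your proposal is correct and follows essentially the same route as the paper's proof: both bound the number of newly issued ancestor-removal calls at each level by a constant (via the $O(1)$ bound on root/affected clusters per level and the per-subcomponent spreading bound), and then solve the geometric recurrence $d_{i+1} \le \tfrac{5}{6}\,d_i + O(1)$ arising from the maximal-contraction guarantee of Lemma~\ref{lem:ufo_const_frac}. Your additional care about skipped high-degree/high-fanout clusters and the fanout-$\le 2$ property of freed clusters is consistent with (and slightly more explicit than) the paper's remark that the worst case occurs when no ancestors on the path are high degree.
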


\begin{proof}
    We utilize a simple level by level approach to prove this lemma. 
    Consider level $0$ in $T'$. A link/cut creates at most $10$ root clusters, each of which made a call to remove ancestors and deleted its respective parent in $T_1$. 
    As a consequence, of Lemma~\ref{lem:ufo_const_frac}, $1/6$ of these must have previously contracted. Thus, in the worst case, no parents on the leaf-to-root path were high degree and $5/6$ of these clusters i.e. their parents in $T_1$ were deleted. 
    
    Consider now, what happens at $T_1$. 
    By Lemma~\ref{lem:affectedadded_subc}, $8$ new clusters are added per sub-component, of which at most $6$ clusters are added via rule $(3)$ of definition~\ref{def_affection} and as such call remove ancestors. 
    Since there are $10$ sub-components, $\le 60$ clusters are added in this manner. 
    Thus the number of parents deleted at level $2$ is at most $5/6$ of the sum of number of clusters deleted at the previous level plus $60$.
    This pattern continues at each level up, thus the number of deletions per level is bounded by a constant:

    $$\text{\# Deletions} \le 60 \cdot \sum_{r=0}^{\infty} \left(\frac{5}{6}\right)^r\le 360.$$
\end{proof}

Lastly, we show that the UFO tree that is created following an update is a valid UFO tree.

\begin{lemma}\label{lem:ufo_update_correct}
    The algorithm for edge insertion or deletion in a UFO tree results in a valid new UFO tree. That is all of the combinations in the new tree are allowed combinations, and the combinations are maximal at each level.
\end{lemma}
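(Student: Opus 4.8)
The plan is to mirror the structure of the topology-tree correctness argument (Lemma~\ref{lem:topology_update_correct}), while paying extra attention to the two features that make UFO trees different: clusters of unbounded fanout, and the fact that \fname{DeleteAncestors} (Algorithm~\ref{alg:ufo_update_1}) deliberately retains high-degree and high-fanout clusters. As in that proof, I would partition the clusters of the new tree $T'$ into \emph{new clusters}, those that were root clusters at some point during the update and were therefore reclustered by Algorithm~\ref{alg:ufo_update_2}, and \emph{old clusters}, which were never root clusters. Among old clusters I would further distinguish the retained high-degree/high-fanout ancestors of the update endpoints (whose adjacency or child lists may have been modified without the cluster being deleted) from the genuinely untouched clusters. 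I would then establish, at every level $i$, three facts: (i) the level-$i$ adjacency lists of $T'$ faithfully represent the contracted forest $T_i'$; (ii) every merge occurring at level $i$ is one of the four allowed merge types of Section~\ref{sec:ufo}; and (iii) the level-$i$ clustering is maximal, including the strengthened invariant that every high-degree cluster merges with \emph{all} of its degree-$1$ neighbors.

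For (i) I would argue that a cluster is inserted into another cluster's adjacency list only when it is adjacent to a child of that cluster (or vice versa), so no spurious adjacency is ever introduced; and that \fname{DeleteAncestors} together with the edge-update step (Algorithm~\ref{alg:ufo_update_2}, line~\ref{line:ufo_update_update}) and the ``delete edge at all levels $>i$'' steps guarantees that whenever a cluster survives a level its adjacency list reflects every incident edge change at that level. The new subtlety relative to topology trees is that an edge update must be propagated upward through \emph{every} level at which both endpoints persist as undeleted clusters, precisely because a high-degree or high-fanout endpoint need not be deleted; I would check that the algorithm's level-by-level edge updates do exactly this. I would also record the disconnect rule (Algorithm~\ref{alg:ufo_update_1}, lines~\ref{line:ufo_ra_disconnect_start}--\ref{line:ufo_ra_disconnect_end}): a retained cluster whose retained parent's contents changed is severed from that parent iff its own degree is $\le 2$, whereas a high-degree such cluster stays attached, and I would maintain as an invariant that in the latter case the retained cluster is exactly the ``center'' of a ``high degree with one or more degree $1$'' merge.

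For (ii) I would do a case analysis over the branches of Algorithm~\ref{alg:ufo_update_2}: high-degree root clusters receive a fresh parent and absorb exactly their degree-$1$ neighbors (rule 4); degree-$2$ root clusters attach to an unmerged neighbor of degree $\le 2$ (rules 2--3); degree-$1$ root clusters attach either to an unmerged degree-$\le 2$ neighbor (rules 1--2) or to the parent of a high-degree neighbor (rule 4). For old clusters whose merge is untouched, validity is inherited from $T$. The one genuinely new case is a degree-$1$ root cluster $x$ joining the parent of a non-root high-degree neighbor $y$: here I would invoke the invariant from (i) that $y$ is the center of that merge and has level-$i$ degree $\ge 3$, so adding $x$ keeps the merge of type ``high degree with one or more degree $1$'', and the companion step deleting $(x,y)$ at all higher levels keeps those levels consistent. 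I would likewise argue that a retained high-degree or high-fanout ancestor still forms a valid merge after one of its (necessarily degree-$1$ or degree-$2$, non-center) children is disconnected, since this only removes a low-degree partner from the merge, and that whenever such a change would instead invalidate the merge (e.g.\ the center dropping below degree $3$) the corresponding \fname{DeleteAncestors} call deletes that ancestor rather than retaining it.

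For (iii), maximality for new clusters is immediate because Algorithm~\ref{alg:ufo_update_2} examines every neighbor of each root cluster and merges whenever legal; genuinely untouched old clusters keep their $T$-merge and their parents are never deleted (otherwise they would have become root clusters), so their maximality is inherited; and for an untouched old cluster $X$ adjacent to a new cluster, either $X$ still merges or some mergeable, as-yet-unmerged new neighbor is forced to merge with $X$ in the degree-$1$ or degree-$2$ branch. The strengthened high-degree invariant follows because high-degree root clusters absorb all their degree-$1$ neighbors by construction, a high-degree old cluster keeps all its old degree-$1$ neighbors unchanged, and any new degree-$1$ neighbor of a high-degree cluster is attached to it in the degree-$1$-root-cluster branch. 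I expect the main obstacle to be exactly the analysis of the retained high-degree and high-fanout clusters: I must show that such a cluster, whose child set or adjacency changed but which was not deleted, still participates in a legal \emph{and} maximal merge, which forces me to carry, level by level, the invariant that a changed-but-retained child is always the center of its parent's merge and that ``center'' status is stable under the edge updates the algorithm performs — the delicate point the paper flags as requiring careful proof.
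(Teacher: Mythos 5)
Your proposal is correct and follows essentially the same route as the paper's proof: validity of retained combinations (relying on the disconnect rule and the invariant that an undeleted high-degree child is the ``center'' of its parent's merge) plus validity-by-construction of new merges, and maximality via the same case split into new clusters, untouched old clusters, and degree-$1$ clusters adjacent to previously-high-degree versus newly-high-degree neighbors. The paper's own argument is considerably terser; your level-by-level invariants (adjacency-list faithfulness, the center invariant, and edge propagation through retained levels) are a more explicit elaboration of steps the paper treats implicitly, not a different approach.
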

 \begin{proof}
     The components of the UFO tree after deleting some paths of ancestors contain only allowed combinations of clusters. Additionally, each new combination resulting from the re-clustering phase of the update algorithm can only be an allowed combination. Thus all combinations in the new UFO tree are allowed.

     The combinations not involving high degree clusters are definitely maximal by the same reasoning as topology trees.
     The combinations involving high degree clusters are also maximal since any degree 1 root cluster will definitely combine with a neighboring high degree cluster if it exists.
     For any non-root degree 1 cluster $X$ adjacent to a previously and still high degree cluster $Y$, $X$ and $Y$ must combine since the previous tree had a maximal set of combinations at each level.
     For any non-root degree 1 cluster $X$ adjacent to a cluster $Y$ that just became high degree, $Y$ must be exactly degree 3. The algorithm will examine all of the neighbors of $Y$ and combine the degree 1 neighbors with it, so $X$ will be found and made to combine with $Y$.
 \end{proof}

\subsection{Batch-Update Algorithm Proofs}\label{app:batch_update}

\myparagraph{Parallel Batch-Dynamic Topology Trees}
We now prove the correctness of the batch update algorithm for topology trees.

\begin{lemma}\label{lem:batch_topology_correct}
    The algorithm for batch updates in a topology tree results in a valid topology tree following the update.
\end{lemma}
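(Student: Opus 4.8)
The plan is to show directly that the forest of contraction trees produced by Algorithm~\ref{alg:batch_update_topology} meets the abstract definition of a topology tree from Section~\ref{sec:topology_definition}: a rooted forest in which each non-leaf cluster is obtained from its (one or two) children by one of the four allowed merge types, each merge joins two clusters adjacent at their level, no cluster ever attains degree greater than $3$, and the set of merges at every level is maximal. Because a topology tree is characterized by these properties and not by any particular contraction order, it is enough to verify them on the output; I would \emph{not} try to argue that the batch algorithm reproduces the tree of some sequential replay, since contraction trees are not canonical. Throughout I would use the standing assumption that the batch contains at most one update per edge and that every ordering of the updates is a legal sequential sequence, which guarantees the input is coherent.

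The first step is a bookkeeping invariant that justifies the lazy, one-level-at-a-time deletion scheme. By induction on the level $i$, I would show that just before the loop body processes level $i$, the set $D_{i+1}$ equals exactly the set of level-$(i+1)$ clusters whose contents (the set of input vertices they contain) differ from the pre-update tree, i.e.\ exactly the clusters the sequential update would delete at level $i+1$. The base case is $D_1 = R_0.\fname{MapToParents}()$, the parents of the altered leaf clusters. For the step, a level-$(i+2)$ cluster changes contents iff it has a child that changed, and changed level-$(i+1)$ children arise in precisely two ways: they are clusters of $D_{i+1}$ being deleted this round, whose parents are swept into $D_{i+2}$ on line~\ref{line:next_del}; or they are the parent of a non-root cluster into which a root cluster just merged, and these parents are exactly $NR$, whose own parents are added to $D_{i+2}$ on line~\ref{line:new_del}. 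Hence $D_{i+2}$ is correct, so the scheme deletes every cluster with stale contents and never deletes a cluster whose contents are unchanged (the latter fact is what keeps the untouched part of the tree intact).

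Next I would dispatch the purely local properties — that every merge performed is an allowed type joining adjacent clusters, and that degrees never exceed $3$. These are the same local operations the sequential algorithm performs, so the argument of Lemma~\ref{lem:topology_update_correct} transfers: clusters only enter one another's adjacency lists when one was adjacent to a child of the other (so merges respect adjacency), \fname{ParallelMaximalMatching} is permitted to output only pairs forming one of the four merge types, and a short case check on those types shows the merged cluster has degree $\le 3$. The single new concern is that parallelism must not let two root clusters merge into the same non-root cluster simultaneously — that would give it fanout $3$, violating the topology-tree fanout bound — and this is exactly why \fname{ParallelMaximalMatching} must return a \emph{matching} (each cluster in at most one pair); I would also note that the degree-$3$ root clusters, resolved naively in parallel, cannot conflict because each is paired with a degree-$1$ cluster and a degree-$1$ cluster has a unique neighbor.

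Finally I would prove maximality, which I expect to be the main obstacle, in two parts. First, that \fname{ParallelMaximalMatching} genuinely returns a maximal matching: after the conflict-free degree-$3$ merges, the remaining clusters of interest (root clusters of degree $\le 2$ together with their not-yet-merged non-root neighbors) induce a collection of simple paths, and list ranking followed by pairing each even-indexed vertex with its successor leaves at most one unmatched vertex per path, whose unique remaining neighbor is matched — so no two adjacent clusters of interest are both unmatched, and combined with the degree-$3$ step the matching is maximal overall. Second, I would mirror the sequential argument: call a cluster of the new tree \emph{new} if it was ever a root cluster during the update and \emph{old} otherwise; an old cluster adjacent only to old clusters keeps maximality because (by the deletion invariant) its parent was never deleted and its level-neighborhood is unchanged, while every new cluster, and every old cluster adjacent to a new one, obtains maximality from the maximal matching at its level — if such a cluster is unmatched, each neighbor it could legally merge with is already matched. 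I would close by noting termination and well-formedness: the loop condition drains both $R_i$ and $D_{i+1}$, each input tree contracts down to a single surviving root, and every intermediate cluster ends with exactly one parent, so the output is a valid topology tree.
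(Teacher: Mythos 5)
Your proposal is correct and follows essentially the same route as the paper: the paper's proof simply observes that lazy deletion does not change which clusters are examined at each level and then defers everything else to the sequential correctness argument of Lemma~\ref{lem:topology_update_correct}, which is exactly the reduction you carry out. The difference is only one of detail --- your explicit $D_{i+1}$ invariant, the conflict-freedom of simultaneous merges into a non-root cluster, and the maximality of the list-ranking matching are all points the paper leaves implicit, and spelling them out strengthens rather than departs from its argument.
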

\begin{proof}
    Note that lazily deleting clusters has no affect on the correctness of the algorithm, since the clusters examined at any level remain the same. The rest of the batch update algorithm mirrors the sequential case, with modifications to parallelize certain operations. Therefore, the correctness argument follows directly from the proof of Lemma~\ref{lem:topology_update_correct}.
\end{proof}

Next, we will prove Lemma~\ref{lem:batch_topology_root_clusters}, which we will in turn use to prove Theorem~\ref{thm:batch_topology_cost}, our main theorem analyzing the cost of batch-updates in topology trees.

\begin{lemma}\label{lem:batch_topology_root_clusters}
    During a batch update in a topology tree with a batch size of $k$, there are $O(k)$ root clusters at any level. Additionally there are $O(k)$ deleted clusters at any level.
\end{lemma}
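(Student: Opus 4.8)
The plan is to lift the amortized single-update analysis (Lemmas~\ref{lem:topology_root_clusters} and~\ref{lem:topology_deleted}) to the batch setting by rerunning the ``affected cluster / spreading affection'' argument with $k$ seeds instead of one. First I would observe that each of the $k$ edge updates behaves, at level $0$, exactly like a single update: it marks its two endpoint leaf clusters as affected and, by the same reasoning as Lemma~\ref{lem:initialaffected}, spreads affection to at most a constant number of additional level-$0$ clusters. Hence the set $A^0$ of affected level-$0$ clusters has size $O(k)$, and it decomposes into at most $10k$ origin clusters. As in the sequential proof, I would partition the affected forest into subcomponents, one per origin cluster, so that each subcomponent is a connected subtree of the induced forest and distinct subcomponents spread affection into disjoint regions.

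Next I would re-run the frontier analysis. The crucial point is that Definition~\ref{def:frontiers} and Lemmas~\ref{lem:deg_3_affected}--\ref{lem:component_affected} are statements about a single affected subcomponent and its boundary with the unaffected part of the tree; none of them used that there was only one update. So each subcomponent still has at most $2$ frontier clusters per level, adds at most a constant number of affected clusters per round, and---because a maximal matching still forces at least a $1/6 - O(1)$ fraction of its non-frontier clusters to contract---retains at most a constant number of affected clusters per level. The one new phenomenon is that two subcomponents belonging to different updates may become adjacent or merge as the contraction proceeds. I would argue this only helps: when a frontier of one subcomponent meets an affected cluster of another, neither can spread affection across that boundary, exactly as two origin clusters block each other in the sequential proof; formally, one can re-root the partition so the merged region is charged to a single subcomponent without increasing its frontier count. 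Summing the $O(1)$ bound over the $O(k)$ subcomponents gives $O(k)$ affected---hence $O(k)$ root---clusters at every level, since (as in the sequential case) the set of root clusters at any level during reclustering is contained in $A^i$.

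For the deleted clusters I would mirror the proof of Lemma~\ref{lem:topology_deleted}: every deletion happens inside a \textsc{DeleteAncestors} call (realized lazily through the sets $D_i$), and each such call is triggered either by an updated leaf cluster (there are $O(k)$ of those) or by a root cluster merging with a non-root cluster at some level (there are $O(k)$ of those per level by the bound just established). Charging deletions level by level, the number of clusters deleted at level $i+1$ is at most $5/6$ of the number deleted at level $i$ (by the geometric contraction of Lemma~\ref{lem:topology_const_frac} applied to the deleted ancestor paths) plus the $O(k)$ new calls originating at level $i$; this recurrence sums to $O(k)$ deletions per level. Since lazy deletion changes only \emph{when}, not \emph{which}, clusters are deleted, the same counting bounds $|D_i| = O(k)$.

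The main obstacle I expect is making the subcomponent-interaction argument rigorous. Unlike the single-update case, where $\le 10$ origin clusters sit in one input tree and the partition is essentially static, here $O(k)$ affected regions can collide in complicated ways across $O(\log n)$ levels, and one must show the partition can be maintained so that each piece stays connected, each piece keeps $\le 2$ frontiers, and no affected cluster is double-counted. Everything else is a routine rerun of the sequential lemmas with the constant ``$10$'' replaced by ``$10k$'' and ``$O(1)$'' replaced by ``$O(k)$''.
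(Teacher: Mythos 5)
Your proposal takes a genuinely different and much heavier route than the paper. The paper's entire proof is a union bound: it observes that the set of root clusters (and deleted clusters) produced at each level by the batch algorithm is a \emph{subset} of the union, over the $k$ updates, of the clusters that each update would produce if run individually, and then invokes Lemmas~\ref{lem:topology_root_clusters} and~\ref{lem:topology_deleted} to charge $O(1)$ clusters per level to each update, giving $O(k)$ per level. You instead re-run the whole affected-cluster machinery from scratch with $10k$ origin clusters, re-prove the frontier bound per subcomponent, and then sum. Both routes reach the same bound, but the paper's buys simplicity: all of the per-subcomponent structure (connectivity, $\le 2$ frontiers, geometric contraction) is inherited for free from the sequential lemmas, and nothing new about the batch dynamics needs to be established beyond the containment claim.

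The obstacle you flag at the end --- making the subcomponent-interaction argument rigorous when $O(k)$ affected regions collide across $O(\log n)$ levels --- is a real gap in your route as written. Your claim that a merged region can be ``re-rooted'' and charged to a single subcomponent without increasing its frontier count is asserted, not proven; a naive merge of two subcomponents can carry up to four frontiers, and the recurrence in Lemma~\ref{lem:component_affected} is stated for a single connected subcomponent with at most two. (One can patch this --- e.g., bound the total number of frontiers across all subcomponents by $2 \cdot 10k$ and redo the additive-per-round count globally rather than per subcomponent --- but that is additional work you have not done.) This is precisely the difficulty the paper's subset-of-individual-runs argument sidesteps entirely: since each individual update's affected set is already bounded per level, no reasoning about how the $k$ affected regions interact in the simultaneous execution is needed, only the (brief) justification that the batch algorithm never creates a root or deleted cluster that no individual update would have created. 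If you want to keep your approach, the collision argument is the piece you must actually supply; otherwise, the containment observation gives you the lemma in three lines. Your treatment of the deleted-cluster count has the same character: the recurrence you set up is correct in spirit, but the paper again obtains it by containment in the union of the sequential deletion sets rather than by re-deriving the recurrence in the batch setting.
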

\begin{proof}
    The sets of root clusters and deleted clusters produced by the batch algorithm is a subset of all the clusters that would be produced by running any update in the batch individually.
    From Lemmas~\ref{lem:topology_root_clusters} and~\ref{lem:topology_deleted}, we know each individual update may contribute at most a constant number of root clusters and deleted clusters at each level, thus in total there are at most $O(k)$ at each level.
\end{proof}

\topologybatchupdate*
\begin{proof}
    We will show that the work done at each level of the topology tree is $O(x)$, where $x$ is the number of root clusters and deleted clusters at that level.
    Using this fact and Lemma~\ref{lem:batch_topology_root_clusters}, the work is immediately bounded by $O(k \log n)$. We get the better bound of $O(k \log (1+n/k))$ because above some threshold level $l=O(\log(1+n/k))$, the total number of nodes is $O(k)$, and below that threshold the work at each level sums to $O(k \log (1+n/k))$.

    In Algorithm~\ref{alg:batch_update_topology}, mapping to parents can be done in $O(x)$ work trivially.
    For updating adjacency lists, since every cluster is degree at most 3, there are $O(x)$ edges to insert delete, which can be done in $O(x)$ time easily. All of the batch insertions into neighbor sets take $O(x)$ work in total.
    Similarly, deleting from child sets takes $O(x)$ work in total for the same reason.
    The parallel maximal matching is known to take $O(x)$ work for chains.
    Finally, freeing $O(x)$ constant sized clusters and allocating $O(x)$ clusters takes $O(x)$ work.

    In the binary fork-join model, all of these steps can be done in $O(\log x)$ depth. Thus the total depth is $O(\log n \log k)$.
\end{proof}

\myparagraph{Parallel Batch-Dynamic UFO Trees}
Similar to the proof of Lemma~\ref{lem:batch_topology_correct}, the correctness of the UFO tree batch-update algorithm follows from the same reasoning as the UFO tree sequential update algorithm (Lemma~\ref{lem:ufo_update_correct}). All clusters that are not deleted still represent valid merges, and the maximal matching step ensures that we recluster the tree to create maximal contractions at every level.

Now we will prove Lemma~\ref{lem:batch_ufo_root_clusters} which will allow us to prove Theorem~\ref{thm:batch_ufo_cost}, the main theorem analyzing the cost of batch-updates in UFO trees.

\begin{lemma} \label{lem:batch_ufo_root_clusters}
    During a batch update in a UFO tree with a batch size of $k$, there are $O(k)$ root clusters at any level, and the sum of their degrees is $O(k)$.
    Additionally, there are $O(k)$ deleted clusters at any level, and the sum of their degrees is $O(k)$.
\end{lemma}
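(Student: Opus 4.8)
The plan is to follow the same template used for the topology-tree analogue (Lemma~\ref{lem:batch_topology_root_clusters}) for the two counting claims, and then do additional bookkeeping for the two degree-sum claims, which is where the UFO-specific difficulties appear. For the counts, the key observation is that the batch update algorithm (Algorithm~\ref{alg:batch_update_ufo}) is, at every level, dominated by the $k$ sequential updates run in isolation: every cluster that the batch algorithm marks as a root cluster at level $i$, or frees at level $i$, would also be produced as a root cluster (resp.\ deleted) at level $i$ by at least one of the $k$ updates if that update were run alone on the pre-update tree. Lazily deferring deletions (via the $D_i$ and $E^-_i$ sets) and reclustering in parallel does not change \emph{which} clusters get disconnected, freed, or re-parented, only the order in which this happens, so the batch sets are contained in the union of the per-update sets. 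Combining this with Lemmas~\ref{lem:ufo_root_clusters} and~\ref{lem:ufo_deleted}, which bound both quantities by $O(1)$ per level for a single update, yields $O(k)$ root clusters and $O(k)$ deleted clusters at every level.

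For the degree sums I would dispatch the deleted clusters first, since they are easy: the batch algorithm only frees a cluster $c \in D_{i+1}$ when its \emph{current} degree satisfies $c.\vname{degree} < 3$, so every cluster deleted at a given level has degree at most $2$, and the sum over the $O(k)$ of them is $O(k)$. The same observation covers the clusters that are ``disconnected from their parent'' and promoted to $R_{i+1}$ rather than freed: those are exactly the clusters with current degree $\le 2$, again contributing $O(k)$.

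The real work, and the main obstacle, is the degree sum of root clusters, since—unlike in the topology-tree or sequential-UFO settings—an individual root cluster can have degree $\Omega(k)$. I would split the root clusters at level $i$ into those that retain a parent across the update and those that do not. A root cluster that keeps its parent is skipped during reclustering (exactly as clusters with non-null parent are skipped in the sequential \fname{Update}), so its degree is never charged—this is what prevents a single untouched high-degree vertex from contributing its whole degree. For the root clusters that must be re-clustered, I would bound their incident level-$i$ edges by two kinds: (a) edges already present in the pre-update tree at level $i$, and (b) edges that are new or whose endpoint clusters changed because of the batch. For (b), each of the $k$ edge updates propagates at most one modified edge to each level (the updated edge lifted up through not-yet-deleted clusters, as tracked by the $E^-$ sets and the new-adjacency-list step), so the total number of type-(b) edges incident to root clusters at level $i$ is $O(k)$. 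For (a), I would argue that a re-clustered root cluster retaining any pre-update incident edge must have arisen either as a child of a freed cluster—and since freed clusters have current degree $\le 2$ and fanout $\le 2$, a short case analysis on how that parent cluster was formed forces the child's pre-update degree to be $O(1)$—or as a freshly created cluster whose degree is inherited from a bounded-degree merge one level below, or as one of the $O(k)$ disconnected/leaf clusters of degree $\le 2$. Summing these contributions gives total root-cluster degree $O(k)$ at every level. Making the bounded-degree claim for children of freed clusters fully rigorous, in particular handling the interaction with batch edge deletions that can lower a cluster's degree mid-update, is the delicate step; once it is established, Theorem~\ref{thm:batch_ufo_cost} follows by charging $O(1)$ work per root cluster, per deleted cluster, and per incident edge across the $O(\min\{\log n,\diam\})$ levels, together with the parallel-primitive depth bounds, just as in the topology-tree case.
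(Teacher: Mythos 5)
Your proposal is correct and follows essentially the same route as the paper: the $O(k)$ counts come from containment in the union of the $k$ single-update analyses (applying Lemmas~\ref{lem:ufo_root_clusters} and~\ref{lem:ufo_deleted}), the deleted-cluster degree sum from the fact that the algorithm only frees clusters whose current degree is $<3$, and the root-cluster degree sum from children of low-degree, low-fanout deleted clusters having constant pre-update degree plus at most $2k$ additional degree contributed by the batch's inserted edges at each level. The paper's version of the root-cluster degree argument is terser than your old-edge/new-edge decomposition, but it rests on the same two facts and handles your flagged ``delicate step'' simply by noting that clusters slated for deletion never receive the new edges in insertion batches and that deleted edges are removed from every level before the degree check in deletion batches.
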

\begin{proof}
    By a similar argument as the proof of Lemma~\ref{lem:batch_topology_root_clusters}, this time applying Lemmas~\ref{lem:ufo_root_clusters} and~\ref{lem:ufo_deleted} the number of root clusters and deleted clusters at any level in the UFO tree batch-update algorithm is $O(k)$.

    Root clusters are formed when their parent is deleted.
    Since the algorithm only deletes clusters with low fanout and low degree, any children of those clusters must be or have been low degree (for insertion batches they must have been low degree before the insertion of new edges, for deletion batches they must be low degree after the deletion of the edges).
    Root clusters can only become high degree when some the edges in the batch of edge insertions are inserted incident to them. This can only increase the total degree by $2k$. Since all clusters were or are constant degree, and the total is increased by at most $2k$, the total degree of root clusters is $O(k)$.

    For batches of insertions, clusters that will be deleted are never updated with the new edges. Thus since we only delete low degree clusters, these clusters all have constant degree, and their total degree is $O(k)$.
    For batches of deletions, the $k$ edges are deleted from all levels of the tree beforehand (taking $O(k \log (1+n/k))$ work). All the clusters that are deleted are then low degree, thus the total degree at each level is $O(k)$.
\end{proof}

\ufobatchupdate*
\begin{proof}
    We will once again prove that the work at each level is $O(x)$, where $x$ is the number of root clusters and deleted clusters at a level. By the same reasoning as the proof of Theorem~\ref{thm:batch_topology_cost},this gives us the work bound of $O(k \log (1+n/k))$.
    The work bound of $O(k\diam)$ also follows from this, and the fact that the height is bounded by $O(\diam)$.
    
    All of the steps that are shared with the topology tree batch-update algorithm take $O(x)$ work.
    Because of Lemma~\ref{lem:batch_ufo_root_clusters}, the semisort to group the edges by endpoint takes $O(x)$ work, and so does adding the edges to the neighbor set hash tables. Updating the child set hash tables takes $O(x)$ because there are $O(x)$ root clusters and deleted clusters.

    For the depth, the only step that is different than parallel topology trees, is inserting and deleting elements in adjacency lists and child sets.
    The depth of inserting or deleting a batch of elements into a parallel set is proportional to the size of the set plus the size of the batch to insert or delete.
    Specifically, the depth is $O(\log y)$ where $y$ is the current size of the set plus the number of elements to insert or delete.
    This means that if there are updates to a large neighbor or child set with much more than $k$ elements, the depth of that step is higher than $O(\log k)$, possibly up to $O(\log n)$. Naively, this would result in $O(\log^2 n)$ depth.

    To reduce the depth to $O(\log n \log k)$, we can defer some of the batch updates until the end of batch update (or run them in the background while the rest of the algorithm proceeds at the levels above).
    Note that the only time we insert elements in a large neighbor set is if the cluster is already high degree. Since we are inserting, the cluster will remain high degree and its parent won't be deleted, so it does not become a root cluster. It is fine to delay the neighbor set update in this case because it does not become a root cluster so we won't iterate through its neighbors.
    If we are deleting elements from a neighbor set whose current size is $\Omega(k)$, since there are at most $O(k)$ deletions, the cluster will not become low degree, its parent will not get deleted, hence it will not become a root cluster. For the same reason we can delay this update.
    If the neighbor set's size is $O(k)$, then the depth is $O(\log k)$ and we can afford to do the update proactively.
    For child sets, we can delay the deletions and insertions until the end of the entire recluster process, because it is possible to determine the children of a parent cluster by iterating through the neighbors of one of its children.
\end{proof}
\section{Query Details}

\subsection{Topology Tree Queries} \label{app:topology_queries}

In the original paper, Frederickson described how to aggregate edge information in topology trees to find replacement edges in a minimum spanning tree~\cite{frederickson1985data}. Later, Frederickson showed how to modify topology trees to efficiently answer minimum edge weight queries along paths from any given vertex to the root, for rooted input trees~\cite{frederickson1997data}.
Here, we describe how to implement \defn{path queries} and \defn{subtree queries} in topology trees over unrooted input trees.
We also note that \defn{connectivity queries}, which ask whether a path exists between two vertices $x$ and $y$, can be answered in $O(\log n)$ time by finding and comparing the roots of the topology trees containing $x$ and $y$.
Additionally, we show that topology trees can support several other queries previously shown for rake-compress trees~\cite{acar2005experimental}.

\myparagraph{Path Queries}
A path query on an input tree $T=(V,E)$, where each edge in $E$ is assigned a value from a domain $\domain$, returns the result of applying an associative and commutative function $f : \domain^2 \rightarrow \domain$ to the values along the unique path between two connected vertices $u$ and $v$ in $V$. We assume that $f$ can be computed in $O(1)$ time.
Path queries may also be defined over values on vertices instead of edges; we focus on edge-based queries, but vertex-based queries can be implemented similarly.
Our algorithm is inspired by the approach of Acar et al. for path queries in rake-compress trees~\cite{acar2005experimental}, later clarified by Anderson~\cite{anderson2023parallel}.

We define a \defn{unary cluster} in a topology tree as a cluster with exactly one edge in $E$ with a single endpoint in the cluster. A \defn{binary cluster} has exactly two such edges.
The \defn{boundary vertices} of a cluster are defined as the vertices in the cluster incident to these crossing edges. Unary clusters always have one boundary vertex, while binary clusters may have one or two.
The \defn{cluster path} of a cluster is defined as the unique path between the boundary vertices of the cluster (see Figure~\ref{fig:cluster_path}).
Recall that a cluster in a topology tree represents a connected subgraph of the input tree $T$. Thus, the cluster path lies entirely within the cluster.
For unary or cardinality 1 clusters, the cluster path is empty, consisting of a single vertex and no edges. Only binary clusters may have a non-empty binary cluster.

\begin{figure}\centering
\includegraphics[width=0.25\textwidth]{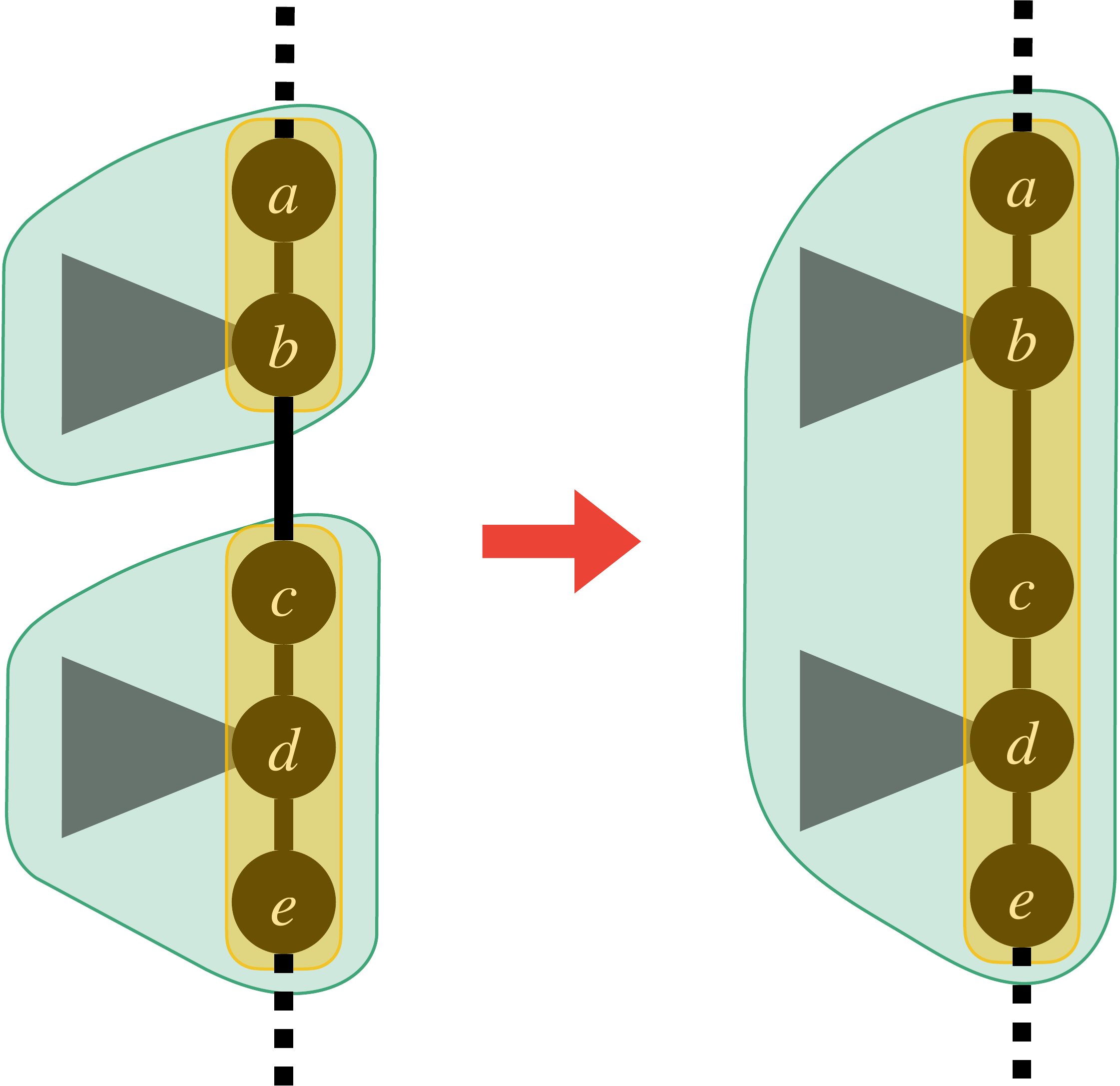}
\caption{\label{fig:cluster_path}
    An illustration of the cluster path of a binary cluster in a topology tree, and how the cluster path is constructed from the cluster paths of its two children.}
\end{figure}

To support efficient path queries, each binary cluster maintains the value of $f$ applied over all edges in its cluster path. If a cluster is not binary or if its cluster path is empty, it stores the identity element of $f$.
A binary cluster is always formed by merging two degree 2 clusters, $X$ and $Y$. Its cluster path is the union of the cluster paths of $X$ and $Y$, plus the edge between them (see Figure~\ref{fig:cluster_path}).
During the re-clustering step of an update, the value stored in the new binary cluster can be computed in $O(1)$ time by applying $f$ to the values stored in $X$, $Y$, and the edge connecting them.

To answer a path query between vertices $u$ and $v$, we consider the least common ancestor $P$ of leaf clusters $u$ and $v$ in the topology tree. Let $U$ and $V$ be the direct children of $P$ containing $u$ and $v$, respectively. Then there exists an edge $e = (b_u,b_v)$ between boundary vertices of $U$ and $V$: the edge that caused $U$ and $V$ to merge and form $P$).
The path between $u$ and $v$ in the original tree is the concatenation of the path from $u$ to $b_u$, the edge $e$, and the path from $v$ to $b_v$. So if we knew the values of the paths from $u$ and $v$ to the boundary vertices of $U$ and $V$, respectively, we can answer the path query.

We define the \defn{representative path(s)} of a cluster $X$ with respect to a vertex $x\in X$ as the unique path(s) from $x$ to the boundary vertex/vertices of $X$.
A unary cluster has a single representative path, while a binary cluster has two; even if it has only one boundary vertex, we treat it as having two identical representative paths.
Next we show how to inductively construct the representative path(s) of a cluster $X$ with respect to a vertex $x$.
Starting at the leaf cluster for $x$, we inductively compute the representative paths of each parent cluster up to cluster $X$. If a cluster $P$ has only one child $C$, then the representative path(s) are the same.
Otherwise, we consider the four inductive cases, as depicted in Figure~\ref{fig:path_induction}, involving the sibling cluster $S$ of $C$ (that it combined with to form $P$), and the edge $e$ connecting $C$ and $S$.

\begin{figure*}\centering
\includegraphics[width=\textwidth]{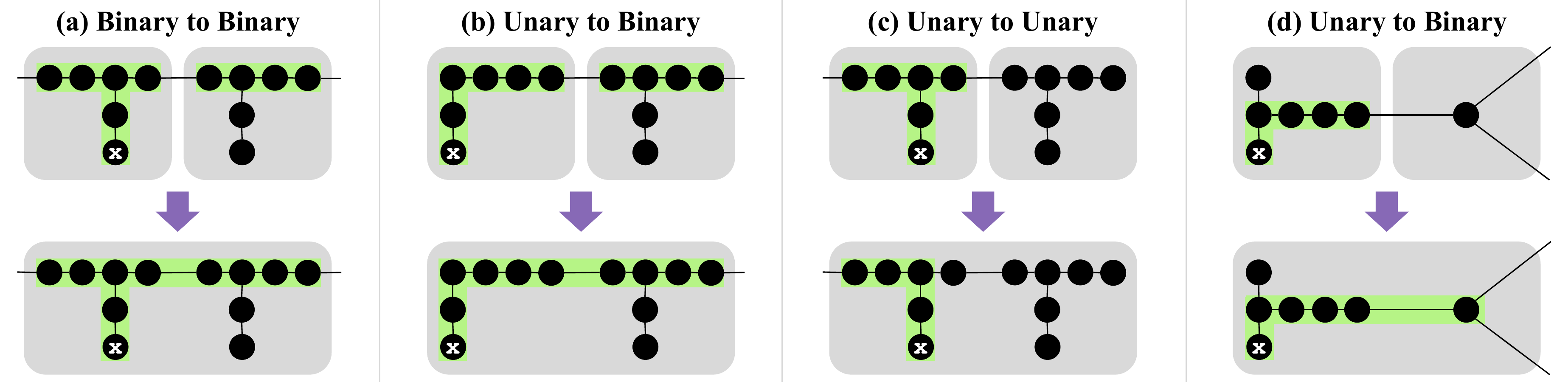}
\caption{\label{fig:path_induction}
    An illustration of the four inductive cases for constructing the solution to a path query in a topology tree. The four cases are: (a) a binary cluster with a binary parent, (b) a unary cluster with a unary parent, (c) a binary cluster with a unary parent, and (d) a unary cluster with a binary parent.}
\end{figure*}

\begin{enumerate}[label=(\alph*),topsep=0pt,itemsep=0pt,parsep=0pt,leftmargin=15pt]
\item The first case is if $C$ and $P$ are both unary clusters. This means $C$ combined with a binary cluster $S$. The representative path of $P$ is the union of the representative path of $C$, the edge $e$, and the cluster path of $S$.
\item The second case is if $C$ is unary and $P$ is binary. This is only possible if $S$ is a degree 3 cluster of cardinality 1. So the representative paths of $P$ are both the union of the representative path of $C$ and the edge $e$.
\item The third case is if $C$ is binary and $P$ is unary. In this case $S$ must have been a unary cluster. The representative path of $P$ is simply the representative path of $C$ that does not contain an endpoint of $e$ (or either path if $C$ has one boundary vertex).
\item The fourth case is if $C$ and $P$ are both binary clusters. This means that $S$ was also binary. One representative path of $P$ is the representative path of $C$ that does not contain an endpoint of $e$ (or either path if $C$ has one boundary vertex). The other representative path is the union of the representative path of $C$ containing an endpoint of $e$, the edge $e$, and the cluster path of $S$. 
Finally there is one additional case when starting at a degree 3 leaf cluster. Then its parent is a binary cluster with both representative paths empty.
\end{enumerate}

Using this inductive method and the stored values of $f$ applied over the cluster path of each binary cluster, we can compute the values of $f$ over the representative path(s) of $U$ with respect to $u$ and of $V$ with respect to $v$, thus allowing us to answer the path query. Since each inductive step can be computed in $O(1)$ time and the height of the tree is $O(\log n)$, we have the following lemma:

\begin{lemma}\label{lem:path_query}
    Path queries in a topology tree on an input tree with $n$ vertices take $O(\log n)$ time.
\end{lemma}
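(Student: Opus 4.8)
The plan is to assemble the proof directly from the construction described above, verifying three things: that the query path decomposes correctly through the least common ancestor cluster, that the partial $f$-aggregates it needs can be computed with $O(1)$ work per level of $\mathcal{T}$, and that the height of $\mathcal{T}$ is $O(\log n)$ --- the last of which we already have from Lemma~\ref{lem:topology_height}.

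First I would verify the decomposition. Take $P$ to be the LCA in $\mathcal{T}$ of the leaf clusters for $u$ and $v$, with children $U \ni u$ and $V \ni v$; the merge forming $P$ uses a single input edge $e = (b_u,b_v)$ between a boundary vertex of $U$ and one of $V$. Since every cluster of $\mathcal{T}$ induces a connected subgraph of $T$, the unique $u$--$v$ path in $T$ must be the $u$-to-$b_u$ path lying inside $U$, followed by $e$, followed by the $b_v$-to-$v$ path lying inside $V$. A small argument is needed to confirm that when $U$ has two boundary vertices the relevant one is exactly the endpoint of $e$, so that the ``representative path of $U$ with respect to $u$'' is the one picked out by the bookkeeping; the answer is then $f$ applied to the two representative-path aggregates and the value on $e$.

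Second I would justify the inductive computation and bound its cost. Each binary cluster stores the $f$-aggregate of its cluster path and every other cluster stores the identity of $f$; because a binary cluster is formed only by merging two degree-2 clusters and its cluster path is the union of theirs plus the connecting edge, this value is recomputed in $O(1)$ per merge, so updates keep their $O(\log n)$ cost. To answer a query we climb from the leaf for $u$ up to $U$, maintaining the aggregate(s) of the representative path(s) of the current ancestor with respect to $u$, applying at each step the appropriate one of the four cases (a)--(d) of Figure~\ref{fig:path_induction}, plus the base case at a degree-3 leaf cluster. The crux here is to check that these four cases are exhaustive: the allowed merge types of topology trees (Section~\ref{sec:topology_definition}) force the pair (unary/binary type of the child, unary/binary type of the parent) to be one of (unary, unary), (unary, binary), (binary, unary), or (binary, binary), and in each case the representative path of the parent is built from that of the child, the connecting edge, and the cluster path of the sibling as claimed. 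One also has to check that the degenerate clusters --- cardinality-1 degree-3 clusters, binary clusters with a single boundary vertex, and clusters with empty cluster path --- are handled consistently. Each case applies $f$ a constant number of times, and the symmetric climb from $v$ to $V$ is identical.

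Finally I combine these: since $\mathcal{T}$ has height $O(\log n)$ by Lemma~\ref{lem:topology_height}, locating $P$ (e.g.\ by walking from both leaves toward the root using stored depths, or via parent pointers) and computing both representative-path aggregates each take $O(\log n)$ time, and one further application of $f$ finishes the query, yielding the $O(\log n)$ bound. The main obstacle I expect is purely the case analysis of the representative-path induction together with the degenerate clusters; the path decomposition through $P$ and the level counting are routine.
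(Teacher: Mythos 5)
Your proposal is correct and follows essentially the same route as the paper: decompose the query path through the LCA cluster via the edge $e=(b_u,b_v)$ between the boundary vertices of the two children, maintain the $f$-aggregates of representative paths bottom-up using the four inductive cases of Figure~\ref{fig:path_induction} at $O(1)$ cost per level, and conclude from the $O(\log n)$ height bound of Lemma~\ref{lem:topology_height}. The degenerate cases you flag (degree-3 cardinality-1 clusters, binary clusters with a single boundary vertex) are exactly the ones the paper also handles explicitly.
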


\myparagraph{Subtree Queries}
A subtree query on an input tree $T$, where each vertex is assigned some value in domain $\domain$, is defined as an operation that returns the value of some associative and commutative function $f : \domain^2 \rightarrow \domain$ applied over all the values of the vertices of the subtree rooted at $v$ with respect to a parent vertex $p$.
Alternatively, a subtree query my be defined in terms of the edges in a subtree. We focus on the former definition but note that the latter may be implemented in a similar manner.
Once again, our algorithm takes inspiration from that of Acar et al. for subtree queries in rake-compress trees~\cite{acar2005experimental}, which was later clarified by Anderson~\cite{anderson2023parallel}.

To efficiently support subtree queries, each cluster in the topology tree will maintain the value of $f$ applied over all of the vertices in the cluster.
Each leaf in the tree will contain the identity value of $f$. When re-clustering during updates, the value in any newly formed or modified clusters will be set to the result of applying $f$ on the values of the children of that cluster, which can be computed in constant time.

To answer a subtree query, we will trace the topology tree from the bottom up following the paths from $v$ and $p$ to the root. In fact, we will start at the LCA of $v$ and $p$.
Note that the LCA of $v$ and $p$ is the result of combining two clusters $V$ and $P$ along the edge $(v,p)$. The cluster $V$ represents a connected subgraph of the input tree including $v$ but not $p$. Therefore, each vertex in $V$ is in the subtree of $v$ rooted at $p$.
Our query will start with the value stored by $V$.

Let $X$ be the LCA cluster and $b_v$ be the boundary vertex of $X$ on the side of $v$ (if there is no such boundary vertex then the query is complete).
At this point we have the subtree rooted at $v$ minus the subtree rooted at $b_v$. We define this as the \defn{representative subtree} of cluster $X$ with respect to vertex $v$ and parent $p$.
As we traverse up the rest of the path to the root, we can inductively compute the representative subtree of the parent cluster. Consider a parent cluster $P$ that is the combination of $X$ and another cluster $C$. If $C$ and $X$ are combined through $b_v$, then the representative subtree of $P$ is the union of the representative subtree of $X$ and the set of vertices contained in $C$, and the value of $b_v$ is the other boundary vertex of $C$ if it exists. Else if $C$ and $X$ are not combined through $b_v$ the representative subtree of $P$ is just the representative subtree of $X$, and the value of $b_v$ is the same as before.
If $b_v$ no longer exists, then the query is complete.

Since each inductive step can be computed in $O(1)$ time, and the height of the tree is $O(\log n)$, we have the following lemma.

\begin{lemma}\label{lem:subtree_query}
    Subtree queries in a topology tree on an input tree with $n$ vertices take $O(\log n)$ time.
\end{lemma}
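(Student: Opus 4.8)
The plan is to observe that the subtree-query procedure described above is a bottom-up walk along a single leaf-to-root path of the topology tree that does $O(1)$ work per level, and then to certify its correctness by an induction maintaining a ``representative-subtree'' invariant level by level; the $O(\log n)$ bound then follows from the $O(\log n)$ height of topology trees.

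First I would fix the invariant. Let $X$ be the least common ancestor cluster of the leaf clusters of $v$ and $p$, and let $V$ be the child of $X$ lying on $v$'s side (so $V$ is a connected subgraph of the input tree containing $v$ but not $p$). The invariant, maintained as the walk ascends from $X$ to the root, is: after processing a cluster $Y$ on the path, the accumulator holds $f$ applied over exactly the vertices in the subtree of $v$ rooted at $p$, minus those in the subtree hanging below the current boundary vertex $b_v$ (and if $b_v$ has ceased to exist, the accumulator already holds the final answer). The base case is $Y=X$: its representative subtree is precisely $V$, so initializing the accumulator to $V$'s stored value and setting $b_v$ to be $X$'s boundary vertex on $v$'s side establishes the invariant.

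Then the inductive step uses exactly the two cases described above. When the parent $P = \mathrm{merge}(Y,C)$ joins $Y$ and $C$ through $b_v$, the representative subtree of $P$ gains exactly the vertex set of $C$, so I combine $C$'s stored value into the accumulator and advance $b_v$ to $C$'s other boundary vertex (terminating if $C$ has none, i.e.\ $b_v$ vanishes); otherwise the representative subtree and $b_v$ are unchanged and nothing is done. Correctness of these two cases is immediate from the facts that every topology-tree cluster is a connected subgraph of the input tree, that each cluster stores $f$ over the input vertices it contains (recomputed in $O(1)$ time during reclustering since $f$ is associative and commutative and every cluster has at most two children), and that the boundary vertices of a cluster are precisely the endpoints of its crossing edges. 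Each step compares a constant number of vertices, performs one application of $f$, and follows $O(1)$ pointers, so it costs $O(1)$ time.

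For the cost bound, the walk has length at most the height of the topology tree, which is $O(\log n)$ by Theorem~\ref{thm:topology_tree_properties} (equivalently Lemma~\ref{lem:topology_height}); hence the query takes $O(\log n)$ time. The step I expect to need the most care is the boundary-vertex bookkeeping: pinning down exactly when $b_v$ disappears (a unary least-common-ancestor cluster, or a cluster with a single boundary vertex) and checking the edge-weighted variant of the query rather than the vertex-weighted one treated here. These are, however, the same considerations already resolved in the path-query argument of Lemma~\ref{lem:path_query}, so no fundamentally new ideas are needed.
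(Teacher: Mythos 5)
Your proposal is correct and follows essentially the same route as the paper's own argument: maintain per-cluster aggregates of $f$, start at the LCA's child $V$ on $v$'s side, and walk up the leaf-to-root path maintaining the representative subtree and the boundary vertex $b_v$ via the same two inductive cases, with $O(1)$ work per level and $O(\log n)$ height. The only difference is presentational --- you state the loop invariant and base case more explicitly than the paper does.
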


\myparagraph{Additional Queries}
Topology trees can also support diameter, lowest common ancestor, nearest marked vertex, and center/median queries. These algorithms are essentially the those of rake-compress trees~\cite{acar2005experimental}.
We briefly summarize these algorithms to show that they also apply to topology trees with little modification.

\defn{Diameter queries} return the length of the longest path in the input tree.
To support this each unary cluster maintains the diameter within the cluster and the length of the longest path in the cluster starting at the boundary vertex.
Each binary cluster maintains its diameter, the length of the longest path starting at both boundary vertices, and the length of its cluster path.
The answer to a diameter query can then be read from the root node of the topology tree.

\defn{Nearest marked vertex queries} take a query vertex $v$ and return the nearest vertex in a predetermined set of marked vertices. Here we describe how to answer the distance to the nearest marked vertex.
Each cluster maintains the distance from each boundary vertex to the nearest marked vertex in the cluster. Additionally binary clusters store the length of its cluster path.
To return the nearest marked vertex itself it is possible to augment each cluster with the nearest vertex itself in addition to the distance.
The nearest marked vertex query can then be answered by traversing up from the leaf for vertex $v$. For each cluster on the path to the root we compute the minimum distance from $v$ to a marked vertex in the cluster and the distance from $v$ to each boundary vertex.

\defn{Center queries} return the vertex which minimizes the maximum distance to any other vertex in the tree.
\defn{Median queries} return the vertex which minimizes the weighted (with respect to vertex weights) sum of the distances to all other vertices.
Here we describe how to support center queries, median queries can be supported quite similarly.
Each cluster maintains the maximum distance from each boundary vertex to any other vertex in the cluster. Additionally binary clusters store the length of its cluster path.

Then the query can be answered by traversing down from the root of the tree. For each cluster formed from the combination of two clusters $C_1$ and $C_2$ through the edge $(c_1,c_2)$, the algorithm traverses down into the cluster where the maximum distance from boundary vertex $c_1$ or $c_2$ to the rest of the tree is higher.
To get the maximum distance from $c_1$ to the rest of the tree, the maximum distance from the other boundary vertex of $C_1$ to the rest of the tree is maintained for each cluster during downward traversal. Then the maximum distance from $c_1$ to the rest of the tree is the maximum of the maximum distance from $c_1$ to vertices in $C_1$ and the sum of the length of the cluster path of $C_1$ and the maximum distance out of the other boundary vertex of $C_1$.

\defn{Lowest common ancestor (LCA) queries} take two query vertices $u$ and $v$ and a root $r$ and return the LCA of $u$ and $v$ with respect to $r$.
Note that $u$, $v$, and $r$ are completely interchangeable (e.g. the LCA of $u$ and $v$ with respect to $r$ is the same as the LCA of $u$ and $r$ with respect to $v$), and the LCA must be a degree 3 vertex.
The query is answered by traversing up from the leaves of $u$, $v$, and $r$. Without loss of generality, assume the paths of $u$ and $v$ intersect first. Let $X$ be the first cluster containing $u$ and $v$.
If $X$ is the combination of a degree 3 and degree 1 cluster, the LCA is the single vertex in the degree 3 cluster.
Else if $X$ is the combination of a degree 2 and a degree 1 cluster we traverse down the path of the degree 2 cluster. The LCA is the first degree 3 cluster that combines with a cluster down this path.
Else $X$ is the combination of two degree 2 clusters. Then we traverse down the path of the degree 2 cluster on the side of $r$ (this can be determined by traversing up until the path from $r$ intersects with the paths from $u$ and $v$). The LCA is the first degree 3 cluster that combines with a cluster down this path.
If $X$ is the combination of two degree 1 clusters then $u$, $v$, and $r$ are not connected.

\subsection{UFO Tree Queries}\label{app:ufo_queries}

\myparagraph{Path Queries}
Here we use the same definitions of path queries, unary clusters, binary clusters, boundary vertices, and cluster path as in Appendix~\ref{app:topology_queries}.
To account for high degree clusters which are neither unary nor binary, we define a \defn{superunary cluster} in a UFO tree as a cluster with one boundary vertex, but strictly greater than one edge in $E$ with exactly one endpoint in the cluster.
Superunary clusters may consist of a single degree $d \geq 2$ vertex, or they are the result of combining a high degree cluster with one or more neighboring degree 1 clusters.
We note that a cluster consisting of a single degree 2 vertex may be considered as a superunary or binary cluster, it does not matter in the analysis.
By definition, the cluster path of any superunary clusters is empty.

Each binary cluster will maintain the value of $f$ applied over all of the edges in its cluster path. The cluster path of unary and superunary clusters is empty so no value is stored.
These values are only recomputed when a combination between two binary clusters occurs during the re-clustering. The process is identical to topology trees (see Figure~\ref{fig:cluster_path}).

\begin{figure*}\centering
\includegraphics[width=0.7\textwidth]{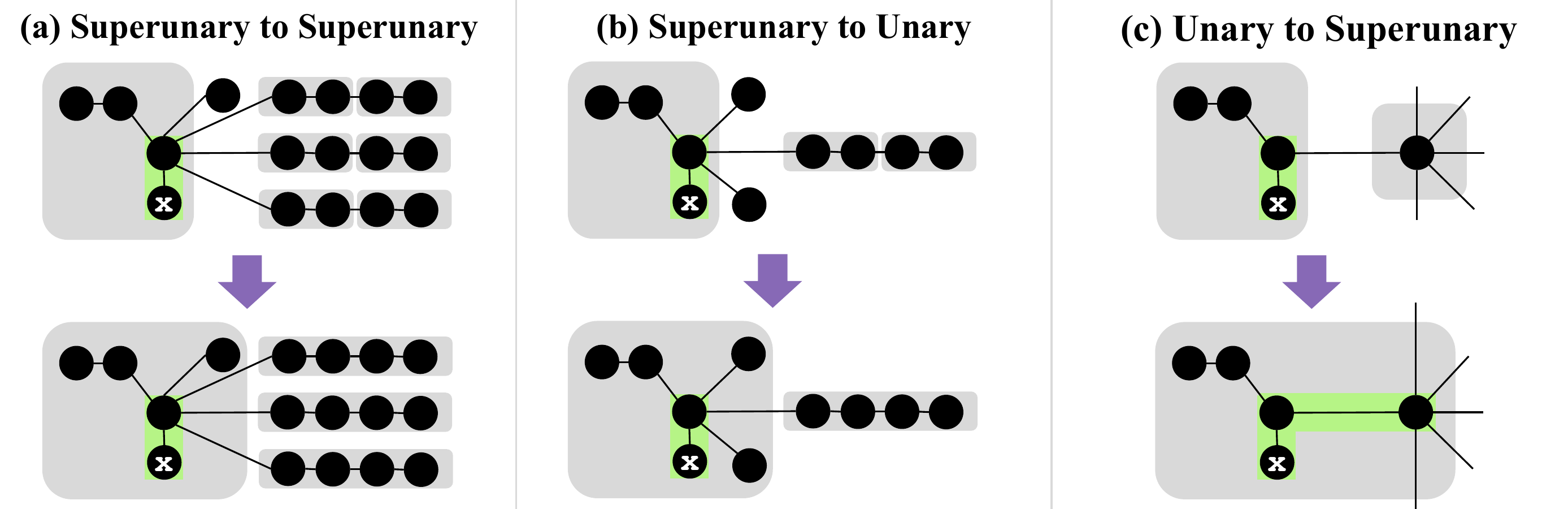}
\caption{\label{fig:ufo_induction}
    An illustration of the three additional inductive cases for constructing the solution to a path query in a UFO tree. The three cases are: (a) a superunary cluster with a superunary parent, (b) a superunary cluster with a unary parent, and (c) a unary cluster with a superunary parent.}
\end{figure*}

The representative path(s) of a cluster $X$ with respect to a vertex $x$ contained in the cluster is again defined as the unique path(s) from $x$ to the boundary vertex/vertices of $X$.
For unary and superunary clusters there is one representative path, and for binary clusters there are two.
To answer a path query we inductively compute the representative path(s) starting from the leaves representing the two endpoints of the path query until their LCA.
The four inductive cases from topology trees for unary and binary clusters are used again here (depicted in Figure~\ref{fig:path_induction}).
We introduce three additional inductive cases for superunary clusters (depicted in Figure~\ref{fig:ufo_induction}).

Assume we have the representative path(s) of a child cluster $C$ and we want to compute the representative path(s) of a parent cluster $P$. If $C$ is the only child of $P$, the representative path(s) are the same.
The three additional inductive cases for UFO trees are:
(a) The first case is if $C$ and $P$ are both superunary clusters. This means $C$ combined with one or more unary clusters. The representative path of $P$ is identical to that of $C$.
(b) The second case is if $C$ is superunary and $P$ is unary. This means that $C$ was degree $d$ and combined with exactly $d-1$ other unary clusters. Still the representative paths of $P$ is identical to that of $C$.
(c) The third case is if $C$ is unary and $P$ is superunary. In this case the cluster $S$ that $C$ combined with through edge $e$ to form $P$ must have been a superunary cluster. The boundary vertex of $P$ is that of $S$. Thus the representative path of $P$ is the union of the representative path of $C$ and the edge $e$.
Note that these three additional cases are sufficient since there is no way to form a superunary cluster from a binary cluster and vice versa via the rules of combination of UFO trees.

Using this inductive method and the stored values of $f$ applied over the cluster path of each binary cluster, we can compute the values of $f$ over the representative path(s) of $U$ with respect to $u$ and of $V$ with respect to $v$, thus allowing us to answer the path query. Since each inductive step can be computed in $O(1)$ time and the height of the tree is $O(\min\{\diam,\log n\})$, we have the following lemma:

\begin{lemma}\label{lem:ufo_path_query}
    Path queries in a UFO tree on an input tree with $n$ vertices and diameter $d$ take $O(\min\{\diam,\log n\})$ time.
\end{lemma}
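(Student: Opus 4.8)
The plan is to establish the lemma in three parts: (i) the per-cluster auxiliary data --- the value of $f$ over each binary cluster's cluster path --- is correctly maintained in $O(1)$ work per created or modified cluster, so a query may read it in constant time; (ii) the inductive construction of representative paths correctly yields, for the two children $U$ and $V$ of the LCA cluster, the values of $f$ along the paths from $u$ and $v$ to their relevant boundary vertices, so the answer is $f$ applied to these two values together with the merging edge $e$; and (iii) the query visits $O(h)$ clusters, each in $O(1)$ time, where $h$ is the height of the UFO tree, and $h = O(\min\{\diam,\log n\})$ by Theorems~\ref{thm:ufo_tree_properties} and~\ref{thm:ufo_tree_diameter}.

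For part (i) I would observe that a binary cluster is always formed by merging two degree-$2$ clusters, whose cluster paths together with the connecting edge form exactly the new cluster path; hence whenever the reclustering step of Algorithm~\ref{alg:ufo_update_2} (or its batch counterpart) creates or modifies a cluster, its stored value is $f$ applied to the at most two children's stored values and the at most one connecting edge, which is $O(1)$ work. Unary, superunary, and cardinality-$1$ clusters have empty cluster path and store the identity of $f$, so they require no recomputation.

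The crux is part (ii). I would argue by induction on the level, walking up from the leaf for $u$ (symmetrically $v$), that after processing a cluster $C$ we hold the value of $f$ along the representative path(s) of $C$ with respect to $u$; the base case is the leaf, whose representative path is empty. For the inductive step, if $C$ is the unique child of its parent $P$ the representative path is unchanged, and otherwise $C$ merged with a sibling $S$ across an edge $e$, and I would check that this situation falls into one of the four cases inherited from topology trees (Figure~\ref{fig:path_induction}) or one of the three new superunary cases (Figure~\ref{fig:ufo_induction}). The structural fact making this enumeration exhaustive is that, under the UFO merge rules, a superunary cluster can only arise from combining a high-degree cluster with degree-$1$ (unary) clusters, and a binary cluster only from combining two degree-$2$ clusters; consequently a superunary parent never has a binary child nor vice versa, so no further cases occur. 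In each case the representative path of $P$ is the claimed union of the representative path of $C$, the edge $e$, and (in the binary-sibling cases) the cluster path of $S$, so its $f$-value is computable in $O(1)$ from data already at hand; I would also dispatch the two small edge cases explicitly, namely starting at a high-degree leaf (whose parent is a superunary cluster with empty representative path) and single-vertex clusters.

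Finally, for part (iii): to answer a query I first locate the LCA cluster $P$ of the leaves $u$ and $v$ by walking upward from both leaves in lockstep (or up to the common root), costing $O(h)$ time, and then run the inductive walk of part (ii) from each leaf up to the corresponding child of $P$ --- again $O(h)$ steps of $O(1)$ work --- and combine with $f$ and the merging edge. Hence the query runs in $O(h)$ time, and since Theorem~\ref{thm:ufo_tree_properties} gives $h = O(\log n)$ while Theorem~\ref{thm:ufo_tree_diameter} gives $h \le \lceil \diam/2 \rceil$, we conclude $O(\min\{\diam,\log n\})$ as claimed. I expect the main obstacle to lie entirely in part (ii): carefully verifying that the seven inductive cases are exhaustive and each produces the correct representative path, especially the superunary cases and their interaction with the degenerate one-boundary-vertex and single-vertex clusters.
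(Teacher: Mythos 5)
Your proposal is correct and follows essentially the same route as the paper's proof: maintain the $f$-value of each binary cluster's cluster path (recomputed in $O(1)$ only when two binary clusters merge), inductively compute representative paths from the leaves up to the LCA using the four topology-tree cases plus the three superunary cases, justify exhaustiveness by the structural fact that superunary and binary clusters cannot turn into one another under the UFO merge rules, and conclude via the height bounds of Theorems~\ref{thm:ufo_tree_properties} and~\ref{thm:ufo_tree_diameter}. The only difference is organizational (your explicit three-part decomposition and the explicit treatment of the LCA-locating walk), not substantive.
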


\myparagraph{Subtree Queries With Invertible Functions}
Here we use the same definitions of subtree queries, boundary vertices, and representative subtrees as Appendix~\ref{app:topology_queries}.
To efficiently support subtree queries, each cluster in the UFO tree will maintain the value of $f$ applied over all of the vertices in the cluster.
Each leaf in the tree will contain the identity value of $f$. 

When re-clustering during updates, the value in any newly formed or modified clusters must be recomputed to be the result of applying $f$ on the values of all the children of that cluster.
In topology trees this can always be computed in constant time since there are at most two children.
For high fan-out clusters in a UFO tree, if the update adds a cluster to the combination, the new value is simply $f$ applied on the old value and the value of the added child cluster.
The challenge arises when a cluster is removed from the combination of a high fan-out cluster. In this case, if $f$ is an invertible function the new value can be computed in constant time as the inverse function of $f$ applied on the old value and the value of the removed child cluster.

To answer a subtree query, we will trace the UFO tree from the bottom up following the paths from $v$ and $p$ to the root. Starting at the child of the LCA of $v$ and $p$ which contains $v$ continue up and inductively compute the value of $f$ applied over the representative subtree of each cluster with respect to $v$ and $p$.

The inductive cases for unary and binary clusters are identical to those in topology trees.
The additional inductive case is if the parent of the current cluster $X$ is a superunary cluster $P$.
If $X$ is the high degree child of $P$, then $b_v$ would remain the same. Otherwise $b_v$ becomes the boundary vertex of the high degree child of $P$.
Whether or not $X$ is the high degree child of $P$ the representative subtree of $P$ is the union of the representative subtree of $X$ and the sets of vertices contained in all other children of $P$. In other words, it is $P \setminus (X \setminus R_X)$, where $R_X$ indicates the representative subtree of $X$.
Accordingly we can compute the value of $f$ applied to this set of vertices using two calls to the inverse function of $f$.

Since each inductive step can be computed in $O(1)$ time and the height of the tree is $O(\min\{\diam, \log n\})$, we have the following lemma:

\begin{lemma}\label{lem:ufo_subtree_query}
    Subtree queries in a UFO tree on an input tree with $n$ vertices and diameter $d$ take $O(\min\{\diam,\log n\})$ time.
\end{lemma}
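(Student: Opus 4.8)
The plan is to augment each UFO-tree cluster with one extra field --- the value of $f$ applied over all input vertices contained in the cluster --- with every leaf cluster storing the identity element of $f$. First I would check that maintaining this field does not inflate the update cost. Whenever an update creates a new cluster or modifies the child set of an existing one, its stored value is refreshed as $f$ over the stored values of its current children; for clusters of fanout at most two this is $O(1)$ applications of $f$, and for a high-fanout cluster, adding a child to the merge gives the new value as $f$ of the old value and the new child's value, while removing a child gives it as $f^{-1}$ of the old value and the removed child's value. Since Theorem~\ref{thm:ufo_tree_updates} already bounds the number of clusters touched per update, and each touched cluster now does only $O(1)$ additional work thanks to invertibility, the update cost stays $O(\min\{\log n,\diam\})$.

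Next I would describe the query. Given the query vertex $v$ and its parent $p$, let $X$ be the lowest common ancestor in the UFO tree of the leaves for $v$ and $p$ and let $V$ be the child of $X$ containing $v$. Since $V$ is a connected subgraph of the input tree that contains $v$ but not $p$, and deleting the edge $(v,p)$ isolates exactly the subtree of $v$ rooted at $p$, every vertex of $V$ lies in that subtree, so the running answer is initialized to the stored value of $V$. Walking from $V$ up toward the root, I would maintain the invariant that the running value equals $f$ over the \emph{representative subtree} of the current cluster $C$ --- the part of the target subtree contained in $C$, i.e.\ the subtree of $v$ rooted at $p$ with the subtree hanging off the current boundary vertex $b_v$ removed --- together with the identity of $b_v$. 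The inductive step when the parent $P$ is unary or binary is exactly the one used for topology-tree subtree queries; the one new case is a superunary parent $P$, where $b_v$ is unchanged if $C$ is the high-degree child of $P$ and otherwise becomes the boundary vertex of that high-degree child, and in either case the representative subtree of $P$ equals $P \setminus (C \setminus R_C)$ with $R_C$ the representative subtree of $C$, whose $f$-value is recovered from the stored values of $P$ and $C$ and the running value for $R_C$ using two calls to $f^{-1}$. When $b_v$ ceases to exist the representative subtree of the current cluster is precisely the subtree of $v$ rooted at $p$, and the running value is the answer.

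Correctness then follows by checking that each inductive case preserves the stated invariant, and the running time is the number of levels --- which is $O(\min\{\log n,\diam\})$ by Theorems~\ref{thm:ufo_tree_properties} and~\ref{thm:ufo_tree_diameter} --- times $O(1)$ arithmetic and a constant number of evaluations of $f$ and $f^{-1}$ per level. I expect the main obstacle to be the superunary case: one must verify the set-difference identity for the representative subtree regardless of whether $C$ is the central high-degree child or one of the degree-$1$ clusters merged into it, and argue that the UFO merge rules never produce a superunary cluster from a binary one (or vice versa), so that the new case together with the inherited unary and binary cases is exhaustive. Invertibility of $f$ is used precisely in this case and in the update step; without it, peeling off a single child of a high-fanout cluster would cost $\Theta(\text{fanout})$, which is exactly why unrestricted subtree queries only attain the $O(\log n)$ bound of Theorem~\ref{thm:ufo_tree_queries} rather than the diameter bound.
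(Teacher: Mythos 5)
Your proposal is correct and follows essentially the same route as the paper's proof: augment each cluster with the $f$-value of its contents (maintained in $O(1)$ per touched cluster via invertibility for high-fanout merges), initialize at the child of the LCA containing $v$, and walk upward maintaining the representative subtree and boundary vertex, with the superunary parent handled by the identity $P \setminus (X \setminus R_X)$ and two calls to $f^{-1}$. The cost bound via the $O(\min\{\diam,\log n\})$ height and your closing remarks on exhaustiveness of the cases and on where invertibility is needed all match the paper's argument.
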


\myparagraph{Lowest Common Ancestor Queries}
LCA queries may be answered similarly to topology trees.
We again use the observation that $u$, $v$, and $r$ are completely interchangeable. In a UFO tree the LCA must be a high degree vertex (not degree 1 or 2).
The query is answered by traversing up from the leaves of $u$, $v$, and $r$. Without loss of generality, assume the paths of $u$ and $v$ intersect first. Let $X$ be the first cluster containing $u$ and $v$.
If $X$ is the combination of high degree cluster and one or more degree 1 clusters, the LCA is the center vertex in the high degree cluster, that is the vertex found by traversing down the tree following the high degree child every time.
Else if $X$ is the combination of a degree 2 and a degree 1 cluster we traverse down the path of the degree 2 cluster. The LCA is the center vertex of the first high degree cluster that combines with a cluster down this path.
Else $X$ is the combination of two degree 2 clusters. Then we traverse down the path of the degree 2 cluster on the side of $r$ (this can be determined by traversing up until the path from $r$ intersects with the paths from $u$ and $v$). The LCA is the center vertex of the first high degree cluster that combines with a cluster down this path.
If $X$ is the combination of two degree 1 clusters then $u$, $v$, and $r$ are not connected.

\myparagraph{Additional Queries}
Some queries require maintaining the value of a non-invertible function over the children of a high fanout cluster. These queries include subtree queries with a non-invertible function, diameter queries, nearest marked vertex queries, center queries, and median queries.
Naively recomputing this \defn{aggregate value} when a single child of a high fanout cluster is deleted takes linear work, since the function is non-invertible.
Here we describe a technique that allows for efficiently maintaining aggregate values with logarithmic update time for deleting or inserting elements.
This technique makes the cost of UFO tree operations $O(\log n)$, losing the $O(\diam)$ bound.

The technique is to represent the set of children for a cluster as a binary search tree data structure. It is well known how to efficiently maintain binary search trees with augmented values.
Specifically, we use a rank tree~\cite{wulff2013faster} to store the set of children of a parent cluster $p$.
Each child cluster $c$'s weight in the rank tree is $n(c)$, the number of vertices in its cluster.
Rank trees guarantee that the depth of a child $c$ is at most $\log(n(p)/n(c))$. Additionally, $c$ can be inserted or deleted from the rank tree in $O(\log(n(p)/n(c))$ time.

Using this nested rank tree structure, the total depth of any leaf in the UFO tree is still $O(\log n)$, because of telescoping sum argument.
Thus the query complexity, which is proportional to the height of the tree, is $O(\log n)$.
Lemma~\ref{lem:rank_tree_ufo_update} proves that updates also cost $O(\log n)$ using this representation.

\begin{lemma} \label{lem:rank_tree_ufo_update}
    UFO trees using rank trees to represent the child sets of clusters can perform updates in $O(\log n)$ time, where $n$ is the number of vertices in the input tree.
\end{lemma}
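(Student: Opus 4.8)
The plan is to compare the update algorithm backed by rank-tree child sets against the plain update algorithm of Theorem~\ref{thm:ufo_tree_updates}, which runs in $O(\min\{\log n,\diam\}) = O(\log n)$ time. Structurally the two algorithms are identical: the only change is that every time we insert a single child $c$ into, or delete a single child $c$ from, the child set of a parent cluster $p$, the cost goes from $O(1)$ to $O(\log(n(p)/n(c)) + 1)$, where $n(\cdot)$ denotes the number of input vertices contained in a cluster. Queries and traversals through the expanded structure are already covered by the $O(\log n)$ height bound (the telescoping argument preceding this lemma). So it suffices to show that the sum of these per-operation costs, over all child-set operations performed during one update, is $O(\log n)$.

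First I would do the bookkeeping. By Lemma~\ref{lem:ufo_root_clusters} and Lemma~\ref{lem:ufo_deleted}, at every level of the UFO tree an update touches only $O(1)$ clusters --- root clusters, deleted clusters, and (by inspection of \fname{DeleteAncestors} in Algorithm~\ref{alg:ufo_update_1} and of Algorithm~\ref{alg:ufo_update_2}) clusters disconnected from their parents and clusters given a new parent. Since the tree has $O(\log n)$ levels, the total number of child-set operations is $O(\log n)$, so the additive $+1$ terms already sum to $O(\log n)$. It remains to bound $\sum \log(n(p)/n(c))$ over all of these operations.

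Second, I would set up the telescoping. Each child-set operation corresponds to an edge $(c,p)$ of either the pre-update UFO tree (for deletions and disconnections, where $p$ is then a surviving high-degree or high-fanout cluster and is therefore not freed) or the post-update UFO tree (for a newly created parent--child edge). In both trees a parent contains all of its children, so $n(c)\le n(p)$ and $\log(n(p)/n(c)) = \log n(p) - \log n(c) \ge 0$; also $n(\text{leaf})=1$ and the root of any component has weight at most $n$. The key structural claim is that these touched edges assemble into $O(1)$ rooted subtrees: each call to \fname{DeleteAncestors} walks strictly upward along an ancestor chain and terminates only when it meets a cluster whose parent pointer was already nulled by an earlier deletion (or the top of its tree); and whenever a root cluster is re-parented onto an existing cluster $q$, the algorithm immediately calls $\fname{DeleteAncestors}(q,\cdot)$, so that re-parenting edge can be prepended to $q$'s ancestor chain. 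Hence every touched edge lies on an upward chain that ends near the root of one of the at most two trees involved in the update, and since only the $O(1)$ clusters counted above are touched at each level, the union of all these chains is a collection of $O(1)$ subtrees, each with $O(1)$ nodes per level and $O(1)$ children per node. Along any one maximal root-to-leaf path of such a subtree, $\sum (\log n(p) - \log n(c))$ telescopes to at most $\log n(\text{top}) - \log 1 \le \log n$; covering these $O(1)$ subtrees by $O(1)$ such paths gives $O(\log n)$ in total.

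Combining the two parts, an update costs $O(\log n) + O(\log n) = O(\log n)$, as claimed. The main obstacle I expect is the second step: a single edge update can trigger $\Theta(\log n)$ separate \fname{DeleteAncestors} invocations, and bounding each independently yields only $O(\log^2 n)$; the entire saving comes from arguing that the clusters these invocations actually modify share ancestor suffixes and form $O(1)$ connected regions, so the telescoping can be carried out globally rather than per invocation. Making this precise --- in particular, fitting deletions, disconnections of surviving high-fanout clusters, and the freshly created re-parenting edges into one coherent chain decomposition --- is the delicate part.
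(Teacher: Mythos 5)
Your proposal is correct and follows essentially the same route as the paper's proof: both reduce the problem to showing that the touched parent--child edges decompose into $O(1)$ upward paths (using the $O(1)$ root/deleted clusters per level bounds) and then bound the total rank-tree cost along each path by the telescoping sum $\sum\bigl(\log n(c_{i+1})-\log n(c_i)\bigr)=O(\log n)$. Your write-up is in fact more careful than the paper's --- which simply asserts the $O(1)$-paths decomposition --- in that you explicitly justify how the \fname{DeleteAncestors} chains and re-parenting edges assemble into those paths and separately account for the additive $O(1)$ per operation.
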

\begin{proof}
    First note that since there are $O(1)$ root clusters at every level, we could map all changes during the update to $O(1)$ paths. We will show that each path can only do $O(\log n)$ total work for rank tree insertions or deletions.

    Consider the worst-case where the path has to do a rank tree insertion or deletion for every node on its path. The cost of all of these operations is at most $O(\log n)$ due to a telescoping sum argument. For example, let the path be $c_0, c_1, c_2,\hdots,c_{h-1},c_h$. The cost of the rank tree operations is:
    \begin{align*}
        &O\left(\log \frac{n(c_h)}{n(c_{h-1})} + \dots + \log \frac{n(c_2)}{n(c_1)} + \log \frac{n(c_1)}{n(c_0)}\right) \\
        &\quad = O\left(\log \left( \frac{n(c_h)}{n(c_{h-1})} \times \dots \times \frac{n(c_2)}{n(c_1)} \times \frac{n(c_1)}{n(c_0)} \right)\right) \\
        &\quad = O\left(\log \frac{n(c_h)}{n(c_0)}\right) = O(\log n).
    \end{align*}
\end{proof}



\subsection{Lower Bound for Subtree Queries with Non-Invertible functions}\label{app:noninvertible_queries_lower_bound}
We now show that supporting subtree minimum or maximum queries requires $\Omega(\log n)$ time per operation even on input trees of constant diameter. 

\begin{lemma}
    Any dynamic tree data structure that supports subtree minimum or maximum queries must take $\Omega(\log n)$ time per operation, amortized or worst case, even when restricted to trees of constant diameter. Otherwise, it would violate the $\Omega(n\log n)$ lower bound for comparison-based sorting.
\end{lemma}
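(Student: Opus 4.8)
The plan is to reduce comparison-based sorting to a short sequence of dynamic tree operations on a constant-diameter tree, so that a data structure with $o(\log n)$ time per operation would yield a comparison sorting algorithm using $o(n\log n)$ comparisons, contradicting the classical $\Omega(n\log n)$ lower bound.

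First I would fix an input $x_1,\dots,x_n$ to be sorted (which may be assumed distinct, since the sorting lower bound already holds for distinct inputs) and build a star: a center vertex $c$ with one auxiliary neighbor $p$ that serves as a root so the subtree rooted at $c$ is well defined, together with $n$ leaves $v_1,\dots,v_n$, each $v_i$ a child of $c$ carrying the weight $(x_i,i)$ compared lexicographically. This tree has diameter at most $3$, which is constant, and lexicographic minimum is an associative and commutative function computable with $O(1)$ comparisons, so this is a legal subtree-minimum instance. Constructing the star costs $O(n)$ link operations (or $O(n)$ weight assignments, depending on the interface).

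Next I would extract the sorted order by repeating $n$ times: query the subtree minimum of $c$ with respect to $p$, obtaining the pair $(x_j,j)$ where $x_j$ is the smallest remaining value; output $x_j$, and then remove $v_j$ from consideration by cutting the edge $(c,v_j)$ (equivalently, resetting its weight to $+\infty$). The step I expect to be the crux is keeping the total number of dynamic tree operations at $O(n)$ rather than $O(n\log n)$: a naive approach would binary-search for the leaf achieving the minimum value, costing $\Theta(\log n)$ queries per extraction, which would \emph{not} yield a contradiction. Tagging each weight with its leaf index is exactly what avoids this — one subtree-minimum query simultaneously reveals the minimum value and the vertex attaining it — so each of the $n$ extractions uses only $O(1)$ operations, for $O(n)$ operations overall.

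Finally I would tally the cost. The reduction performs $\Theta(n)$ dynamic tree operations, and in a comparison-based model every operation touches the weights only through comparisons (when merging subtree aggregates via $f$ and when reporting a query answer), so it makes at most as many weight comparisons as its running time. If every operation ran in $o(\log n)$ time it would make $o(\log n)$ comparisons, and the whole computation would sort $x_1,\dots,x_n$ using $o(n\log n)$ comparisons, a contradiction. Because the bound is on the total time of a length-$\Theta(n)$ operation sequence, it rules out both worst-case and amortized $o(\log n)$ bounds; subtree-maximum queries are handled symmetrically by negating the first coordinate of every weight. This establishes the lemma.
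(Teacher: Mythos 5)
Your proposal is correct and follows essentially the same reduction as the paper: build a star (constant diameter), then repeatedly extract the minimum via a subtree-minimum query followed by an edge deletion, so that $o(\log n)$ per operation would sort in $o(n\log n)$ time. Your additions --- tagging weights with leaf indices so a single query reveals the argmin, and noting that bounding the total cost of the $\Theta(n)$-operation sequence handles the amortized case --- are worthwhile clarifications of details the paper leaves implicit, but they do not change the argument.
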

\begin{proof}
    We reduce sorting of $n$ numbers to a sequence of operations on a dynamic tree representing a star graph. 
    
    Construct a star-shaped tree on $n+1$ vertices, i.e., with one center node and $n$ leaves, via a sequence of $n$ edge insertions where the edge weights correspond to the input numbers. Then, iteratively perform the following: query the subtree minimum (or maximum) at the center node, and delete the corresponding edge. This process essentially identifies and removes the minimum (or maximum) remaining element. After $n$ iterations, the sorted sequence of the numbers can be constructed. Note that, throughout, the tree remains a star with diameter $2$.
    
    The total cost is given by the cost of $n$ edge insertions, $n$ edge deletions, and $n$ subtree minimum (or maximum) queries. Thus, if each of these operations can be performed in $o(\log n)$ time (amortized or worst case), this would yield a sorting algorithm running in $o(n\log n)$ time, violating the comparison-based sorting lower bound.
\end{proof}

This lower bound explains why UFO trees can not achieve the stronger $O(\diam)$ cost bound for general non-invertible subtree queries. We believe that similar lower-bound techniques could be used to prove that it is not possible to achieve $o(\log n)$ time per query for other query types where UFO trees can not achieve an $O(\diam)$ cost bound.

\section{Experiment Setup and Additional Results}\label{app:experiments}

\subsection{Sequential Implementation Details}\label{app:seq_implementations}

\myparagraph{Existing Baselines}
There are many existing implementations of dynamic tree data structures. Here we list a few that we evaluate in our experiments. We use the sequential implementation of Euler tour trees (ETT) from Tseng~\etal~\cite{tseng2019batch} which uses a skip-list as the underlying sequence data structure.
We also use the implementation of splay top trees (STT), which we converted from C to a modern version of C++.
Both the ETT implementation and the STT implementation require the use of a map to find the leaf cluster corresponding to a given edge. We implement these using the flat hash map from Abseil~\cite{abseil}.

\myparagraph{New Topology Tree Implementation}
Each Topology cluster struct has a parent cluster pointer, and an array of 3 pointers to neighboring clusters.
Our implementation does not explicitly store children to avoid extra memory usage. This is fine because in all cases in our algorithms we can find the sibling of a cluster that shares the same parent rather having to directly look at the children of that parent.

\myparagraph{New UFO Tree Implementation}
Each UFO cluster struct has a parent cluster pointer, a fixed constant size array of pointers to neighboring clusters, and a pointer to an Abseil hash set for any additional neighbors.
Similar to topology Tree, our UFO tree implementation does not explicitly store children to avoid extra memory usage. This is fine because in all cases in our algorithms we can consider the siblings of a cluster that share a parent rather having to directly look at the children of that parent.

We keep a small fixed constant size array of neighbors so that when the struct is loaded into memory it includes a small amount of its neighbors. This prevents additional cache misses reading from the set of neighbors if we only need access to a small amount of the neighbors. This is particularly helpful because in any tree at least half of the vertices are degree 1 or degree 2.
We only store a pointer to the Abseil hash set so that clusters that can store all of their neighbors in the array do not have the space overhead of an empty hash set that is not used. Instead there is just an extra 8 byte pointer per cluster. If a cluster gains many neighbors the hash set is allocated in the heap and if it loses many neighbors it is freed. This optimization reduces memory usage significantly and also slightly speeds up updates and queries.

\myparagraph{RC-Tree Implementation}
We developed a sequential implementation of the deterministic and direct version of rake-compress trees which does not use the self-adjusting computation framework~\cite{anderson2024deterministic} and is not batch-dynamic ~\cite{ikram2025parallel}.
Our reasoning for developing our own implementation is that although the self-adjusting computation framework has much utility and can solve many problems, it uses a large amount of memory and tuning an implementation to a single problem is beneficial. 
Similarly, we found that our RC tree implementation is much faster than the batch dynamic version when running with 1 thread and batch size $=1$.
We further optimized our R.C. tree implementation to ensure a fair comparison with our performance-engineered Topology Tree and UFO tree implementations and to confirm that any speed-ups/slow-downs we saw were due to properties inherent to the data structure. 
To that end, we decided to simulate a set data structure for the set of affected vertices using a vector as the size of the set is provably lower than 312 items in the sequential case and even more so in practice. Additionally, we also attempted to minimize the amount of pointer jumps to allow for better locality and engineered our data structure accordingly. 
Some of these modifications include using fixed size arrays instead of vectors whenever helpful and minimizing lookups to the array of RC Tree clusters (preferring to store these references in a way that leads to better locality). 
We believe these changes also optimize the data structure's space usage as well and provide an accurate comparison with other dynamic trees.

\myparagraph{Ternarization Implementation}
To evaluate the impact of ternarization in practice sequentially, we developed our own implementation of an abstract dynamic ternarization interface. 
This maps any input tree to an underlying tree with degree $\le 3$, and maps the link and cut operations to the necessary links and cuts on the underlying tree.

\myparagraph{Link Cut Tree Implementation}
We use the implementation from Tseng~\etal~\cite{tseng2019batch} for Link Cut Tree (LCT). This implementation uses the amortized version of LCT implemented with splay trees.
There is one splay tree node per vertex, each containing one parent pointer, two child pointers, and a flip bit.
We modified this implementation to also support path maximum queries for integer weighted edges.
Note that vertex weighted path queries are trivial to implement, since the splay tree nodes represent a one to one correspondence to the vertices of the input tree, so we can simply augment each splay tree to store the aggregate of all nodes in its subtree. Then a path query just requires exposing the path of interest and reading the root aggregate value.

Sleator and Tarjan described how to support edge weighted path queries in the original paper~\cite{sleator1983data} for the purpose of implementing max flow algorithms. While being theoretically efficient, the implementation is somewhat complex and requires a lot of additional fields per tree node.
Our implementation only uses two extra integer fields per tree node. These store the weight of the "upward" and "downward" preferred edge incident to a vertex.
Each splay tree node then maintains an aggregate of both preferred edge weights in all nodes in its subtree.
The reason we need to store both the upward and downward edge weight (thus each preferred edge weight is stored twice) instead of just the upward weights is because when the root preferred path is reversed all of the upward weights become downward weights and vice versa.
These changes between upward and downward weights are handled with the same technique as used for flipping the left and right children of reversed splay tree nodes (keep a flip bit for each node and lazily push the flip to the children when traversing downward).

The upward weight field is also used to store the weight of the non-preferred edge out of the head of each path. This value is not included in the aggregates in the splay trees.
Fortunately paths with a non-preferred edge out of the head are never reversed (as only the root path is ever reversed), so we never have to worry about determining what the non-preferred edge out of the new head of a path is when it gets reversed (the previous tail).
When two paths are appended, we update the downward preferred edge weight of the tail of the upper path.
Similarly when we split a path we clear the downward weight in the upper node of the split edge.
Whenever a node gains or loses a preferred edge weight it is splayed to recompute the aggregates.

\subsection{Parallel Batch-Dynamic Trees Implementations}\label{app:par_implementations}

Our parallel implementations use ParlayLib, a C++ library for shared memory parallel programming. We use Microsoft's mimalloc in combination with the parallel allocator from ParlayLib for parallel memory allocation.


\myparagraph{Parallel Topology Trees Implementation}
For our parallel topology trees implementation we use a simple method for updating adjacency lists in parallel, which just iterates through the up to 3 neighbors and uses an atomic compare-and-swap operation on each index.

To test the topology tree on high degree inputs, we use the implementation of batch dynamic ternarization from ~\cite{ikram2025parallel}.
 
\myparagraph{Parallel UFO Trees Implementation}
Developing a performant implementation of parallel UFO trees faces many challenges.
The first and perhaps most obvious challenge, is maintaining tree neighbor sets in low depth. We use Parallel Augmented Maps (PAM) to represent the the sets of neighbors for each cluster in the UFO tree.

Since the algorithm receives a batch of updates in no particular order, one thing we must do is transform this batch into groups of updates for each vertex at level 0. Using the ``group\_by\_key'' function in ParlayLib is costly, so we developed our own faster method, ``integer\_group\_by\_key\_inplace''. As the name suggests, this takes advantage of the fact that the endpoints of our edges can be interpreted as integers. Also, instead of allocating new memory for each group, it performs the grouping in place. The method works by first creating a sequence of the edges in both directions, then integer sorting them by the first endpoint, and finally returning pointers to the beginning and end index in the sorted range where a group occurs. We found that using the parallel sample sort performed better than parallel integer sort.

In our theoretical batch-parallel algorithm we conveniently assume that each cluster maintains its set of neighbors, and its set of children. Our implementation does not store sets of children to avoid the extra space and update costs. This led to many challenges in developing a correct implementation, but the reward is much better performance. For example, how can you determine if a cluster is low fan-out (has less than 3 children), and thus should be deleted? Additionally, when you do decide to delete a level $i+1$ cluster, how do you know what its level $i$ children are that become root clusters?

Rather than just using a well-known MIS algorithm for general graphs, we designed and implemented a fast asynchronous maximal independent set algorithm for chains inputs. Our recluster step uses this algorithm to determine the merges of degree 2 clusters and degree 1 clusters neighboring degree 2 clusters. For the remaining clusters, we just have to always merge degree 1 clusters with their neighbor, or merge high degree clusters with all of their degree 1 neighbors.

\subsection{Additional Experimental Results}\label{app:results}

\myparagraph{Parallel Diameter Sweep Results}
Figure~\ref{fig:pardiamsweep} shows the results of our diameter sweep experiment for update speed in the parallel implementations.
Our results show that parallel UFO trees increase in speed with decreasing input diameter, for batch updates.
All other implementations maintain relatively stable running times as the tree diameter decreases.
When the diameter is low enough, parallel UFO trees are consistently faster than parallel ETTs, and are the fastest implementation.

Similar to the sequential setting, we observe that batch updates in the parallel topology tree and parallel RC tree implementations become slower on lower diameter inputs. 
This is due to ternarization overheads that become more costly on inputs with many high degree nodes (which is necessary for low diameter).

\begin{figure}[]
    \centering
    \includegraphics[width=\linewidth]{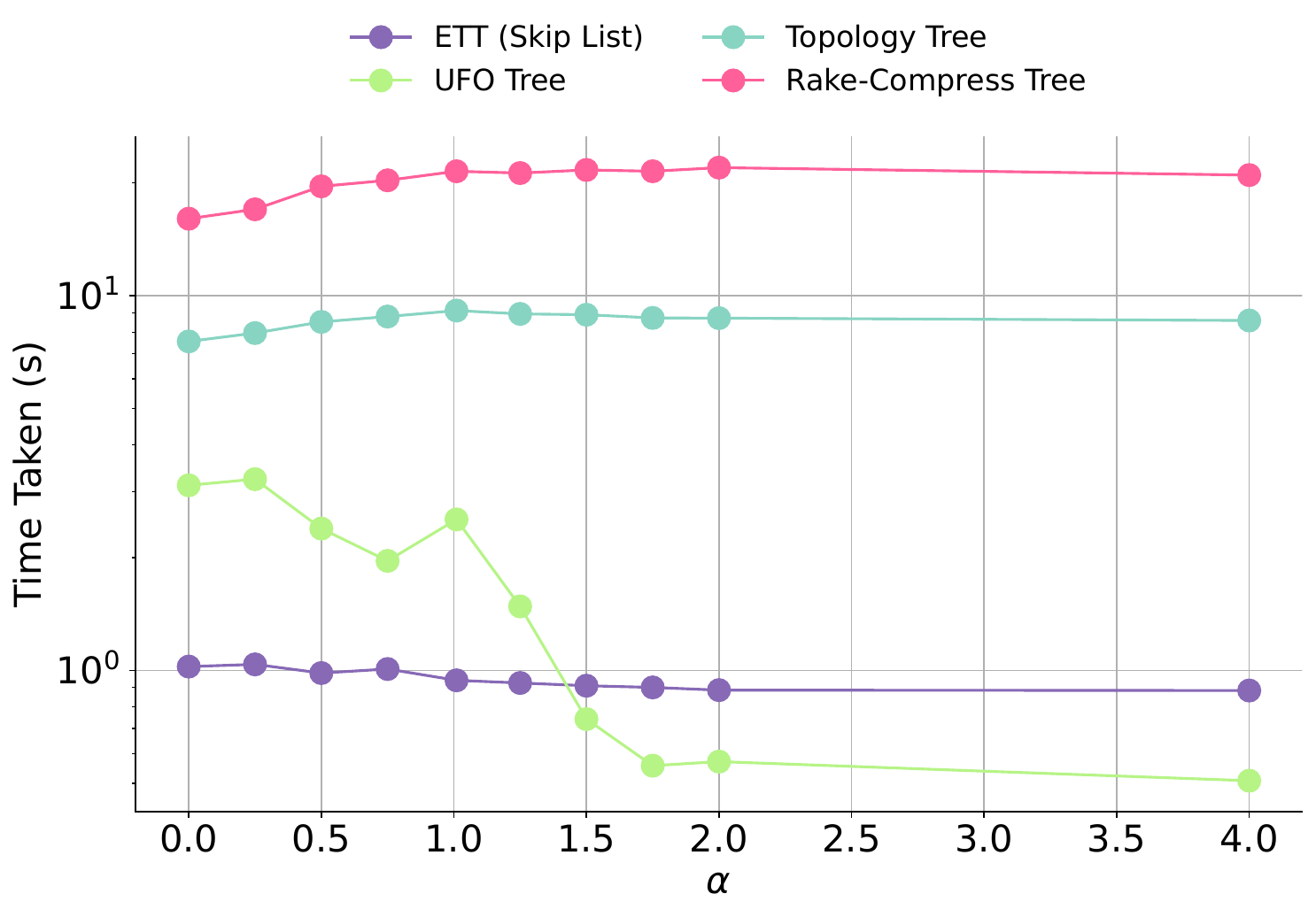}
    \caption{The results of running our diameter sweep experiment on all batch-dynamic trees with $n=10^7$.}
    \label{fig:pardiamsweep}
\end{figure}
\end{appendix}

\end{document}